\newcommand{\R}{\mathbb{R}}
\newcommand{\E}{\mathbb{E}}
\newcommand{\D}{\mathbb{D}}
\newcommand{\Prob}{\mathbb{P}}
\newcommand{\eps}{\epsilon}
\newcommand{\Pb}{\mathbb{P}}
\newtheorem{lemma}{Lemma}[section]
\newtheorem{theorem}{Theorem}[section]
\newtheorem{proposition}{Proposition}[section]
\numberwithin{equation}{section}
\newtheorem{assumption}{Assumption}[section]
\newcommand{\red}[1]{\textcolor{black}{#1}}
\newcommand{\add}[1]{\textcolor{black}{#1}}
\begin{document}

\begin{center}
  \Large Hydrodynamic limit of order book dynamics
\end{center}

\author{}
\begin{center}
{Xuefeng
  Gao}\,\footnote{Department of Systems
    Engineering and Engineering Management, The Chinese University of Hong Kong, Shatin, N.T. Hong Kong;
    xfgao@se.cuhk.edu.hk},
   S. J. Deng\,\footnote{H. Milton Stewart School of
    Industrial and Systems Engineering, Georgia Institute of
    Technology, Atlanta, GA 30332; deng@isye.gatech.edu}
\end{center}

\begin{center}
First Version: November 27, 2014\\
This Version: \today
\end{center}

\begin{abstract}
 \red{In this paper, we establish a fluid limit for a two--sided Markov order book model. 
 Our main result states that in a certain asymptotic regime, a pair of measure-valued processes representing the ``sell-side shape'' and ``buy-side shape'' of an order book converges to a pair of deterministic measure-valued processes in a certain sense. 
We also test our fluid approximation on data. The empirical results suggest that the approximation is reasonably good for liquidly--traded stocks in certain time periods.}
\end{abstract}

\section{Introduction}

%

As a trading mechanism, limit order books have gained growing popularity in equity and
derivative markets in the past two decades. Nowadays, the majority of the world's
financial markets, such as Electronic Communication Networks in the United States, the Hong Kong Stock Exchange and the Toronto Stock Exchange, are organized as electronic limit order books to match buyers and sellers. This has inspired intense research activities on limit order books. See, e.g., \cite{gould2013limit, parlour2008limit} for reviews.



In this paper, our primary goal is to develop approximations for the evolution of the shape of a limit
order book over a time horizon that is large compared with the
length of time between order book events. To this end, we
perform asymptotic analysis of a stochastic limit order book
model and thus connect the dynamics of a limit order book on two different time scales. There has been a growing interest in studying such scaling limits of order book dynamics, mainly motivated by a desire to better understand the price formation process and the relation between the parameters of
the point processes describing order flow at high
frequency and the parameters of models describing price and order--book shape dynamics at a larger time scale. See, e.g., \cite{abergel2013stability, blanchet2013continuous, cont2011order, cont2013price,Horsty13orderbook,kirilenko2013multiscale, Josh14orderbook,lakner2013high}.

{We contribute to this body of literature by establishing a fluid limit for a two--sided Markov order book model in the following asymptotic regime: (a) Tick size goes to zero; (b)
Rates of order arrivals goes to infinity; (c) Relative order size (compared to limit order queue size) goes to zero.
Such a regime is relevant for high--frequency trading of
liquid stocks where the tick size is small (one penny in U.S.), inter--arrival times of orders are very short (on the order of milliseconds),
and there are large volumes of limit orders sitting near the market price and waiting
for transactions. }

Besides establishing a scaling limit, we also test our fluid approximation on historical order book data from the U.S. exchange NYSE Arca.
Our in--sample analysis demonstrates that our theory can be potentially useful to
approximate the evolution of a limit order book for liquid stocks in time periods of low price volatility. 

Our asymptotic analysis is built on the order book model in \cite{Cont10}, where the high frequency dynamics of order book events are described by a continuous time Markov chain (CTMC). 
{In their discrete order book model}, there are a finite number of security price levels and the state of the order book is described by an order quantity at each price.
We develop a fluid approximation for this high--dimensional CTMC by establishing a type of law of large numbers limit theorem.

Our main result (Theorem~\ref{lem:ode}) states that as the tick size $1/n$ approaches zero\footnote{The tick size, the arrival rates of orders, and the relative order sizes are all scaled as a power of a
large parameter $n$. See Section 2 for more details.}, (a) the sequence of best bid and best ask prices converges in probability to a constant; and
  (b) the sequence of pairs of measure-valued processes $\{(\zeta^{n, +},  \zeta^{n, -}): n \ge 1\}$, representing the ``sell-side shape'' and ``buy-side shape'' of the order book, converges to a pair of deterministic measure-valued processes in a certain sense. Moreover, the density profile of the limiting processes
can be described by Ordinary Differential Equations (ODEs) with coefficients determined by order flow intensities.

 Our key innovation in establishing the fluid limit is to represent a {\it{two--sided limit order book}} by {\it{a pair of non-negative measure--valued processes}} and studying the weak convergence of such processes using martingale methods.
Such a measure--valued process representation of an order-book and the related proof techniques could be potentially useful for establishing scaling limits for more general two--sided order book models such as multi--scale models allowing the co-existence of high--frequency and low--frequency traders. See, e.g., \cite{cont2015working}.

We now compare our work with some closely related alternatives which also study the scaling limits
 of the shape of limit order books {in} certain asymptotic regimes. These include \cite{bayer2014functional, horst2015weak, Horsty13orderbook, kruk2003functional, kruk2012limiting, Josh14orderbook, lakner2013high}.
Among them, only the three studies \cite{bayer2014functional, horst2015weak, Horsty13orderbook} incorporate order cancellations\footnote{Modelling order cancellations is important: for example, more than 95\% of limit orders at NASDAQ are cancelled without execution, see \cite{hautsch2011limit}.}.
In \cite{Horsty13orderbook}, the authors prove a law of large numbers result for the limit order book dynamics. In their limit, the best bid and best ask price dynamics can be described by two coupled ODEs and the relative buy and sell shape functions can be described by two linear first-order partial differential equations.
This fluid limit result is later generalized in \cite{horst2015weak} to allow order flows depend on the state of the order book including the prices and volumes, and in
\cite{bayer2014functional} which establishes the functional central limit theorems for order book dynamics. We remark that though we rely on a slightly different discrete stochastic order-book model to conduct asymptotic analysis, our theoretical result is not completely new in view of these three studies. However, we emphasize that our proof technique to establish the scaling limit is novel and it is very different from theirs.
Moreover, unlike the aforementioned studies, we empirically test our fluid approximation on real data and illustrate the potential relevance of the approximation. These are the two main contributions of this paper.

Finally, we remark that we have had to focus on a few features to keep the model and the asymptotic analysis tractable, and that this has necessarily led to some undesirable consequences. We briefly discuss the main limitations. First, the limit is deterministic, and the limit price is a constant. 
This is not realistic. However, such a deterministic model could be potentially useful. For example, a similar fluid order book model, where the best bid and ask prices are also constants, has been used in \cite{maglaras2015optimal} to analyze the problem of limit and market order placement to optimally buy
a block of shares over a fixed time horizon in the order of several minutes.
Second, to keep the mathematical analysis tractable, a number of realistic features such as non--stationarity, time--clustering and mutual dependence of order flows are necessarily left out in our asymptotic analysis. For recent developments along this line, see, e.g., \cite{abergel2013stability}.\\

\textbf{Outline of this paper.} Section~\ref{sec:model}
reviews a variant of the Markov order book model in \cite{Cont10} and states the assumptions on order flow rates and initial conditions. Section~\ref{sec:result} summarizes our main result. Section~\ref{sec:booktest} discusses empirical analysis. Section~\ref{sec:prelim} presents preliminaries for proving the main result. Section~\ref{sec:proofthm1} is devoted to the proof of the main result. Auxiliary results and their proofs are provided in the appendices.\\

\textbf{Notation.}
All random elements are defined on a
common probability space $(\Omega,\mathcal{F},\mathbb{P})$ unless
otherwise specified. Given $x\in\mathbb{R}$, we set $x^{+}=\max\{x,0\}$ and
$x^{-}=\max\{-x,0\}$. We write $\lceil x \rceil$ for the smallest integer not less than $x$, and $\lfloor x \rfloor$ for the largest integer not greater than $x$. 
The set of continuous functions on $[0,1]$ is denoted by $C([0,1])$. Given a Polish space $\mathcal{E}$,
the space of right-continuous functions $f: [0,T]
\rightarrow \mathcal{E}$  with left limits is denoted by $\D([0,T], \mathcal{E})$. The space $\D([0,T], \mathcal{E})$ is assumed to be
endowed with the Skorohod $J_1$-topology.
For a sequence of random elements $\{X_n: n \ge 1\}$
taking values in a metric space, we write $X_n  \Rightarrow X$ to
denote the convergence of $X_n$ to $X$ in distribution. Each
stochastic process with sample paths in $\D([0,T], \mathcal{E})$ is considered to
be a $\D([0,T], \mathcal{E})$-valued random element. For a Borel measure $\nu$ and function $f$,
we set $\langle \nu, f \rangle= \int f(u) \nu(du)$ when the integration exists.
The symbol $\delta_u$ represents the Dirac measure at location $u \in \mathbb{R} ,$ i.e., for a Borel set $U$,
\begin{equation*}
\delta_u (U)=
\begin{cases} 1 & \text{if $u\in U$,}
\\
0 &\text{if $u \notin U$}.
\end{cases}
\end{equation*}
The space of finite {\it{non-negative}} measures on $[0,1]$ is denoted by $\mathcal{M^+}([0,1])$, and the space of finite {\it{signed}} measures on $[0,1]$ is denoted by $\mathcal{M}([0,1])$.

\section{Model and Assumptions} \label{sec:model}
In this section we describe a variant of the order book model introduced in \cite{Cont10} and state the assumptions. Throughout of this paper, we fix the time $T>0$.

Fix $n\ge 1$. Without loss of generality, we suppose that investors can submit their limit orders to $n$ discrete equally-spaced price levels $\{\frac{1}{n}, \frac{2}{n}, \cdots, \frac{n}{n}\}$ within the price range $(0,1]$. Thus the parameter $n$ also represents the inverse of tick size. The state of the limit order book at time $t$ is given by a vector $\mathcal{X}^n(t) \equiv (\mathcal{X}^n_1(t), \ldots, \mathcal{X}^n_n(t)) \in \mathbb{Z}^n$,  where for $i \in \{1, \ldots, n\}$, $|\mathcal{X}^n_i(t)|$ represents the number of outstanding limit orders at price $i/n$ at time $t$. If $\mathcal{X}^n_i(t) > 0$, then there are $\mathcal{X}^n_i(t)$ sell orders at price $i/n$, and if $\mathcal{X}^n_i(t) < 0$, then there are $-\mathcal{X}^n_i(t)$ buy orders at price $i/n$.

To define the best ask price (lowest price among limit sell orders) and best bid price (highest price among limit buy orders), we define two mappings $\mathcal{P}^n_A$ and $\mathcal{P}^n_B$ such that
for a given state $x=(x_1, \ldots, x_n) \in \mathbb{Z}^n$
\begin{eqnarray}
\mathcal{P}^n_A(x)&\equiv& \inf\{i \in \{1, \ldots, n\}: x_i>0\}\wedge \{n+1\}, \label{eq:bestaskmap}\\
\mathcal{P}^n_B(x)& \equiv&\sup\{i \in \{1, \ldots, n\}: x_i<0\} \vee  0, \label{eq:bestbidmap}
\end{eqnarray}
where $\inf \emptyset = \infty$ and $\sup \emptyset = -\infty$ by convention. For $t\ge0$, we write
\begin{eqnarray}
p^n_A(t)&=& \mathcal{P}^n_A(\mathcal{X}^n(t)) , \label{eq:bestask}\\
p^n_B(t)&= & \mathcal{P}^n_A(\mathcal{X}^n(t)) . \label{eq:bestbid}
\end{eqnarray}
Thus for order book $\mathcal{X}^n(\cdot)$ at time $t$, the best ask price is $p_A^n(t)/n$ and best bid price is $p^n_B(t)/n$, where $1/n$ is the tick size. We will be interested in the regime where $n \rightarrow \infty$, i.e., the tick size approaches zero. For each fixed $n$, as in \cite{Cont10}, we assume
\begin{itemize}
 \setlength\itemsep{0.05em}
\item [(a)] Limit buy (respectively sell) orders arrive at a distance
of $i$ ticks from the opposite best quote at independent,
exponentially distributed times with rate $\Lambda_B^n(i)$ (respectively $\Lambda_A^n(i)$),
\item [(b)] Market buy (respectively sell) orders arrive at independent,
exponentially distributed times with rate $\Upsilon_B^n$ (respectively $\Upsilon_A^n$),
\item [(c)] Each limit buy (respectively sell) order at a distance of $i$ ticks
from the opposite best quote is cancelled independently after exponentially distributed times with rate $\Theta_B^n(i)$ (respectively $\Theta_A^n(i)$ ) .
\item [(d)] All the above events are mutually independent.
\item [(e)] All the orders are of unit size, which is independent of $n$.
\end{itemize}

\red{Compared to the model in~\cite{Cont10}, the extra feature is that we allow order flow intensities to be side-dependent (buy orders or sell orders).} This has been observed empirically and such feature will be useful in our empirical analysis in Section~\ref{sec:booktest}. Given these assumptions, the state process $\mathcal{X}^n(\cdot)$ is a $n-$dimensional CTMC with state space $\mathbb{Z}^n$. \red{Define an operator $\mathcal{L}_n$ as follows: for any function $\mathcal{H}:  \mathbb{Z}^n\to \R$ and each $x \in \mathbb{Z}^n $}
\begin{eqnarray}\label{eq:generator}
{\mathcal{L}_n \mathcal{H} (x)}
&=&\sum_{k<\mathcal{P}^n_A(x)} \left[\left(\mathcal{H}(x^{k+})-\mathcal{H}(x)\right)\cdot \Theta_B^n(\mathcal{P}^n_A(x)-k)\cdot |x_k| \right] \nonumber \\
&&+ \sum_{k<\mathcal{P}^n_A(x) } \left[ \left(\mathcal{H}(x^{k-})-\mathcal{H}(x)\right) \cdot \Lambda_B^n(\mathcal{P}^n_A(x)-k)\right] \nonumber\\
&&+\sum_{k>\mathcal{P}^n_B(x)} \left[\left(\mathcal{H}(x^{k+})-\mathcal{H}(x)\right) \cdot\Lambda_A^n(k-\mathcal{P}^n_B(x)) \right] \nonumber \\
&&+\sum_{k>\mathcal{P}^n_B(x)} \left[ \left(\mathcal{H}(x^{k-})-\mathcal{H}(x)\right) \cdot \Theta_A^n(k-\mathcal{P}^n_B(x)) \cdot |x_k|\right] \nonumber\\
&&+ \left(\mathcal{H}(x^{\mathcal{P}^n_B(x)^+})-\mathcal{H}(x) \right) \cdot \Upsilon_A^n
\nonumber \\
&&+ \left(\mathcal{H}(x^{\mathcal{P}^n_A(x)-})-\mathcal{H}(x) \right) \cdot \Upsilon_B^n,
\end{eqnarray}
where $\mathcal{P}^n_A$ and $\mathcal{P}^n_B$ are mappings given in \eqref{eq:bestaskmap} and \eqref{eq:bestbidmap}, and for $k \in \{1, \ldots, n\}$
\begin{equation*}
\begin{cases}
x^{k+}=(x_1, \ldots, x_{k-1}, x_k+1, x_{k+1}, \ldots, x_{n} ),\\
x^{k-}=(x_1, \ldots, x_{k-1}, x_k - 1, x_{k+1}, \ldots, x_{n} ), \\
\end{cases}
\end{equation*}
and
\[x^{0+} = x^{(n+1)-} = x.\]
\red{The infinitesimal generator of $\mathcal{X}^n$ coincides with the operator $\mathcal{L}_n$ on the space of bounded functions from $\mathbb{Z}^n$ to $\R$.}

A few remarks are in order. First, the price range $(0, 1]$ is not essential in our asymptotic analysis. We can also use $(c_1, c_2]$ for some integers $c_1< c_2$ that are independent of $n$, and this would not affect our results. Second, the state space of the above model is the $n-$dimensional integer lattice and it becomes infinitely dimensional as $n \rightarrow \infty$. So standard limit theorems on Markov chains (see, e.g. \cite{kurtz1971limit}) do not apply in our asymptotic analysis.
Third, it is possible to allow random order size with finite second moment in our analysis. To keep the presentation simple, we work with unit order size in this paper.

Throughout the paper, we make the following assumptions.
\begin{assumption}
\label{ass:initialcondition} There exists a function
$\varrho$ from $[0,1]$ to $\R$ such that for fixed $n \ge 1,$ the initial order book $\mathcal{X}^n(0)$ is given by
\begin{eqnarray} \label{eq:X0}
\mathcal{X}^n_i(0) \equiv n \cdot \varrho (\frac{i}{n})  \quad \text{for $i \in \{1, 2, \ldots, n\}$}.
\end{eqnarray}
The initial profile $\varrho$ satisfies
\begin{enumerate}
\item [(a)]  For some $\mathsf{p} \in (0,1)$, $\varrho(\mathsf{p})=0$ and
\begin{eqnarray} \label{eq:gammap}
\sup \{x \in [0,1]: \varrho(x) <0 \} = \inf\{x \in [0,1]: \varrho(x)
>0\} =\mathsf{p};
\end{eqnarray}
 \item [(b)] the function $|\varrho|$ is positive on $(\mathsf{p}-\Delta, \mathsf{p})$ and $(\mathsf{p}, \mathsf{p}+\Delta)$ for some small $\Delta>0$;
\item [(c)] the function $\varrho$ is continuous and bounded on $[0, \mathsf{p})$ and $(\mathsf{p}, 1]$.
 \end{enumerate}
\end{assumption}

\begin{assumption}
\label{ass:rateconverg1} For {each market side} $j \in \{A, B\}$, there exist positive continuous function $\Lambda_j(x)$ and nonnegative continuous function $\Theta_j(x)$ on $[0,1]$, and constants $\kappa<1$ and $\Upsilon \ge 0$
such that for each $n$ and $i \in \{1, 2, \ldots, n\}$,
\begin{eqnarray} \label{eq:rates}
{\Lambda_j^n(i)}= \Lambda_j (\frac{i}{n}), \quad
{\Theta_j^n( i )} = \frac{1}{n} \Theta_j (\frac{i}{n}), \quad \text{and}
\quad {\Upsilon_j^n} \le n^{\kappa} \cdot \Upsilon.
\end{eqnarray}
\end{assumption}

Both assumptions are motivated by our data analysis. First, empirically we find that for certain liquidly traded stocks such as Bank of America, Ford Motor Co. and Dell Inc., there are large volumes of limit orders sitting at or near the best quotes and waiting for executions. This provides empirical support for \eqref{eq:X0}. For technical convenience, we also make three assumptions on the regularity of the initial profile $\varrho$. Here, the number $\mathsf{p}$ can be viewed as the `market price' of a stock at time zero.
Second, from our data we also find that for the aforementioned liquid stocks, typically
the ratio of limit order arrival rate and cancel rate
$\frac{\Lambda_j^n(i)} {\Theta_j^n(i)}$ is of the order of $n=100$ in the
vicinity of the best quotes. In addition, we observe that the magnitude of market order arrival rate (e.g., volume per
second) is not very high compared to the volumes of limit orders in the vicinity of market
price, specifically during the time periods of low price volatility. These two facts
motivate \eqref{eq:rates}.
The assumptions on the continuity of functions $\Lambda_j, \Theta_j$ and positivity of $\Lambda_j$ are technical and they facilitate our mathematical analysis.

\section{The main result}\label{sec:result}
In this section we state our main result of this paper. We first define three sequences of measure-valued process (see, e.g., \cite{dawson1993measure} for background on measure-valued processes). For fixed $n \ge 1$ and  $t \in [0, T]$, we set
\begin{eqnarray}
\zeta_t^{n,+} &=& \frac{1}{n^{2}}  \sum_{i=1}^{n} (\mathcal{X}^n_{i} (n t))^+ \cdot
\delta_{\frac{i}{n}} = \frac{1}{n^{2}}  \sum_{i> p_B^n(nt)} {\mathcal{X}^n_{i}} ( nt) \cdot
\delta_{\frac{i}{n}}, \label{eq:empirialshapeplus} \\
\zeta_t^{n,-} &=&  \frac{1}{n^{2}}  \sum_{i=1}^{n} (\mathcal{X}^n_{i} (n t))^- \cdot
\delta_{\frac{i}{n}} = \frac{1}{n^{2}}  \sum_{i< p_A^n(nt)} {-\mathcal{X}^n_{i}} (n t) \cdot
\delta_{\frac{i}{n}},  \label{eq:empirialshapeminus}\\
\zeta_t^n &=& \zeta_t^{n,+} - \zeta_t^{n,-} = \frac{1}{n^{2}}  \sum_{i=1}^{n} \mathcal{X}^n_{i} (n t) \cdot
\delta_{\frac{i}{n}},\label{eq:empirialshape}
\end{eqnarray}
\red{where $\mathcal{X}^n$ is the $n-$dimensional CTMC with state space $\mathbb{Z}^n$ and it describes the evolution of the limit order book}, $p_A^n, p_B^n$ are given in \eqref{eq:bestask} and \eqref{eq:bestbid}, and $\delta_u$ is the Dirac measure centered at $u$. The choice of space scaling $n^{2}$ follows from the facts that there are $n$ price levels in total and the number of limit orders on each price level is of the order $n$ (Assumptions~\ref{ass:initialcondition} and \ref{ass:rateconverg1}).

%
%

For fixed $t$, the pair $(\zeta^{n,+}_{t}, \zeta^{n,-}_{t})$ represents the ``sell-side shape" and ``buy-side shape" of the order book at some large time $nt$, and the measure $\zeta^n_{t}$ represents the whole shape of the order book at large time $nt$.
Note that the pair $(\zeta^{n,+}_{t}, \zeta^{n,-}_{t})$ lives in the product space $\mathcal{M^+}([0,1]) \times \mathcal{M^+}([0,1])$. For the moment, we use $\overline{\mathcal{M}}([0,1])$ to denote this product space equipped with an appropriate metric such that $\overline{\mathcal{M}}([0,1])$ is complete and separable. A precise definition of the space $\overline{\mathcal{M}}([0,1])$ is given in Section~\ref{sec:tight}.

We are interested in the limiting behavior of the Markov process $(\zeta^{n,+}, \zeta^{n,-})$
as $n \rightarrow \infty$. \add{This corresponds to the limiting regime where price tick size $1/n$ approaches zero,
rates of order arrivals goes to infinity, and relative order size (compared to limit order queue size) goes to zero. This can be readily seen from Assumptions~\ref{ass:initialcondition}--\ref{ass:rateconverg1} and the fact that the time is speeded up by $n$ in \eqref{eq:empirialshapeplus}--\eqref{eq:empirialshapeminus}.}
It is worthwhile to note that
$\zeta^n$ is also a Markov process which contains the same information as $(\zeta^{n,+}, \zeta^{n,-})$. However, for fixed $t$, $\zeta^n_{t}$ is a signed measure living in the space $\mathcal{M}([0,1])$ which is ill-suited for studying weak convergence since the weak topology is in general not metrizable (see, e.g., \cite{spiliopoulos2014fluctuation}). Thus we instead work with the sequence of pairs of non-negative measure-valued processes $\{(\zeta^{n,+}, \zeta^{n,-}): n \ge 1 \}$.


\add{Now we state the main theorem of this paper. The proof is given in~Section~\ref{sec:proofthm1}.  Recall that $\varrho, \mathsf{p}, \Lambda_A, \Lambda_B, \Theta_A$ and $\Theta_B$ are given in Assumptions~\ref{ass:initialcondition} and~\ref{ass:rateconverg1}.
\begin{theorem} \label{lem:ode}
Suppose that Assumptions~\ref{ass:initialcondition} and \ref{ass:rateconverg1}
hold. Then {as $n \rightarrow
\infty$,}
\begin{itemize}
\item [(a)] we have
\begin{eqnarray}
 \sup_{0 \le t \le T} \left| \frac{p_A^n(nt)}{n}- \mathsf{p} \right| \Rightarrow 0, \label{eq:askpriceconvg}\\
 \sup_{0 \le t \le T} \left| \frac{p_B^n(nt)}{n}-  \mathsf{p} \right| \Rightarrow 0. \label{eq:bidpriceconvg}
\end{eqnarray}
\item [(b)] we have
 \begin{eqnarray*}\label{eq:resultconvg}
 (\zeta^{n,+}, \zeta^{n,-}) \Rightarrow (\zeta^+, \zeta^-) \quad \text{in $\mathbb{D}([0, T], \overline{\mathcal{M}}([0,1]) )$},
 \end{eqnarray*}
 where $(\zeta^+, \zeta^-)$ is a pair of deterministic measure-valued process.
 In addition, for any $t \in [0, T]$, the nonnegative measures $\zeta^+_t$ and $\zeta^-_t$ are absolutely continuous with respect to Lebesgue measure and have density functions $\varphi^+ = \max\{\varphi, 0\}$ and $\varphi^- = \max\{-\varphi, 0\}$ such that
 $\zeta^{+}_t(dx) = \varphi^{+}(x,t) dx$ and $\zeta^{-}_t(dx) = \varphi^{-}(x,t) dx$ for $x \in [0,1]$. The function $\varphi$ is uniquely determined by the following set of equations: for $x \in [0,1]$
 \begin{eqnarray}
\varphi(x,0) &=& \varrho (x), \label{eq:varphiinitial}\\
\partial_t \varphi(x,t) &=& \Lambda_A ( x - \mathsf{p})  - \Theta_A(x - \mathsf{p}) \cdot \varphi (x,t), \quad \text{$x>\mathsf{p}$, } \label{eq:varphiODE}\\
\partial_t \varphi(x,t) &=& -\Lambda_B ( \mathsf{p}-x) - \Theta_B( \mathsf{p}-x) \cdot \varphi (x,t), \quad \text{$x<\mathsf{p}$ }. \label{eq:varphiODE2}\\
\varphi(\mathsf{p}, t)&=&0, \quad \text{$t \in [0,T].$} \label{eq:varphip}
\end{eqnarray}
\end{itemize}
\end{theorem}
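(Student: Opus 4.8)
The plan is to prove Theorem~\ref{lem:ode} along the classical route for fluid limits of measure-valued Markov processes: write a Dynkin (semimartingale) decomposition for the real-valued processes $t\mapsto\langle\zeta^{n,\pm}_t,f\rangle$, show the martingale part is asymptotically negligible, prove tightness of $\{(\zeta^{n,+},\zeta^{n,-})\}$, identify every subsequential limit with the deterministic solution of the limiting equations, and conclude by uniqueness. The generator \eqref{eq:generator} couples the buy and sell sides of the book only through the best-quote maps $\mathcal{P}^n_A,\mathcal{P}^n_B$, so the crux is to first pin the prices near $\mathsf{p}$: this localization is exactly what turns the state-dependent rate structure into the decoupled linear ODEs \eqref{eq:varphiODE}--\eqref{eq:varphiODE2}, and it is the step I expect to be the main obstacle.

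\textbf{Step 1 (price localization, part (a)).} Fix $\epsilon\in(0,\Delta)$ and let $\tau^n_\epsilon=\inf\{t\le T:\ p^n_A(nt)/n\notin[\mathsf{p}-\epsilon,\mathsf{p}+\epsilon]\ \text{or}\ p^n_B(nt)/n\notin[\mathsf{p}-\epsilon,\mathsf{p}+\epsilon]\}$; the goal is $\Prob(\tau^n_\epsilon\le T)\to0$. A useful preliminary observation is that the \emph{first} exit can only be ``outward'': limit sell orders arrive only at levels strictly above $\mathcal{P}^n_B$ and limit buy orders only at levels strictly below $\mathcal{P}^n_A$, and by \eqref{eq:X0}--\eqref{eq:gammap} there are initially no sell orders at or below $\mathsf{p}$ and no buy orders at or above $\mathsf{p}$; hence the best ask can drop below $\mathsf{p}-\epsilon$ only strictly after the best bid has, and the best bid can rise above $\mathsf{p}+\epsilon$ only strictly after the best ask has, so at $\tau^n_\epsilon$ necessarily $p^n_A(n\tau^n_\epsilon)/n>\mathsf{p}+\epsilon$ or $p^n_B(n\tau^n_\epsilon)/n<\mathsf{p}-\epsilon$. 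The first event forces every level in $(\mathsf{p},\mathsf{p}+\epsilon]$ to be sell-empty, and in particular a \emph{single} fixed level $\ell^\ast$ with $\ell^\ast/n$ near $\mathsf{p}+\epsilon/2$; since $\ell^\ast/n\in(\mathsf{p},\mathsf{p}+\Delta)$ this queue starts with $\approx n\varrho(\ell^\ast/n)=\Theta(n)$ sell orders (Assumption~\ref{ass:initialcondition}), each cancelled at per-order rate $\le\|\Theta_A\|_\infty/n$ independently, while the total number of market orders over $[0,nT]$ is stochastically dominated by a Poisson variable with mean $\le n^{1+\kappa}\Upsilon T=o(n)$. So the probability that all of the $\Theta(n)$ initial sell orders at $\ell^\ast$ are removed by time $nT$ (the only way $\ell^\ast$ can ever be sell-empty on $[0,nT]$) is at most $o(1)$: market removals account for only $o(n)$ of them with probability $\to1$, and the remaining $\Theta(n)$, being cancellations, are all removed with probability $\le(1-e^{-\|\Theta_A\|_\infty T})^{\Theta(n)}=o(1)$. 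The symmetric argument on the buy side handles $p^n_B$, giving $\Prob(\tau^n_\epsilon\le T)\to0$; together with $p^n_A(0)/n,p^n_B(0)/n\to\mathsf{p}$ and letting $\epsilon\downarrow0$ this yields \eqref{eq:askpriceconvg}--\eqref{eq:bidpriceconvg}.

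\textbf{Step 2 (decomposition and vanishing martingale).} For $f\in C([0,1])$ set $H_f(x)=\frac1{n^2}\sum_i(x_i)^+f(i/n)$, so that $\langle\zeta^{n,+}_t,f\rangle=H_f(\mathcal{X}^n(nt))$; since the generator of $t\mapsto\mathcal{X}^n(nt)$ is $n\mathcal{L}_n$, Dynkin's formula gives
\[
\langle\zeta^{n,+}_t,f\rangle=\langle\zeta^{n,+}_0,f\rangle+\int_0^t n\mathcal{L}_nH_f\bigl(\mathcal{X}^n(ns)\bigr)\,ds+M^{n,f}_t
\]
with $M^{n,f}$ a martingale. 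Because $(x_k)^+$ changes only across the $0\leftrightarrow1$ transition at level $k$, only sell arrivals (at levels $k>\mathcal{P}^n_B$), sell cancellations, and market buys contribute to $n\mathcal{L}_nH_f$; invoking Step~1, so that $\mathcal{P}^n_B(\mathcal{X}^n(ns))/n\approx\mathsf{p}$ and every level carrying sell orders has $i/n\gtrsim\mathsf{p}$, the arrival terms form a Riemann sum converging to $\int_{(\mathsf{p},1]}\Lambda_A(x-\mathsf{p})f(x)\,dx$, the cancellation terms equal $-\langle\zeta^{n,+}_s,\Theta_A(\cdot-\mathsf{p})f\rangle+o(1)$, and the market-buy term is $O(n^{\kappa-1})$, all uniformly on $[0,T]$. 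A single jump moves $\langle\zeta^{n,+}_t,f\rangle$ by at most $\|f\|_\infty/n^2$; on the event (of probability $\to1$) that the total mass $\frac1{n^2}\sum_i|\mathcal{X}^n_i(nt)|$ stays bounded — which it does, since mass grows only through limit arrivals, at rate $O(n)$ in the original clock — the event rate of $\mathcal{X}^n(n\cdot)$ is $O(n^2)$, so there $[\langle\zeta^{n,+},f\rangle]_T=O(\|f\|_\infty^2/n^2)$ and $\sup_{t\le T}|M^{n,f}_t|\to0$ in probability by Doob's inequality. The analogous statements hold for $\zeta^{n,-}$, with $\Lambda_A,\Theta_A$ and $(x-\mathsf{p})$ replaced by $\Lambda_B,\Theta_B$ and $(\mathsf{p}-x)$ and integration over $[0,\mathsf{p})$.

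\textbf{Step 3 (tightness, identification, uniqueness; part (b)).} The uniform mass bound confines $(\zeta^{n,+}_t,\zeta^{n,-}_t)$ to a fixed weakly compact subset of $\mathcal{M}^+([0,1])^2$, so by a standard tightness criterion for processes in $\mathbb{D}([0,T],\mathcal{M}^+([0,1]))$ — reduced, via Aldous's criterion, to tightness of the real processes $\langle\zeta^{n,\pm}_\cdot,f\rangle$, whose drift has uniformly bounded variation and whose martingale part vanishes by Step~2 — the sequence $\{(\zeta^{n,+},\zeta^{n,-})\}$ is tight in $\mathbb{D}([0,T],\overline{\mathcal{M}}([0,1]))$. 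Passing to the limit along a convergent subsequence (using Step~1, the continuity of $\Lambda_A,\Theta_A$, and the martingale estimate) shows any limit $(\zeta^+,\zeta^-)$ satisfies, for every $f\in C([0,1])$ and $t\in[0,T]$,
\[
\langle\zeta^+_t,f\rangle=\langle\zeta^+_0,f\rangle+\int_0^t\Bigl(\textstyle\int_{(\mathsf{p},1]}\Lambda_A(x-\mathsf{p})f(x)\,dx-\langle\zeta^+_s,\Theta_A(\cdot-\mathsf{p})f\rangle\Bigr)\,ds,
\]
and the symmetric equation for $\zeta^-$ on $[0,\mathsf{p})$; here $\zeta^+_0(dx)=\varrho(x)\mathbf{1}_{[\mathsf{p},1]}\,dx$ since $\varrho\ge0$ on $[\mathsf{p},1]$ and $\varrho\le0$ on $[0,\mathsf{p}]$ by \eqref{eq:gammap}. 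Testing against approximations of indicators of subintervals of $(\mathsf{p},1)$ forces $\zeta^+_t\ll dx$ with a bounded density $\varphi^+(\cdot,t)$ (no atom can form at $\mathsf{p}$ because $\zeta^+_0$ has none) which, for a.e.\ $x\in(\mathsf{p},1]$, solves the scalar linear ODE $\partial_t\varphi^+(x,t)=\Lambda_A(x-\mathsf{p})-\Theta_A(x-\mathsf{p})\varphi^+(x,t)$, $\varphi^+(x,0)=\varrho(x)$, whose unique solution is $\varphi^+(x,t)=e^{-\Theta_A(x-\mathsf{p})t}\varrho(x)+\frac{\Lambda_A(x-\mathsf{p})}{\Theta_A(x-\mathsf{p})}\bigl(1-e^{-\Theta_A(x-\mathsf{p})t}\bigr)$ (read as $\varrho(x)+\Lambda_A(x-\mathsf{p})t$ when $\Theta_A(x-\mathsf{p})=0$); likewise $\varphi^-$ solves \eqref{eq:varphiODE2}. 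Setting $\varphi=\varphi^+-\varphi^-$ away from $\mathsf{p}$ and $\varphi(\mathsf{p},\cdot)=0$ (the consistent interface value, since $\varrho(\mathsf{p})=0$) recovers \eqref{eq:varphiinitial}--\eqref{eq:varphip} and uniquely determines $(\zeta^+,\zeta^-)$, so uniqueness of the limit upgrades subsequential convergence to $(\zeta^{n,+},\zeta^{n,-})\Rightarrow(\zeta^+,\zeta^-)$. As anticipated, the substantive difficulty is Step~1 — controlling the interplay of the two quotes and the behaviour of the possibly thin queues straddling $\mathsf{p}$, uniformly on $[0,T]$ and with high probability; Steps~2--3 are then routine given the $O(1/n)$ per-order cancellation rate and the sub-linear market-order rate.
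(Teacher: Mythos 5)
Your Step~1 has a genuine gap. You pin a single level $\ell^\ast$ near $\mathsf{p}+\epsilon/2$, observe that it holds $\Theta(n)$ sell orders initially, and then claim that the total number of market orders over $[0,nT]$ has mean at most $n^{1+\kappa}\Upsilon T = o(n)$. But Assumption~\ref{ass:rateconverg1} only requires $\kappa<1$; for any $\kappa\in[0,1)$ the bound $n^{1+\kappa}\Upsilon T$ is \emph{not} $o(n)$ (it is $\Omega(n)$ for $\kappa=0$ and $\omega(n)$ for $\kappa\in(0,1)$), so your claim that market removals affect only $o(n)$ of the $\Theta(n)$ orders at $\ell^\ast$ fails. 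One cannot fix this by appealing to the fact that market buys only hit the best ask: arguing that few market buys reach $\ell^\ast$ already requires showing the lower queues survive, which is exactly what you set out to prove. The paper sidesteps this by aggregating over the $\lfloor n\epsilon\rfloor+1$ levels in $[p^n_A(0),p^n_A(0)+\lfloor n\epsilon\rfloor]$, yielding an initial population of order $n^2$, which \emph{does} dominate $n^{1+\kappa}$ whenever $\kappa<1$; it then constructs a stochastically dominated chain $\mathcal{W}^n$ (Massey's strong stochastic order) under which this aggregate is a pure-death process with death rate $k\bar\Theta+n^{1+\kappa}\Upsilon$ in state $k$ (in the sped-up clock), and shows $\E[\sigma^n]\sim\frac{1-\kappa}{\bar\Theta}\ln n\to\infty$ with bounded variance, so Chebyshev kills the exit probability. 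Any repair of your argument in the regime $\kappa\ge0$ will essentially have to reproduce this window-aggregation and domination step.

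Steps~2 and~3 are essentially sound and take a genuinely different, somewhat cleaner route than the paper. You apply Dynkin's formula directly to $H_f(x)=\tfrac1{n^2}\sum_i(x_i)^+f(i/n)$, using the non-crossing invariant of the book ($x_i\le0$ for $i\le\mathcal{P}^n_B$, $x_i\ge0$ for $i\ge\mathcal{P}^n_A$) so that only sell arrivals, sell cancellations and market buys contribute to $n\mathcal{L}_nH_f$; this yields the drift for $\langle\zeta^{n,+},f\rangle$ directly. The paper instead works with the \emph{linear} functional $F(x)=\tfrac1{n^2}\sum_kf(k/n)x_k$ (hence the signed measure $\zeta^n=\zeta^{n,+}-\zeta^{n,-}$), which makes the generator computation trivial but requires the auxiliary truncation family $f_\gamma$ and Lemma~\ref{lem:boundEdiff} to transfer tightness and identification from $\zeta^n$ back to $(\zeta^{n,+},\zeta^{n,-})$. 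Your route avoids that detour at the cost of a slightly more delicate generator bookkeeping near the interface, and the quadratic-variation estimate and Aldous-type tightness criterion you invoke in Step~3 are the standard substitutes for the paper's Lemmas~\ref{lem:relativecompact0}--\ref{lem:relativecompact} and Proposition~\ref{lem:tightposneg0}; the uniqueness-of-limit argument via the scalar linear ODEs matches the paper's Proposition~\ref{lem:limitpointpi}. So the overall structure is fine, but the proof as written is incomplete until the price-localization in Step~1 is redone at the aggregate ($\Theta(n^2)$-order) scale.
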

}

Theorem~\ref{lem:ode} provides a fluid approximation for the order book dynamics. Note that in high--frequency trading, the inter--arrival times of order book events are typically very short (in the order of milliseconds). So the result suggests that on a `low-frequency' time scale (in the order of a few minutes),
the scaled best bid and best ask prices are a constant $\mathsf{p}$, and the density profile of the sell-side and buy-side shape of the order book, given by $\varphi^+(x,t)$ and $\varphi^-(x,t)$, are characterized by a linear ODE with coefficients determined by order arrival rates and cancellation rates.
As discussed in the introduction, this deterministic limit may not be realistic, but it could be potentially useful for studying other problems.

 One readily verifies from Theorem~\ref{lem:ode} and \eqref{eq:gammap} that for each $t\in[0,T]$,
\begin{eqnarray}
\varphi^{+}(x,t) = e^{-\Theta_A(x-\mathsf{p}) t}\cdot {\varrho(x)}^{+} +
\frac{\Lambda_A(x-\mathsf{p})}{\Theta_A(x-\mathsf{p})} \cdot \Big(1- e^{-\Theta_A(x-\mathsf{p})t} \Big) \quad \text{for $\mathsf{p}< x \le 1$},\label{eq:phiplus}\\
\varphi^{-}(x,t) = e^{-\Theta_B(\mathsf{p}-x) t}\cdot {\varrho(x)}^{-} +
\frac{\Lambda_B(\mathsf{p}-x)}{\Theta_B(\mathsf{p}-x)} \cdot \Big(1- e^{-\Theta_B(\mathsf{p}-x)t}\Big) \quad \text{for $0 \le x < \mathsf{p}$}. \label{eq:phiminus}
\end{eqnarray}
When $n$ is large, we obtain from Theorem~\ref{lem:ode} the following approximation for the transient behavior of the shape of an order book: for interval $I \subset [0,1]$ and $t \in [0,T]$
\begin{eqnarray}
\zeta_t^{n,+}(I) \buildrel \Delta \over = \frac{1}{n^{2}}  \sum_{i> p_B^n(nt), \frac{i}{n}\in I } {\mathcal{X}^n_{i}} ( nt)
\approx  \zeta_t^{+}(I) = \int_{I} \varphi^{+}( x,t)dx  \approx  \frac{1}{n}  \sum_{\frac{i}{n}\in I } \varphi^{+}( \frac{i}{n},t),  \label{eq:selllimit}\\
\zeta_t^{n,-}(I) \buildrel \Delta \over = \frac{1}{n^{2}}  \sum_{i < p_A^n(nt), \frac{i}{n}\in I } {-\mathcal{X}^n_{i}} ( nt)
\approx  \zeta_t^{-}(I) = \int_{I} \varphi^{-}( x,t)dx  \approx  \frac{1}{n}  \sum_{\frac{i}{n}\in I } \varphi^{-}( \frac{i}{n},t),
 \label{eq:buylimit}
\end{eqnarray}
where $\varphi^{+}$ and $\varphi^{+}$ are given in \eqref{eq:phiplus} and \eqref{eq:phiminus}. In the next section, we will test the approximations \eqref{eq:selllimit} and \eqref{eq:buylimit} on data.

\section{Empirical Analysis}\label{sec:booktest}

In this section we perform in--sample analysis to
illustrate the potential relevance of our fluid approximation in Theorem~\ref{lem:ode}.



Our empirical analysis is based on one-month
message-level order book data from NYSE Arca (in short, Arca) in August 2010. As of 2009, around 20\% of market share for
NASDAQ-listed securities and 10\% of NYSE-listed securities are traded on Arca.

Two data sets are used. The first data set consists of all limit order activities on Arca, including limit order submission, modification and deletion. For each message in the data, it contains a time stamp down to
millisecond, the price and order size, the buy or sell indicator,
stock symbol, exchange, and an ID (identifier).  This data set enables us to recreate the limit order book at any give time for stocks traded on Arca. Our second data set records all the trades. Each message contains a time stamp down to
second, the traded price and order size, the buy or sell indicator, the best bid/ask prices and standing limit order quantities on these two prices when trades occur, stock symbol, and an ID (identifier). In conjunction with the first data set, we can
 analyze the detailed order flow properties such as limit order arrival rate, market order arrival rate and limit order cancellation rate.

%

\subsection{{Empirical analysis}} \label{subsec:bookestimate}
The goal of this section is to illustrate that our theoretical model can be potentially used to approximate the evolution of the shape of an order book on time scale of minutes for stocks and time periods which satisfy our model assumptions.

Our approach is to empirically test the approximations in \eqref{eq:selllimit} and \eqref{eq:buylimit}. For convenience we multiply both sides of these approximate equations by $n^2$, i.e., we test the following approximations and show that they are reasonably good:
\begin{eqnarray}
 \sum_{i> p_B^n(nt), \frac{i}{n}\in I } {\mathcal{X}^n_{i}} ( nt)
 \approx n  \sum_{\frac{i}{n}\in I } \varphi^{+}( \frac{i}{n},t),  \label{eq:selllimitn2}\\
  \sum_{i < p_A^n(nt), \frac{i}{n}\in I } {-\mathcal{X}^n_{i}} ( nt)
  \approx n  \sum_{\frac{i}{n}\in I } \varphi^{-}( \frac{i}{n},t).
 \label{eq:buylimitn2}
\end{eqnarray}
The left-hand side of \eqref{eq:selllimitn2} (respectively \eqref{eq:buylimitn2}) represents the total number of limit sell (respectively buy) orders in price interval $I \subset [0,1]$. With an appropriate shift of the price levels, such quantities can be directly obtained from data since we have the full empirical order book information at each time instant. On the other hand, the quantities at the right-hand side involve the values of functions $\varphi^+$ and $\varphi^-$ at discrete price levels $i/n$. We next discuss how to compute such values.

First, we screen the data to identify stocks and time periods which satisfy Assumptions \ref{ass:initialcondition} and \ref{ass:rateconverg1}. We find that liquidly-traded stocks, such as Bank of America Corp. (BAC), General Electric Co. (GE), Ford Motor Co. (F) and Dell Inc. (DELL), typically have large volumes of limit orders sitting at or near the best quotes and waiting
for market transactions. This is consistent with Assumption~\ref{ass:initialcondition}. Once a stock is chosen, we scan the intra-day order book data and look for time horizon (e.g., a few minutes) with low price volatility such that Assumption~\ref{ass:rateconverg1} is also satisfied.
For illustration purposes, we focus on the stock BAC during 12:45pm-13:05pm on August 5, 2010 as a representative example. The best bid and ask prices of BAC barely change during this 20-minute time period. 

Second, we estimate and input the needed parameters to compute $\varphi^+$ and $\varphi^-$ given in \eqref{eq:phiplus} and \eqref{eq:phiminus}. Such parameters include the initial order book profile $\varrho$, the number $\mathsf{p}$, the time length $nt$, the limit order arrival rates and the limit order {cancelation} rates. We {follow the four-step procedure described below to choose and compute these} parameters.
\begin{enumerate}
\item Normalize the calender time 12:45pm to be time zero. The initial order book is given by the first snapshot of the order book after 12:45pm. Hence we obtain $\varrho$ at each discrete price level.
\item Set $\mathsf{p}$ to be the best bid price at 12:45pm when computing $\varphi^+$ and the best ask when computing $\varphi^-$. Best bid and ask prices at 12:45pm are \$13.99 and \$14, respectively. 
\item Take snapshots of the order book every five minutes till 13:05pm. More precisely, we record the first snapshot of the order book after 12:50pm, 12:55pm, 13pm and 13:05pm, respectively.
We do so because we are interested in the dynamics of the shape on the `large' time scale in the order of minutes.

\item For each of the four time intervals: 12:45pm-12:50pm, 12:45pm-12:55pm, 12:45pm-13:00pm and 12:45pm-13:05pm, estimate the order flow intensities following~\cite{Cont10}. 
\end{enumerate}

\red{Finally, we compute the right-hand side of \eqref{eq:selllimitn2} and \eqref{eq:buylimitn2} by plugging in the model parameters. We take interval $I$ in \eqref{eq:selllimitn2} and \eqref{eq:buylimitn2} as 3-cent price bins, e.g., $[\$0.5, \$0.53]$.  It turns out that varying the bin size yields similar results, so we suppress further details.}



We now present our main empirical findings on in-sample tests of the approximations in \eqref{eq:selllimitn2} and \eqref{eq:buylimitn2}. We use Figure~1 for an illustration.
For detailed statistics and results on both buy-side and sell-side shapes, see the internet supplement available at the author's website \url{http://www1.se.cuhk.edu.hk/~xfgao/lobSupplement}.
Figure~1 plots the comparisons of the empirical sell-side shape of stock BAC and the theoretical sell-side shape obtained from applying our model at four different time instants. For the clarity of the presentation, we focus on price levels 10 ticks above the best ask price. This translates to seven `consecutive' 3--cent price bins in Figure~1.
For convenience, the aggregated order volume in each price bin is scaled by 363-shares, which is the average limit sell order size of BAC during 12:45pm-13:05pm. From Figure~1, we observe good agreement between the theoretical shapes computed from our model (Theorem~\ref{lem:ode}) and the empirical shapes of stock BAC on Arca. Thus, despite the simplifying assumptions, there is positive evidence that our limiting model can potentially explain the evolution of the shape of an order book over a time horizon that is large compared with the length of time between order book events.


Finally, we remark that the main goal of our empirical analysis is to illustrate the potential relevance of our fluid approximation rather than to make prediction about the future order--book shape, so no out--of--sample prediction result is reported here. In fact, we have attempted the analysis and found that the model does not work well for prediction purpose. One possible reason is that the parameters (order flow intensities) in our fluid model are constants, while actually they vary over time and are random on the time scale of several minutes.
This begs for further research on predictive models of order flow intensities and order--book shape, which is outside the scope of this paper. \\

\vspace{0.2cm}
\begin{figure}[H]
 \caption{Comparisons of empirical sell-side shapes and theoretical sell-side shapes for BAC on August 5, 2010. Each blue bar corresponds to the total number of limit sell orders in each 3-cent price bin and it is obtained from data. Each magenta bar corresponds to the total number of limit sell orders in each price bin and it is obtained from our model. The best ask prices are the same (\$14) at four time instants 12:50pm, 12:55pm, 13:00pm and 13:05pm. Each time instant (say, 12:50pm) represents the time that the first order book event occurs after that time instant.}
 \begin{tikzpicture}[x=0.5\linewidth,y=-0.38\linewidth]
    \path [use as bounding box] (-0.5,-0.5) rectangle(1.7,1.7);
    \path
    (0,0) node{\includegraphics[width=0.49\linewidth]{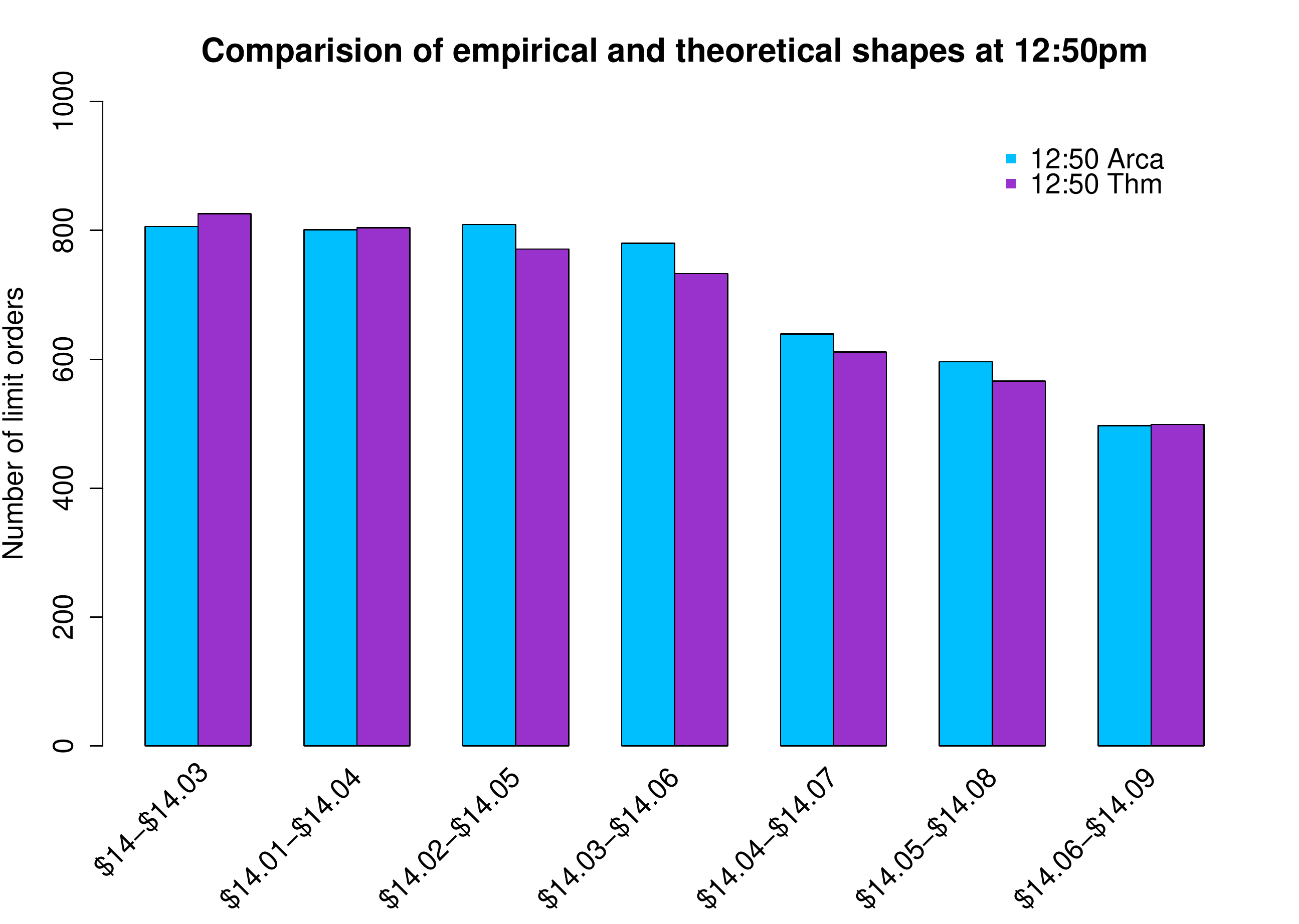}}
    (1,0) node{\includegraphics[width=0.49\linewidth]{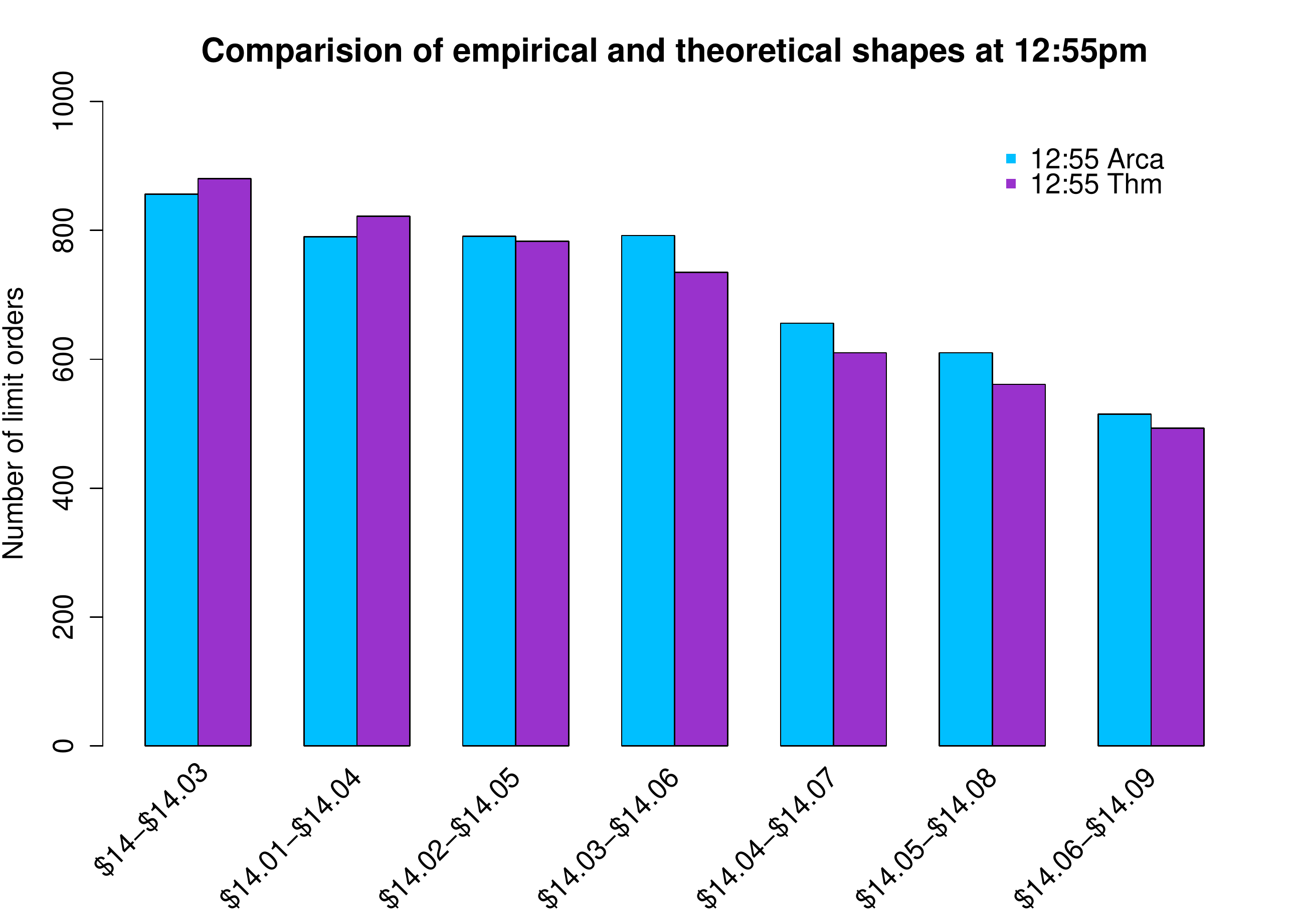}}

    (0,1) node{\includegraphics[width=0.49\linewidth]{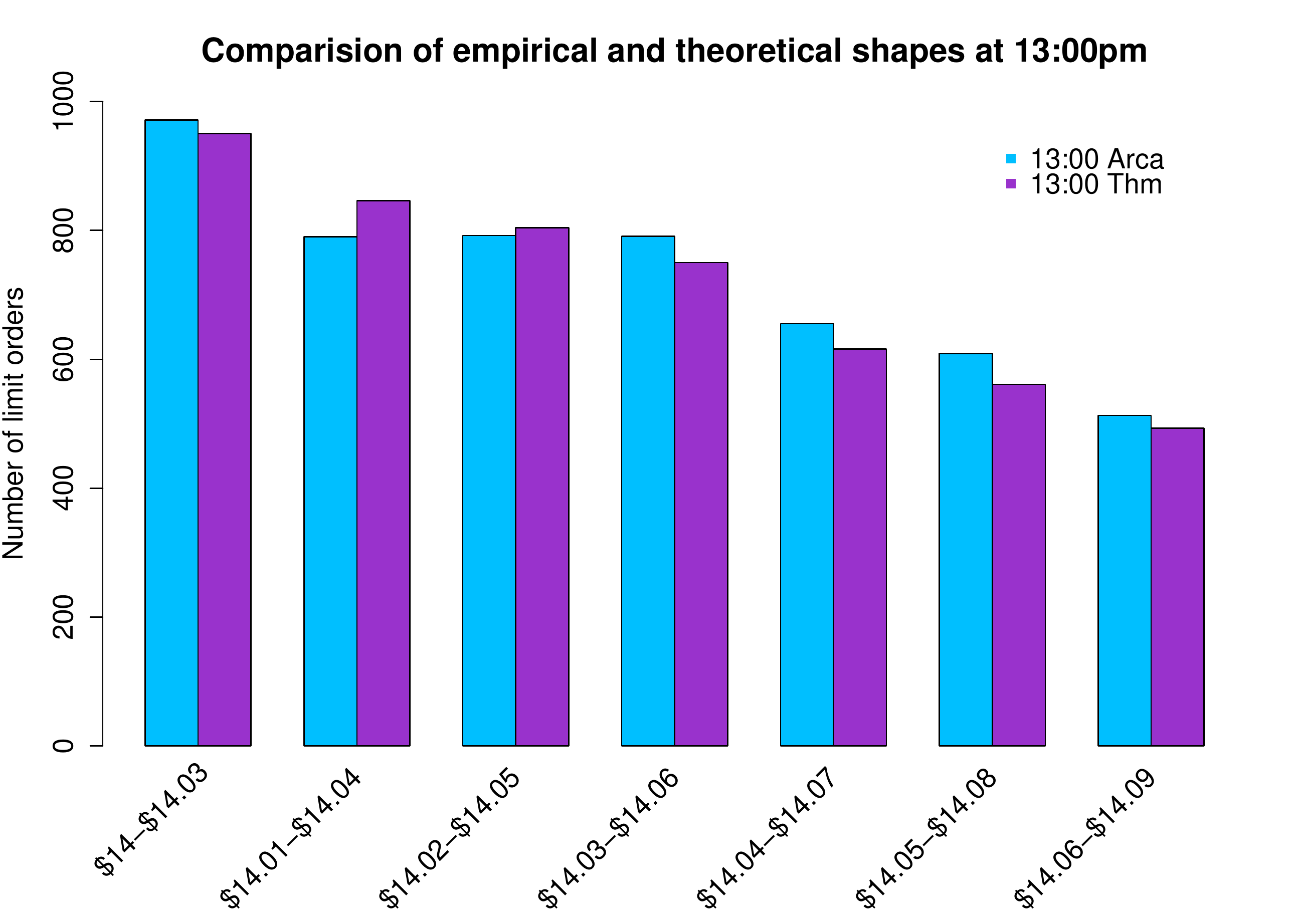}}
    (1,1) node{\includegraphics[width=0.49\linewidth]{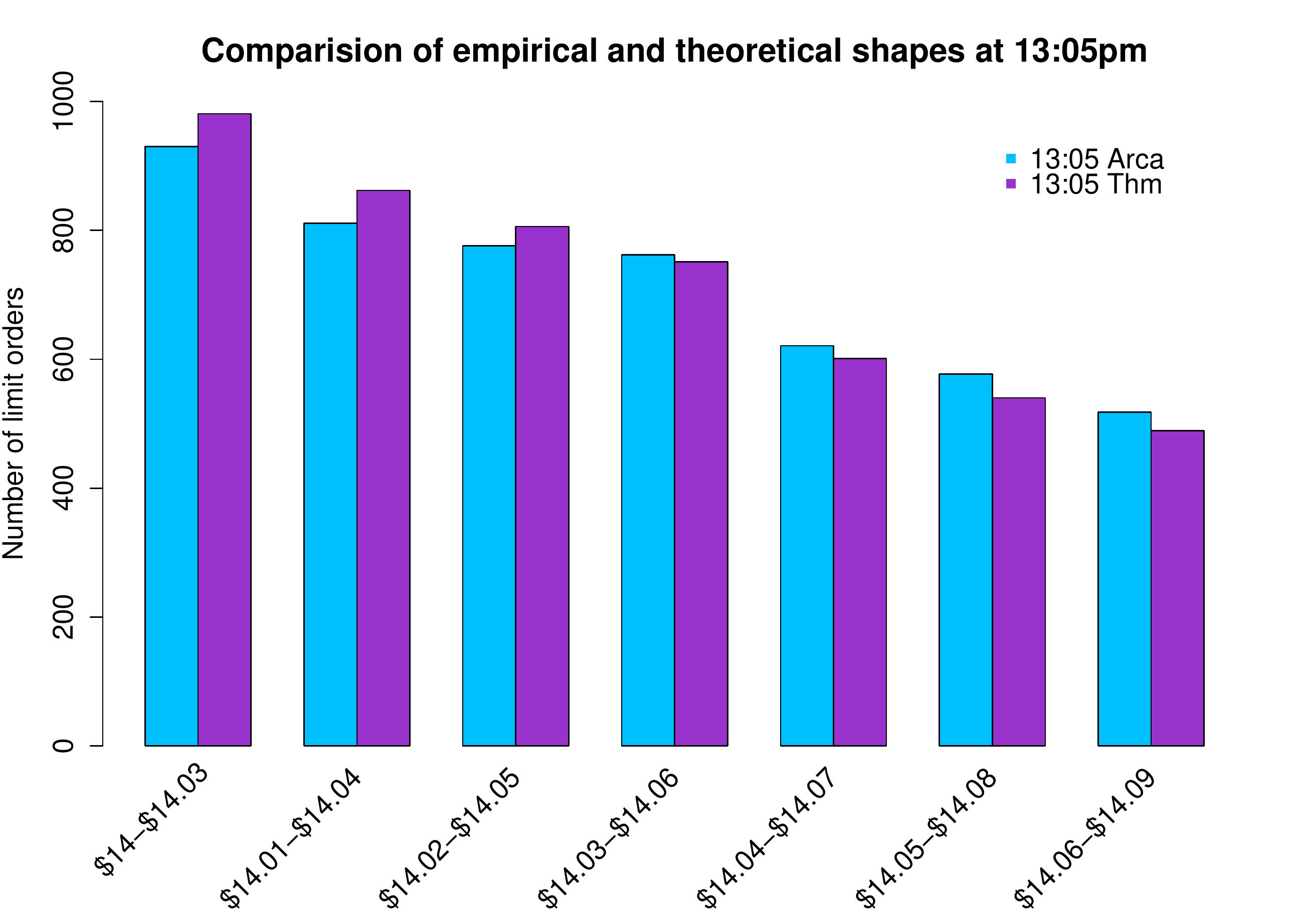}};
  \end{tikzpicture}
\end{figure}\label{fig:emp}

\newpage

\section{Preliminaries}\label{sec:prelim}
This section presents preliminaries for proving the main result (Theorem~\ref{lem:ode}).

\subsection{A technical lemma} \label{sec:tech-lemma}
In this section we introduce a technical lemma. It connects the process
$(\zeta^{n,+}, \zeta^{n,-})$ with the process $\zeta^{n}$ which plays a fundamental role in the proof of Theorem~\ref{lem:ode}.

Note that for fixed $f \in C[0,1]$, one can construct a family of functions $f_{\gamma} \in C[0,1]$ indexed by $\gamma \in (0, \mathsf{p})$ such that $\{f_{\gamma}\}$ are uniformly bounded and
\begin{equation} \label{eq:fgamma}
f_{\gamma}(x) =
\begin{cases} f(x)& \text{if $\mathsf{p} \le x \le 1 $,}
\\
0 &\text{if $0 \le x\le \mathsf{p}-\gamma$,}\\
\text{smooth} &\text{if $ \mathsf{p}-\gamma< x <\mathsf{p}$}.
\end{cases}
\end{equation}
It is clear that
\begin{eqnarray} \label{eq:fgamma-limit}
\lim_{\gamma \rightarrow 0+} f_{\gamma} (x) = f(x) \cdot 1_{[\mathsf{p},1]}(x)  \quad \text{for each $x\in[0,1]$}.
\end{eqnarray}

The following lemma says that $\langle \zeta^{n,+}_{t}, f \rangle$ is ``close" to $\langle \zeta^{n}_{t}, f_{\gamma} \rangle$ and $\langle \zeta^{n,-}_{t}, f \rangle$ is ``close" to $\langle \zeta^{n}_{\tau}, f_{\gamma} -f \rangle$ when $n$ is large and $\gamma$ is small.
The proof is given in Appendix~\ref{sec:lem-boundEdiff}.

\begin{lemma} \label{lem:boundEdiff}
\red{Fix a function $f\in C[0, 1]$.} There exists a positive constant $C$ \red{depending on $T$ and two nonnegative sequences $\{\tilde a_n\}, \{\hat a_n\}$} both depending on $T$ and both approaching 0 as $n \rightarrow \infty$, such that for each $\gamma \in (0, \mathsf{p})$ and $n$ large, we have
\begin{eqnarray}
{\E \Big[ \sup_{ 0 \le t \le T} \left|  \langle \zeta^{n,+}_{t}, f \rangle -  \langle \zeta^{n}_{t}, f_{\gamma} \rangle \right| \Big]} \le \tilde a_n +  C \gamma,\label{eq:bounddiffresult} \\
{\E \Big[ \sup_{0 \le t \le T} \left|  \langle \zeta^{n,-}_{t}, f \rangle -  \langle \zeta^{n}_{t}, f_{\gamma} -f \rangle \right| \Big]} \le  \hat a_n +  C \gamma. \label{eq:bounddiffresult2}
\end{eqnarray}
\end{lemma}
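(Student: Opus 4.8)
The plan is to bound each of the two quantities by decomposing the difference $\langle \zeta^{n,+}_t, f\rangle - \langle \zeta^n_t, f_\gamma\rangle$ into pieces that are controlled either by the smallness of $\gamma$ or by some vanishing sequence in $n$, and then to take the supremum over $t\in[0,T]$. Recall from \eqref{eq:empirialshapeplus}--\eqref{eq:empirialshape} that $\zeta^n_t = \zeta^{n,+}_t - \zeta^{n,-}_t$, with $\zeta^{n,+}_t$ supported on price levels $i/n > p^n_B(nt)/n$ and $\zeta^{n,-}_t$ supported on $i/n < p^n_A(nt)/n$. Hence $\langle \zeta^n_t, f_\gamma\rangle = \langle \zeta^{n,+}_t, f_\gamma\rangle - \langle \zeta^{n,-}_t, f_\gamma\rangle$, and
\[
\langle \zeta^{n,+}_t, f\rangle - \langle \zeta^n_t, f_\gamma\rangle = \langle \zeta^{n,+}_t, f - f_\gamma\rangle + \langle \zeta^{n,-}_t, f_\gamma\rangle.
\]
For the first term, since $f_\gamma$ agrees with $f$ on $[\mathsf{p},1]$ and $\zeta^{n,+}_t$ charges levels above $p^n_B(nt)/n$, the only contribution to $\langle \zeta^{n,+}_t, f-f_\gamma\rangle$ comes from price levels in $(\mathsf{p}-\gamma, \mathsf{p})$ — and more precisely from levels in $(p^n_B(nt)/n, \mathsf{p})$ where $\zeta^{n,+}_t$ is positive. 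For the second term, $\langle \zeta^{n,-}_t, f_\gamma\rangle$ is supported on levels below $p^n_A(nt)/n$; since $f_\gamma$ vanishes on $[0,\mathsf{p}-\gamma]$, again only levels in $(\mathsf{p}-\gamma, p^n_A(nt)/n)$ contribute. Both terms are thus controlled by the $\zeta^{n,\pm}$-mass carried by price levels within an $O(\gamma)$-window of $\mathsf{p}$, plus the mass carried on the "wrong side" of the best quotes.

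The core of the argument is therefore an a priori estimate showing that, uniformly in $t\le T$, the expected mass of $\zeta^{n,+}_t$ and $\zeta^{n,-}_t$ near the market price $\mathsf{p}$ is small. I would establish this in two steps. First, I would bound $\E[\sup_{t\le T}\zeta^{n,+}_t((\mathsf{p}-\gamma,\mathsf{p}))]$ and the analogous buy-side quantity by $C\gamma + (\text{vanishing in }n)$. The $C\gamma$ part comes from the initial profile: by Assumption~\ref{ass:initialcondition}(c), $\varrho$ is bounded on $[0,\mathsf{p})$, so $\zeta^{n,+}_0((\mathsf{p}-\gamma,\mathsf{p})) \le \frac{1}{n^2}\sum_{i/n\in(\mathsf{p}-\gamma,\mathsf{p})} (\mathcal{X}^n_i(0))^+ = \frac{1}{n}\sum (\varrho(i/n))^+ \le \|\varrho^+\mathbf{1}_{(\mathsf{p}-\gamma,\mathsf{p})}\|_\infty \cdot \gamma + o(1)$, which is $O(\gamma)$ since $\varrho^+$ vanishes just left of $\mathsf{p}$ by \eqref{eq:gammap}; the growth over time is controlled by the limit-order arrival rates $\Lambda^n_A(i) = \Lambda_A(i/n)$, which are bounded, so over time horizon $nt \le nT$ and with space scaling $n^{-2}$ the additional mass in a $\gamma$-window is $O(\gamma T)$ in expectation. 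Second, I would show that the mass $\zeta^{n,+}_t$ puts on levels $i/n \le \mathsf{p}$ (i.e. sell orders sitting at or below the time-zero market price, which can only occur if the best bid has moved down and crossed $\mathsf{p}$, or equivalently if $p^n_B(nt)/n$ has drifted away from $\mathsf{p}$) is negligible; this is where the market-order rate bound $\Upsilon^n_j \le n^\kappa \Upsilon$ with $\kappa < 1$ from Assumption~\ref{ass:rateconverg1} enters, ensuring that the number of price-level crossings up to time $nT$ is $o(n)$, so the displaced mass, after scaling by $n^{-2}$, vanishes. Combining these, $\E[\sup_{t\le T}|\langle\zeta^{n,+}_t,f-f_\gamma\rangle|] \le \|f-f_\gamma\|_\infty \cdot \E[\sup_t \zeta^{n,+}_t((\mathsf{p}-\gamma,\mathsf{p}))] \le \tilde a_n + C\gamma$ and similarly for the cross term, proving \eqref{eq:bounddiffresult}. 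Equation~\eqref{eq:bounddiffresult2} follows by an entirely symmetric computation: from $\zeta^{n,-}_t = \zeta^{n,+}_t - \zeta^n_t$ one gets $\langle\zeta^{n,-}_t,f\rangle - \langle\zeta^n_t, f_\gamma - f\rangle = \langle\zeta^{n,+}_t, f - f_\gamma\rangle + \langle\zeta^{n,-}_t, f_\gamma\rangle$ — the same two terms — so the bound is identical with $\hat a_n$ in place of $\tilde a_n$.

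The main obstacle I anticipate is the uniform-in-time control of the mass near $\mathsf{p}$, i.e. controlling $\E[\sup_{0\le t\le T}\zeta^{n,\pm}_t((\mathsf{p}-\gamma,\mathsf{p}))]$ rather than just $\E[\zeta^{n,\pm}_t(\cdot)]$ for fixed $t$. For the expectation at a fixed $t$, a direct generator/Dynkin computation suffices: applying $\mathcal{L}_n$ to the test functional $x\mapsto \frac{1}{n^2}\sum_{i/n\in(\mathsf{p}-\gamma,\mathsf{p})} x_i^+$ gives a drift bounded by the (bounded) limit-order rates, so the expected mass grows at most linearly. To upgrade to the supremum, I would write the process $\langle\zeta^{n,+}_t, g\rangle$ (for $g$ a fixed bounded function, or the relevant indicator-type test function) via the Dynkin decomposition as initial value plus a finite-variation drift term plus a martingale, bound the drift term's total variation pathwise by its absolute integrand, and apply Doob's $L^2$ maximal inequality to the martingale part after estimating its predictable quadratic variation — again using that all jump rates are controlled by the bounded $\Lambda_j(\cdot/n)$, $\frac{1}{n}\Theta_j(\cdot/n)$, $n^\kappa\Upsilon$, and that each jump changes $\langle\zeta^{n,+}_t,g\rangle$ by $O(n^{-2})$. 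The bookkeeping of which jump types move mass into or out of the window $(\mathsf{p}-\gamma,\mathsf{p})$, and the interaction with the moving boundary $p^n_B(nt)$, is delicate but routine once the scaling is tracked carefully; I expect this to be deferred to Appendix~\ref{sec:lem-boundEdiff} as the paper indicates.
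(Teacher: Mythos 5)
Your decomposition $\langle\zeta^{n,+}_t,f\rangle - \langle\zeta^n_t,f_\gamma\rangle = \langle\zeta^{n,+}_t,f-f_\gamma\rangle + \langle\zeta^{n,-}_t,f_\gamma\rangle$ is algebraically valid, and your observation that the same algebra shows $\langle\zeta^{n,-}_t,f\rangle - \langle\zeta^n_t,f_\gamma-f\rangle$ equals the \emph{identical} expression (so \eqref{eq:bounddiffresult} and \eqref{eq:bounddiffresult2} are literally the same statement and $\hat a_n = \tilde a_n$) is a nice shortcut that the paper does not exploit. The paper instead groups terms as $\sum_{i>p_B^n(nt)} - \sum_{i\ge n\mathsf{p}}$ (this becomes $\tilde a_n$) plus the $\gamma$-window remainder of $f_\gamma$ (this becomes $C\gamma$), then proves the two displayed bounds separately with a ``similar argument omitted'' remark. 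Both decompositions reduce to the same two ingredients: the mass of the scaled order book carried by a price window of width $O(\gamma)$ around $\mathsf{p}$, and the mass carried on the far side of the moving boundary $p^n_B(nt)/n$, controlled via Theorem~\ref{lem:ode}(a) and uniform integrability.

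There is, however, a genuine slip in how you assemble the pieces. Since $f_\gamma \equiv 0$ on $[0,\mathsf{p}-\gamma]$ and $f_\gamma\equiv f$ on $[\mathsf{p},1]$, the function $f-f_\gamma$ is supported on \emph{all} of $[0,\mathsf{p})$, not just on $(\mathsf{p}-\gamma,\mathsf{p})$, so the inequality you write in the combining step, $\E[\sup_t|\langle\zeta^{n,+}_t,f-f_\gamma\rangle|] \le \|f-f_\gamma\|_\infty \cdot \E[\sup_t\zeta^{n,+}_t((\mathsf{p}-\gamma,\mathsf{p}))]$, is false as stated: the left side also sees any $\zeta^{n,+}$-mass on $[0,\mathsf{p}-\gamma]$. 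Your ``second step'' does mention that mass on levels $i/n\le\mathsf{p}$ is negligible, but that cannot be literally true either (the mass on $(\mathsf{p}-\gamma,\mathsf{p})$ is of order $\gamma$, not $o(1)$). What you need, and what the paper actually does, is to split on the event $\sup_t|p_B^n(nt)/n-\mathsf{p}|\ge\delta$ (this has vanishing probability, giving $\tilde a_n$ by uniform integrability) versus its complement (where the window $(p_B^n(nt)/n,\mathsf{p})$ has width at most $\delta$, giving $C\delta$ by the total-mass lemma, after which one sends $\delta\to 0$). You should restate the combining step with $\zeta^{n,+}_t([0,\mathsf{p}))$ on the right and perform this $\delta$-split explicitly; as written, the inequality would not survive scrutiny. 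Also note that the required auxiliary bound $\E[\sup_{t\le T}\frac{1}{n^2}\sum_{i/n\in[a,b]}|\mathcal{X}^n_i(nt)|]\le C(b-a)$ is obtained in the paper by a much simpler device than your Dynkin/Doob plan: the arrival processes are pathwise dominated by independent Poisson processes of rate $\bar\Lambda$, and $|\mathcal{X}^n_i(nt)|$ can only increase at limit-order arrivals, so the supremum over $t\le T$ is bounded pathwise by the initial mass plus a Poisson count. This avoids the awkward nonsmoothness of $x\mapsto x^+$ in the generator entirely; your martingale route would need extra care precisely at the moving boundary $p^n_B$ where $\zeta^{n,+}$ is not a smooth function of $\mathcal{X}^n$.
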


\subsection{Tightness of $\{(\zeta^{n,+}, \zeta^{n,-}): n \ge 1\}$ } \label{sec:tight}


In this section, we define the Polish space $\D([0, T], {\mathcal{\overline {M}}}([0,1]) )$ on which the pair of measure-valued processes $\{(\zeta^{n,+}, \zeta^{n,-}): n \ge 1\}$ lives and discuss the tightness of this sequence.


Recall that the space of finite {\it{non-negative}} measures on $[0,1]$, denoted by $\mathcal{M^+}([0,1])$, is a Polish space under the following metric $d_+$:
\begin{equation}
d_+ \left(\upsilon_{\alpha} , \upsilon_{\beta}\right)= \sum_{k=1}^\infty \frac{1}{2^k} \frac{|\langle \upsilon_{\alpha}, \phi_k \rangle -\langle \upsilon_{\beta}, \phi_k \rangle|}
{1+|\langle \upsilon_{\alpha}, \phi_k \rangle -\langle \upsilon_{\beta}, \phi_k \rangle| }
\label{eq:metricplus}
\end{equation}
where $\upsilon_{\alpha}, \upsilon_{\beta} \in \mathcal{M^+}([0,1])$, $\{\phi_k: k \ge 1\}$ are chosen to be a dense subset of ${C}([0,1])$ endowed with uniform topology,
and $
\langle \upsilon, \phi_k \rangle \equiv \int \phi_k(x) \upsilon(dx)$ for measure $\upsilon$, see, e.g., \cite[Section~4.1]{kipnis1999scaling}. Relying on the metric $d_+$, we can define a metric $d$ on the product space $\mathcal{M^+}([0,1]) \times \mathcal{M^+}([0,1])$ such that
\begin{equation}
d \left( (\upsilon^+_{\alpha}, \upsilon^-_{\alpha} ) , ( \upsilon^+_{\beta}, \upsilon^-_{\beta}) \right) \buildrel \Delta \over = \sqrt{  d_+^2 (\upsilon^+_{\alpha}, \upsilon^+_{\beta}) +  d_+^2 (\upsilon^-_{\alpha}, \upsilon^-_{\beta}) },
\label{eq:metric}
\end{equation}
where $(\upsilon^+_{\alpha}, \upsilon^-_{\alpha} ), ( \upsilon^+_{\beta}, \upsilon^-_{\beta}) \in \mathcal{M^+}([0,1]) \times \mathcal{M^+}([0,1])$.
The product space $\mathcal{M^+}([0,1]) \times \mathcal{M^+}([0,1])$ equipped with metric $d$ defined in (\ref{eq:metric}) is denoted by $\overline{\mathcal{M}}([0,1])$. One readily verifies that $\overline{\mathcal{M}}([0,1])$ is a Polish space (see, e.g. \cite{kotelenez2010stochastic}).
{Therefore, the Skorohod space $\mathbb{D}([0, T], \overline{\mathcal{M}}([0,1]) )$ on which $(\zeta^{n,+}, \zeta^{n,-})$ lives is also a Polish space with Skorohod $J_1$ topology. See \cite{bil99} for more details on the topology.}

\red{We now state the main result of this section. The proof is given in Appendix~\ref{sec:appendixA}. Recall a sequence of random elements is called $C$--tight if it is tight, and if all the limit points of the sequence are concentrated on continuous paths. See, e.g., \cite[Definition~VI.3.25]{Shiryaev03}.}


\begin{proposition} \label{lem:tightposneg0}
\red{Fix a function $f \in C[0,1]$. Then $\left\{ \left(\langle \zeta^{n,+}, f \rangle , \langle \zeta^{n,-}, f \rangle \right) : n \ge 1 \right\}$ is $C$--tight in $\mathbb{D}([0,T], \R^2)$, and $\{(\zeta^{n,+}, \zeta^{n,-}) : n \ge 1\}$ is $C$--tight in $\D([0, T], \overline{\mathcal{M}}([0,1]) )$.   }
\end{proposition}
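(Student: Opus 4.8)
The plan is to work not with the nonnegative pair $(\zeta^{n,+},\zeta^{n,-})$ directly, but with the \emph{signed} functionals $\langle\zeta^n_t,g\rangle$, which are \emph{linear} in the CTMC state and hence transparent to $\mathcal{L}_n$, and then to transport every conclusion back to $\zeta^{n,\pm}$ using the comparison Lemma~\ref{lem:boundEdiff}. Write $\Phi^n(x)=\tfrac1{n^2}\sum_{i=1}^n|x_i|$ and $\Psi^n_g(x)=\tfrac1{n^2}\sum_{i=1}^n x_i\,g(i/n)$ for $g\in C[0,1]$, so that $\langle\zeta^{n,+}_t,1\rangle+\langle\zeta^{n,-}_t,1\rangle=\Phi^n(\mathcal{X}^n(nt))$ and $\langle\zeta^n_t,g\rangle=\Psi^n_g(\mathcal{X}^n(nt))$; recall the time--changed chain $t\mapsto\mathcal{X}^n(nt)$ has generator $n\mathcal{L}_n$. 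Two structural facts drive everything: every transition in \eqref{eq:generator} flips one coordinate by $\pm1$, hence changes $\Phi^n$ and $\Psi^n_g$ by $O(1/n^2)$; and the only transitions that \emph{increase} $\Phi^n$ are limit--order arrivals, which occur at total rate at most $n(\|\Lambda_A\|_\infty+\|\Lambda_B\|_\infty)$, a bound independent of the state (this also yields non--explosion on $[0,nT]$).

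\textbf{Step 1: a uniform mass bound.} From the second fact, $n\mathcal{L}_n\Phi^n(x)\le\|\Lambda_A\|_\infty+\|\Lambda_B\|_\infty=:C_\Lambda$ for every $x$, after discarding the nonnegative cancellation and market--order contributions. Since $\Phi^n(\mathcal{X}^n(0))=\tfrac1n\sum_i|\varrho(i/n)|$ is bounded in $n$ by Assumption~\ref{ass:initialcondition}, a localization together with Dynkin's formula gives $\E[\Phi^n(\mathcal{X}^n(nt))]\le\Phi^n(\mathcal{X}^n(0))+C_\Lambda T$; the predictable bracket of the associated martingale is $\le n^{-2}\int_0^t O(1+\Phi^n(\mathcal{X}^n(ns)))\,ds=O(n^{-2})$ (rate $O(n^2)$ in the fast clock, jump size squared $O(n^{-4})$), so Doob's inequality yields $\sup_n\E\big[\sup_{t\le T}(\langle\zeta^{n,+}_t,1\rangle+\langle\zeta^{n,-}_t,1\rangle)\big]<\infty$. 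It is essential that the drift bound $C_\Lambda$ is deterministic, so the bracket estimate does not become circular.

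\textbf{Step 2: scalar $C$--tightness, then the pair.} By Dynkin's formula $\langle\zeta^n_t,g\rangle=\langle\zeta^n_0,g\rangle+\int_0^t n\mathcal{L}_n\Psi^n_g(\mathcal{X}^n(ns))\,ds+M^{n,g}_t$, and the same bookkeeping gives $|n\mathcal{L}_n\Psi^n_g(x)|\le C_g(1+\Phi^n(x))$ and $\E[\langle M^{n,g}\rangle_T]=O(n^{-2})$, whence $M^{n,g}\Rightarrow0$ uniformly on $[0,T]$. On the event $\{\sup_{t\le T}\Phi^n(\mathcal{X}^n(nt))\le M\}$, which by Step~1 has probability $\ge1-C/M$ uniformly in $n$, the drift term is Lipschitz in $t$ with constant $C_g(1+M)$; combined with $M^{n,g}\Rightarrow0$, the convergence of the Riemann sum $\langle\zeta^n_0,g\rangle=\tfrac1n\sum_i\varrho(i/n)g(i/n)\to\int_0^1\varrho g$ (valid as $\varrho$ is bounded and Riemann integrable), and tightness of $\langle\zeta^n_t,g\rangle$ for fixed $t$ (Step~1), the standard oscillation criterion shows $\langle\zeta^n,g\rangle$ is $C$--tight in $\D([0,T],\R)$ for each $g\in C[0,1]$ (see, e.g., \cite{bil99,Shiryaev03}). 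Now fix $f\in C[0,1]$ and apply this with $g=f_\gamma$ and $g=f_\gamma-f$ from \eqref{eq:fgamma}: by Lemma~\ref{lem:boundEdiff}, $\langle\zeta^{n,+}_t,f\rangle=\langle\zeta^n_t,f_\gamma\rangle+R^{n,\gamma}_t$ with $\E[\sup_t|R^{n,\gamma}_t|]\le\tilde a_n+C\gamma$, so the oscillation modulus of $\langle\zeta^{n,+},f\rangle$ is dominated by that of $\langle\zeta^n,f_\gamma\rangle$ plus $2\sup_t|R^{n,\gamma}_t|$; sending $\gamma\to0$ and then $n\to\infty$ transfers the $C$--tightness to $\langle\zeta^{n,+},f\rangle$, and \eqref{eq:bounddiffresult2} does the same for $\langle\zeta^{n,-},f\rangle$. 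Coordinatewise $C$--tightness of $\R$--valued processes gives $C$--tightness of the pair in $\D([0,T],\R^2)$, which is the first claim.

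\textbf{Step 3: lift to measures, and the main obstacle.} Since $[0,1]$ is compact, the mass--bounded sets $K_M=\{\upsilon\in\mathcal{M^+}([0,1]):\langle\upsilon,1\rangle\le M\}$ are $d_+$--compact (on them the metric \eqref{eq:metricplus} induces the weak topology, under which $K_M$ is Prokhorov--compact), so $K_M\times K_M$ is compact in $\overline{\mathcal{M}}([0,1])$; by Step~1 and Markov, $(\zeta^{n,+}_t,\zeta^{n,-}_t)\in K_M\times K_M$ for all $t\le T$ with probability $\ge1-C/M$, uniformly in $n$ --- the required compact--containment condition. Combining this with Step~2 applied to the continuous functionals $(\upsilon^+,\upsilon^-)\mapsto\langle\upsilon^+,f\rangle+\langle\upsilon^-,h\rangle$, $f,h\in C[0,1]$ (a family closed under addition and separating points of $\overline{\mathcal{M}}([0,1])$), the standard tightness criterion for Skorohod space over a Polish target (see, e.g., \cite{kipnis1999scaling}) gives tightness of $\{(\zeta^{n,+},\zeta^{n,-})\}$ in $\D([0,T],\overline{\mathcal{M}}([0,1]))$; $C$--tightness follows because along any limit path all these separating functionals are continuous, forcing the path itself to be continuous. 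I expect the crux to be Step~1 and the scaling estimates it feeds into Step~2: one must check that the balance ``$O(n^2)$ events per unit fast time, each of size $O(n^{-2})$'' makes every relevant martingale bracket $O(n^{-2})$ while the drifts stay $O(1)$, and one must control the \emph{state--dependent} cancellation rates --- which is exactly why the $\Psi^n_g$--drift bound is only $O(1+\Phi^n)$ and why the deterministic a priori mass bound must be established first. The passage from $\zeta^n$ to $\zeta^{n,\pm}$ via Lemma~\ref{lem:boundEdiff} and the final lift in Step~3 are, by comparison, routine applications of known criteria.
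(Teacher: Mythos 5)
Your proposal is correct and follows essentially the same architecture as the paper's proof: establish a uniform a~priori mass bound, prove $C$--tightness of the signed scalar processes $\langle\zeta^n,g\rangle$ via the Dynkin/generator decomposition (drift $O(1+\Phi^n)$, bracket $O(n^{-2})$), transfer to $\langle\zeta^{n,\pm},f\rangle$ through Lemma~\ref{lem:boundEdiff} with the cutoff family $f_\gamma$, and lift to $\mathbb{D}([0,T],\overline{\mathcal{M}}([0,1]))$ using compact containment of mass--bounded sets together with the Kipnis--Landim criterion and the observation that limit paths of all the separating functionals are continuous. The one genuine methodological variation is in Step~1: the paper obtains the mass bound (Lemma~\ref{lem:totalorder}) by a pathwise comparison with Poisson arrival processes, yielding an $L^2$ bound on $\sup_t$ directly, whereas you use Dynkin's formula on $\Phi^n$ with the drift bounded by the (state--independent) arrival rate and then handle the bracket circularity by first closing the expectation estimate. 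Both arguments are sound; the Poisson comparison is slightly more robust (it gives $L^2$ control cheaply, which the paper re-uses for uniform integrability inside the proof of Lemma~\ref{lem:boundEdiff}), while your generator route is more self-contained at the price of the circularity bookkeeping you correctly flag. One small cosmetic point: in the final oscillation estimate of Step~2 the limits should be taken as $\sigma\to 0$, then $\limsup_n$, with the residual $O(\gamma/\eps)$ removed last by $\gamma\to 0^+$ (the order ``$\gamma\to 0$ then $n\to\infty$'' as written doesn't quite make sense because the modulus bound from Lemma~\ref{lem:boundEdiff} only holds for $n$ large), but since the $\gamma$-dependent remainder is uniform in $n$ and $\sigma$ this is a phrasing issue, not a gap.
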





Proposition~\ref{lem:tightposneg0} together with Prohorov's Theorem implies that the sequence $\{(\zeta^{n, +},  \zeta^{n, -}): n \ge 1\}$ is relatively compact. Our next step is to characterize the
limit points of this relatively
compact sequence. Suppose $(\zeta^+, \zeta^-)$ is a limit point, i.e., there is a subsequence
$\{(\zeta^{n_k, +},  \zeta^{n_k, -}): k=1, 2 \ldots\}$ such that
\begin{eqnarray} \label{eq:pinconvgpi}
(\zeta^{n_k, +},  \zeta^{n_k, -}) \Rightarrow (\zeta^+, \zeta^-) \quad \text{in $\D([0, T], \overline{\mathcal{M}}([0,1]) )$ \quad as $n_k \rightarrow \infty$}.
\end{eqnarray}
To characterize the pair $(\zeta^+, \zeta^-),$ we rely on an auxiliary process defined by
\begin{eqnarray}\label{eq:pit}
\zeta_t \buildrel \Delta \over = \zeta^+_{t} - \zeta^-_{t} \quad \text{for each $t \in [0,T]$},
\end{eqnarray}


\subsection{Characterization of the process $\zeta$}
In this section, we characterize the process $\zeta$ defined in \eqref{eq:pit}.

It is straightforward to verify that for $f \in C[0,1]$, the function from $(\pi^+, \pi^-) \in\D([0, T], \overline{\mathcal{M}}([0,1]) )$ to $\left(\langle \pi^+, f \rangle, \langle \pi^-, f \rangle \right)\in\D([0, T], \mathbb{R}^2 )$ is continuous with respect to Skorohod $J_1$ topology (use (\ref{eq:metricplus})-(\ref{eq:metric}) and Lemma~A.24 in \cite{seppalainen2008translation}).
Now \eqref{eq:pinconvgpi} together with continuous mapping theorem implies that for an arbitrary fixed function $f \in C([0,1])$ we have as $n_k \rightarrow \infty,$
\begin{eqnarray} \label{eq:pinconvgpi-f}
\left(\langle \zeta^{n_k,+}, f \rangle, \langle \zeta^{n_k,-}, f \rangle \right) \Rightarrow \left( \langle \zeta^{+}, f \rangle,  \langle \zeta^{-}, f \rangle \right) \quad \text{in $\D([0, T], \mathbb{R}^2)$}.
\end{eqnarray}
Thus we deduce that
\begin{eqnarray} \label{eq:convgzetaf}
\langle \zeta^{n_k}, f \rangle \Rightarrow \langle \zeta, f \rangle \quad \text{in $\D([0, T], \mathbb{R})$,} \end{eqnarray}
where $\zeta_{}^{n}$ is given in \eqref{eq:empirialshape} and $\zeta$ is given in \eqref{eq:pit}.

Next, for fixed $n \ge 1$,
if we define a function $F$ on $\mathbb{R}^n$ by
\begin{equation} \label{eq:F}
F(x_1,\ldots,x_n)=\frac{1}{n^{2}}\sum_{k=1}^n f\Big(\frac{k}{n}\Big)x_k,
\end{equation}
we obtain from $(\ref{eq:empirialshape})$ and Markov property of $\mathcal{X}^n$ that for $t \in [0,T]$,
\begin{eqnarray} \label{eq:repzetatnf}
\langle \zeta_t^n, f \rangle
& =&\frac{1}{n^{2}} \sum_{i=1}^{n} \mathcal{X}^n_{i} (nt )
\cdot f \Big( \frac{i}{n}\Big) = F( \mathcal{X}^n (nt) ) \nonumber\\
&=& \langle \zeta_0^n, f \rangle + \int_0^{nt}
\mathcal{L}_n F(\mathcal{X}^n(s))ds+ \mathsf{M}^n_t.
\end{eqnarray}
Here $\mathcal{L}_n$ is operator given in (\ref{eq:generator}) and $\mathsf{M}^n$ is a (local) martingale. The martingale representation \eqref{eq:repzetatnf} and the convergence in \eqref{eq:convgzetaf} are the keys to characterize $\zeta$.

To state the main result of this section,
we recall the functions $\Lambda_A, \Lambda_B, \Theta_A$ and $\Theta_B$ given in Assumption~\ref{ass:rateconverg1} and $\mathsf{p}$ given in Assumption~\ref{ass:initialcondition}. For notational convenience, we let
$\nu_\Lambda$ be a signed measure that is absolutely continuous with
respect to Lebesgue measure such that for $x \in [0,1]$
\begin{eqnarray} \label{eq:nuLambda}
\frac{\nu(dx)}{dx}= \Lambda_A(x- \mathsf{p}) \cdot 1_{x> \mathsf{p}}  - \Lambda_B( \mathsf{p}-x) \cdot 1_{x< \mathsf{p}}.
\end{eqnarray}
In addition, we let $\mathcal{A}_{\Theta}$ be a linear operator such that for $f \in C([0,1])$,
\begin{eqnarray} \label{eq:ATheta}
\mathcal{A}_{\Theta}f(x)=f(x) \cdot \Big( \Theta_A(x- \mathsf{p}) \cdot 1_{x> \mathsf{p}}  + \Theta_B( \mathsf{p}-x) \cdot 1_{x< \mathsf{p}}\Big).
\end{eqnarray}

We now state the main result of this section which gives a characterization of $\zeta$.
\begin{proposition} \label{lem:limitpointpi}
{Let $(\zeta^+, \zeta^-)$ be a limit point in \eqref{eq:pinconvgpi} and }
let $\zeta$ be defined in (\ref{eq:pit}). Then $\zeta$ satisfies the following set of equations: for any $f \in C([0,1])$ and $t \in [0, T]$,
\begin{equation} \label{eq:generatormart3}
\langle \zeta_t, f \rangle = \langle \zeta_0, f \rangle + \langle
\nu, f \rangle \cdot t - \int_0^t \langle \zeta_s,
\mathcal{A}_{\Theta}f \rangle ds,
\end{equation}
where $\nu$ and $\mathcal{A}_{\Theta}$ are given in (\ref{eq:nuLambda}) and (\ref{eq:ATheta}). \red{Moreover, $\langle \zeta, f \rangle$ has continuous trajectory and there is a unique deterministic measure-valued process solving Equations \eqref{eq:generatormart3}.} Finally, for fixed $t\in[0,T]$, $\zeta_t$ is a finite measure that is absolutely continuous with respect to Lebesgue
measure, and it has bounded density function $\varphi$ determined by \eqref{eq:varphiinitial}--\eqref{eq:varphip}.
%
%
\end{proposition}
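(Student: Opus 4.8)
The plan is to pass to the limit, along the subsequence $\{n_k\}$ from \eqref{eq:pinconvgpi}, in the Dynkin identity \eqref{eq:repzetatnf} for $\langle\zeta^{n}_t,f\rangle$, identify the limiting drift as $\langle\nu,f\rangle t-\int_0^t\langle\zeta_s,\mathcal A_\Theta f\rangle\,ds$, and then show the resulting integral equation \eqref{eq:generatormart3} has a unique solution which is given explicitly by \eqref{eq:phiplus}--\eqref{eq:phiminus}. For the generator term, since every admissible transition changes a single coordinate by $\pm1$ one has $F(x^{k\pm})-F(x)=\pm n^{-2}f(k/n)$, and a direct computation using Assumption~\ref{ass:rateconverg1} together with the fact that $\mathcal X^n_k<0$ exactly on the buy side and $\mathcal X^n_k>0$ exactly on the sell side gives, after the time substitution $s=nu$,
\[
n\,\mathcal L_nF(\mathcal X^n(nu))=R^n_\Lambda(u)-\Big(\langle\zeta^{n,+}_u,\,f\,\Theta_A(\cdot-\tfrac{p^n_B(nu)}{n})\rangle-\langle\zeta^{n,-}_u,\,f\,\Theta_B(\tfrac{p^n_A(nu)}{n}-\cdot)\rangle\Big)+E^n(u),
\]
where $R^n_\Lambda(u)=\frac1n\sum_{k>p^n_B(nu)}f(\tfrac kn)\Lambda_A(\tfrac{k-p^n_B(nu)}{n})-\frac1n\sum_{k<p^n_A(nu)}f(\tfrac kn)\Lambda_B(\tfrac{p^n_A(nu)-k}{n})$ is a Riemann sum and $E^n(u)$ collects the two market-order contributions, bounded by $2n^{\kappa-1}\|f\|_\infty\Upsilon\to0$. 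The cancellation pairings above are \emph{exact} (this is precisely where the factor $n^{-1}$ in $\Theta^n_j$ is used), but they are written in terms of the \emph{moving} best-quote levels $p^n_A(nu),p^n_B(nu)$ rather than $\mathsf p$.

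Next I would pass to the limit in the drift. Using the concentration of the best bid and ask near $\mathsf p$, i.e.\ $\sup_{t\le T}|p^n_A(nt)/n-\mathsf p|\Rightarrow0$ and $\sup_{t\le T}|p^n_B(nt)/n-\mathsf p|\Rightarrow0$ (the statement of Theorem~\ref{lem:ode}(a); this fact also underlies Lemma~\ref{lem:boundEdiff}), together with the uniform continuity of $\Lambda_A,\Lambda_B$, one obtains $\sup_{u\le T}|R^n_\Lambda(u)-\langle\nu,f\rangle|\Rightarrow0$ with $\nu$ as in \eqref{eq:nuLambda}. For the cancellation pairings I would pass to a Skorohod representation so that $(\zeta^{n_k,+},\zeta^{n_k,-})\to(\zeta^+,\zeta^-)$ a.s.\ in $\D([0,T],\overline{\mathcal M}([0,1]))$, hence, the limit being continuous by Proposition~\ref{lem:tightposneg0}, a.s.\ uniformly on $[0,T]$; combined with $p^n_{A/B}(nu)/n\to\mathsf p$ and the continuity of $\Theta_A,\Theta_B$, the test functions $f\,\Theta_A(\cdot-p^n_B/n)$ and $f\,\Theta_B(p^n_A/n-\cdot)$ converge uniformly to $f\,\Theta_A(\cdot-\mathsf p)$ and $f\,\Theta_B(\mathsf p-\cdot)$, so the pairings converge to $\langle\zeta^+_u,f\Theta_A(\cdot-\mathsf p)\rangle$ and $\langle\zeta^-_u,f\Theta_B(\mathsf p-\cdot)\rangle$. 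I would also record the elementary consequence of the a priori bound $\sup_{t\le T}\sup_k|\mathcal X^n_k(nt)|=O(n)$ (appendix) that any weak limit $\zeta^\pm_t$ is absolutely continuous with density bounded by some constant, in particular non-atomic and supported in $[\mathsf p,1]$ resp.\ $[0,\mathsf p]$; this identifies $\langle\zeta^+_u,f\Theta_A(\cdot-\mathsf p)\rangle-\langle\zeta^-_u,f\Theta_B(\mathsf p-\cdot)\rangle=\langle\zeta_u,\mathcal A_\Theta f\rangle$. Then $\mathsf M^n\to0$ uniformly: the predictable quadratic variation is the time integral of $(\text{total jump rate})\cdot(\Delta F)^2$ with $(\Delta F)^2\le\|f\|_\infty^2n^{-4}$ and total rate $O(n)$ at a state of order-book volume $O(n^2)$, so $\E[\langle\mathsf M^n\rangle_T]=O(n^{-2})$ and Doob's $L^2$ inequality finishes (a truncation on the volume handles integrability of the local martingale). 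Finally $\langle\zeta^n_0,f\rangle=\frac1n\sum_i\varrho(\tfrac in)f(\tfrac in)\to\int_0^1\varrho f\,dx$, so $\zeta_0(dx)=\varrho(x)\,dx$. Letting $n_k\to\infty$ in \eqref{eq:repzetatnf} and invoking \eqref{eq:convgzetaf} yields \eqref{eq:generatormart3}, and continuity of $t\mapsto\langle\zeta_t,f\rangle$ is inherited from the $C$-tightness of $(\zeta^{n,+},\zeta^{n,-})$.

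For uniqueness and the explicit form: if $m,\tilde m$ both solve \eqref{eq:generatormart3} with $m_0=\tilde m_0=\varrho(x)\,dx$ and $\sup_{t\le T}(\|m_t\|_{\mathrm{TV}}\vee\|\tilde m_t\|_{\mathrm{TV}})<\infty$, then $\eta:=m-\tilde m$ satisfies $\langle\eta_t,f\rangle=-\int_0^t\langle\eta_s,\mathcal A_\Theta f\rangle\,ds$; since $\mathcal A_\Theta$ is multiplication by a bounded function of sup-norm $\Theta_{\max}:=\|\Theta_A\|_\infty\vee\|\Theta_B\|_\infty$, iterating $N$ times gives $|\langle\eta_t,f\rangle|\le\|f\|_\infty\Theta_{\max}^N(\sup_{s\le T}\|\eta_s\|_{\mathrm{TV}})\,t^N/N!\to0$, so $\eta\equiv0$. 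For existence and identification, let $\varphi$ be defined by \eqref{eq:phiplus}--\eqref{eq:phiminus} (reading $(\Lambda/\Theta)(1-e^{-\Theta t})$ as $\Lambda t$ where $\Theta=0$); $\varphi$ is bounded on $[0,1]\times[0,T]$ and is the unique solution of the ODE system \eqref{eq:varphiinitial}--\eqref{eq:varphip}. Differentiating in $t$ shows $\mu_t(dx):=\varphi(x,t)\,dx$ solves \eqref{eq:generatormart3} with $\mu_0=\varrho(x)\,dx$, so by uniqueness $\zeta_t=\mu_t$ for every $t$; this gives the absolute continuity of $\zeta_t$, the bounded density $\varphi$, and its characterization by \eqref{eq:varphiinitial}--\eqref{eq:varphip}.

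The hard part will be the passage to the limit in the drift: reorganizing the generator so that the moving best-quote levels are replaced by the constant $\mathsf p$ rests on the (separately proved) concentration of the best bid/ask near $\mathsf p$, and the drift limit genuinely couples this with the joint weak convergence of the \emph{pair} $(\zeta^{n,+},\zeta^{n,-})$ rather than of $\zeta^n$ alone; one must also deal with the discontinuity of $\mathcal A_\Theta f$ at $\mathsf p$, which is why the a priori estimate ruling out atoms of $\zeta^\pm_t$ at $\mathsf p$ is needed. The quadratic-variation bound for $\mathsf M^n$ likewise depends on a priori control of the total order-book volume.
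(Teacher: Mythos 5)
Your overall outline matches the paper's (pass to the limit in the Dynkin representation \eqref{eq:repzetatnf}, kill the martingale via a quadratic-variation bound, identify the drift using the price concentration from part (a) of Theorem~\ref{lem:ode}, then handle uniqueness and absolute continuity), and your computation of $n\,\mathcal{L}_nF(\mathcal{X}^n(nu))$ is the same decomposition the paper exploits in Lemma~\ref{lem:generatorconvg}. Two points of genuine divergence are worth noting. First, your uniqueness argument is a Picard/Gronwall iteration in total-variation norm, whereas the paper argues through the equivalence of weak and classical solutions of the one-variable ODE for $\varphi(x,\cdot)$ at a.e.\ fixed $x$. Your route is cleaner, but it requires plugging $\mathcal{A}_\Theta f, \mathcal{A}_\Theta^2 f,\dots$ back into \eqref{eq:generatormart3}, and these are \emph{not} in $C([0,1])$ (they jump at $\mathsf p$ unless $\Theta_A(0)=\Theta_B(0)=0$), while \eqref{eq:generatormart3} is established only for continuous $f$. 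You must therefore first establish absolute continuity of $\zeta_t$ (so that $\eta_t$ has no atom at $\mathsf p$) and then extend \eqref{eq:generatormart3} to bounded Borel test functions by dominated convergence \emph{before} iterating; the paper's ODE argument sidesteps this bookkeeping. Second, for the cancellation pairings you invoke Skorohod representation plus $C$-tightness; the paper uses the generalized continuous-mapping theorem together with an approximation of the discontinuous function $G(x)=f(x)\Theta_A(x-\mathsf p)1_{x>\mathsf p}$ by continuous $G_\epsilon$. Both are workable.

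There is, however, a concrete gap: you lean on ``the a priori bound $\sup_{t\le T}\sup_k|\mathcal X^n_k(nt)|=O(n)$ (appendix)'' to conclude that any weak limit $\zeta^\pm_t$ is absolutely continuous with bounded density and has no atom at $\mathsf p$. No such per-level, uniform-in-$k$ bound is proved in the appendix (and it would in fact require a maximum over $n$ nearly independent Poisson counts, naively giving $O(n\log n)$). What the appendix does provide is the aggregate $L^2$ bound of Lemma~\ref{lem:totalorder}, $\E\bigl[\sup_{t\le T}\tfrac{1}{n^2}\sum_{i:\,i/n\in[a,b]}|\mathcal X^n_i(nt)|\bigr]^2\le C^2(b-a)^2$, which upon passage to the (deterministic) limit yields $\zeta^+_t([a,b])+\zeta^-_t([a,b])\le C(b-a)$ and hence absolute continuity with bounded density — the conclusion you want, but via a different input. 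The paper's own route is to show $|\langle\zeta_t,f\rangle|\le C_T\int_0^1|f|\,dx$ for all $f\in C([0,1])$ via the Poisson domination \eqref{eq:tildeAstoc}, extend to $L^1$, and apply a representation theorem. Either repair works; as written your absolute-continuity step is not substantiated by what is actually in the paper.
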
  
{The main technical part of the proof of Proposition~\ref{lem:limitpointpi} consists of showing that in \eqref{eq:repzetatnf}
 as $n\rightarrow \infty$, the martingale term vanishes and the term involving the generator converges weakly.
We leave the details of the proof to Appendix~\ref{app:2}.}


\section{Proof of Theorem~\ref{lem:ode}} \label{sec:proofthm1}
\add{In this section we prove our main result Theorem~\ref{lem:ode}. The proof of part (b) of Theorem~\ref{lem:ode} relies on the results in Section~\ref{sec:prelim} while the proof of part (a) does not.}

\subsection{Proof of part (a) of Theorem~\ref{lem:ode}}

\begin{proof}
We prove (\ref{eq:askpriceconvg}) using a stochastic comparison method. The convergence in (\ref{eq:bidpriceconvg}) follows from a similar
argument and {is thus} omitted.

We first note that
\begin{eqnarray} \label{eq:initialpriceconvg}
\lim_{n \rightarrow \infty} \frac{p_A^n(0)}{n} = \lim_{n \rightarrow
\infty} \frac{p_B^n(0)}{n} = \mathsf{p}.
\end{eqnarray}
This readily follows from the definition of best bid
and best ask prices and the regularity condition (b) of $\varrho$ in Assumption~\ref{ass:initialcondition}.
Hence in order to prove (\ref{eq:askpriceconvg}), it suffices to show
\begin{eqnarray*}
 \sup_{0 \le t \le T} \left| \frac{p_A^n(nt)}{n}- \frac{p_A^n(0)}{n} \right| \Rightarrow 0, \quad \text{as $n \rightarrow \infty$}. \label{eq:askpriceconvg'}
\end{eqnarray*}
Given any $\epsilon>0$, we have
\begin{eqnarray} \label{eq:askpriceconvg2parts}
\lefteqn{{\mathbb{P} \Big(  \sup_{0 \le t \le T} \left|
\frac{p_A^n(nt)}{n}-
\frac{p_A^n(0)}{n} \right|  > \epsilon \Big)}} \nonumber \\
 & \le \mathbb{P}\Big(  \mathop {\sup }\limits_{0 \le t \le T} {p_A^n(nt)}- {p_A^n(0)}
> n \epsilon \Big) +  \mathbb{P}\Big( {p_A^n(0)} - \mathop {\inf }\limits_{0 \le t \le T} {p_A^n(nt)}
> n \epsilon \Big).
\end{eqnarray}

We next show that for any small $\delta>0,$ there exists
$N_{\delta}$ such that when $n > N_{\delta},$
\begin{eqnarray} \label{eq:askpriceconvgpart1}
\mathbb{P} \Big(  \sup_{0 \le t \le T} {p_A^n(nt)}- {p_A^n(0)}
> n \epsilon \Big)
\le \delta.
\end{eqnarray}
Since $p_A^n(\cdot)$ is upper bounded by $n+1$ by definition, we deduce that if $p_A^n(0) \ge n +1 -  n \epsilon $, then \eqref{eq:askpriceconvgpart1} follows trivially. Thus below we assume $p_A^n(0) < n +1 -  n \epsilon$, or equivalently,
$p_A^n(0) + \left \lfloor n \epsilon \right
\rfloor \le n$.
For {a fixed $n$ and a given } ${p_A^n(0)}$, we first define a {map} $g$ from $\mathbb{Z}^n$ to $\mathbb{Z}$:
\begin{eqnarray} \label{eq:g}
g(x) = \sum_{i=p_A^n(0)}^{p_A^n(0) + \left \lfloor n \epsilon \right
\rfloor} x_i^+ .
\end{eqnarray}
We use the {map} $g$ to further introduce an auxiliary process $\mathcal{Z}^n$ to track the
number of sell limit orders on price levels from $p_A^n(0)$ to
$p_A^n(0) + \left \lfloor n \epsilon \right \rfloor$. That is, for each
$t \ge 0,$
\begin{eqnarray} \label{eq:Znt}
\mathcal{Z}^n(t) \buildrel \Delta \over = g(\mathcal{X}^n(nt) ) = \sum_{i=p_A^n(0)}^{p_A^n(0) + \left \lfloor n \epsilon \right
\rfloor} [\mathcal{X}^n_i(nt)]^+.
\end{eqnarray}
Write
\begin{equation}
\sigma^n_{\mathcal{Z}^n} \buildrel \Delta \over = \inf \{ t \ge 0 : \mathcal{Z}^n(t) = 0\}.
\end{equation}
Then $\sigma^n_{\mathcal{Z}^n}$ is
the first time that $\mathcal{Z}^n$ reaches $0$ starting from $\mathcal{Z}^n(0)>0$, where $\mathcal{Z}^n(0)>0$ follows from the
fact that $p_A^n(0)\le n$. Thus in order to show (\ref{eq:askpriceconvgpart1}), it suffices to prove
\begin{eqnarray} \label{eq:askpriceconvgpart1'}
\mathbb{P}( \sigma^n_{\mathcal{Z}^n} \le T) \le \delta, \quad \text{when $n>N_{\delta}$.}
\end{eqnarray}

The key idea for proving \eqref{eq:askpriceconvgpart1'} is to construct a $n-$dimensional CTMC $\{\mathcal{W}^n(t): t \ge 0 \}$ on $\mathbb{Z}^n$ that is dominated by $\{\mathcal{X}^n(t): t \ge 0 \}$ in the sense of strong stochastic order for each $n$. That is, $\E r({\mathcal{W}^n(t)}) \le \E r({\mathcal{X}^n(t)})$ for all real-valued increasing functions $r$ on $\mathbb{Z}^n$ and all $t\ge0$. 
To proceed, we set ${\mathcal{W}^n}(0) = {\mathcal{X}^n}(0)$. For convenience, we define two constants:
\begin{eqnarray}
\bar{\Lambda} &\buildrel \Delta \over=& 2 \cdot [ (\max_{x \in [0,1]} \Lambda_A (x)) \vee  (\max_{x \in [0,1]} \Lambda_B (x))] < \infty, \label{eq:lambda-max} \\
\bar \Theta &\buildrel \Delta \over=& (\max_{x \in [0,1]} \Theta_A(x))
\vee (\max_{x \in [0,1]} \Theta_B(x)) >0.\label{eq:theta-max}
\end{eqnarray}
We construct ${\mathcal{W}^n}$ by modifying the transition parameters of the CTMC ${\mathcal{X}^n}$ as follows: (a) set the limit sell order arrival rate to zero; (b) set the limit buy order cancellation rate and the market sell order arrival rate to zero; (c) set the limit buy order arrival rate at each price level to $\bar {\Lambda}$ given in \eqref{eq:lambda-max}, the market buy order arrival rate to $n^{ \kappa} \cdot \Upsilon$ given in Assumption~\ref{ass:rateconverg1}, and the limit sell order cancellation rate per order to $\frac{1}{n} \bar {\Theta} $. 

We next argue ${\mathcal{W}^n}$ is stochastically dominated by ${\mathcal{X}^n}$. We apply the result of
\cite{massey1987stochastic}. 
 \red{Recall a set $\Gamma \subset \mathbb{Z}^n$ is said to be decreasing if $x \in \Gamma$ implies $\{y \in \mathbb{Z}^n: y \le x\} \subset \Gamma$. From the construction of ${\mathcal{W}^n}$, we know that there are no limit sell order arrivals and limit buy order removals (due to cancellation or matches with market sell orders) in ${\mathcal{W}^n}$. This implies that if ${\mathcal{W}^n}$ transits from one state $w \in \mathbb{Z}^n$ to another state $w'\in \mathbb{Z}^n$, then $w' \le w$ with respect to the partial order $\le$ on $\mathbb{Z}^n$.} Thus we deduce from \cite[Theorem~5.3]{massey1987stochastic} that it suffices to show for all decreasing sets $\Gamma \subset \mathbb{Z}^n$ such that $w \notin \Gamma$
\begin{eqnarray} \label{eq:order}
\sum_{z \in \Gamma} q_{\mathcal{W}^n}(w,z) \ge \sum_{z \in \Gamma} q_{\mathcal{X}^n}(x, z) \quad \text{for all $w \le x$,}
\end{eqnarray}
where $q_{\mathcal{W}^n}$ and $q_{\mathcal{X}^n}$ are transition rates for ${\mathcal{W}^n}$ and ${\mathcal{X}^n}$. One readily verifies \eqref{eq:order} from the construction of ${\mathcal{W}^n}$ and the fact that both ${\mathcal{W}^n}$ and ${\mathcal{X}^n}$ have the property that only one component can change its value at each transition.

Given that ${\mathcal{W}^n}$ is stochastically dominated by ${\mathcal{X}^n}$, we are ready to construct a process $\tilde{\mathcal{Z}}^n$ that is stochastically dominated by ${\mathcal{Z}^n}$ given in \eqref{eq:Znt}. We simply set for $t \ge 0$
\begin{eqnarray} \label{eq:tildeZn}
\tilde{\mathcal{Z}}^n (t) \buildrel \Delta \over = g(\mathcal{W}^n(nt)) = \sum_{i=p_A^n(0)}^{p_A^n(0) + \left \lfloor n \epsilon \right
\rfloor} [\mathcal{W}^n_i(nt)]^+ \ge 0.
\end{eqnarray}
Then ${\mathcal{Z}^n}$ dominates $\tilde{\mathcal{Z}}^n$ in strong stochastic order since $g$ in \eqref{eq:g} is an increasing function on $\mathbb{Z}^n$. We then deduce that
\begin{eqnarray*} \label{eq:compare}
 \mathbb{P}( \sigma^n_{\mathcal{Z}^n} \le T) \le  \mathbb{P}( \sigma^n_{\tilde{\mathcal{Z}}^n}\le T),
\end{eqnarray*}
where $\sigma^n_{\tilde{\mathcal{Z}}^n}$ is the first time that $\tilde{\mathcal{Z}}^n$ reaches zero
starting from ${\tilde{\mathcal{Z}}^n}(0) ={{\mathcal{Z}}^n}(0)$, i.e.,
\begin{equation}
\sigma^n_{\tilde{\mathcal{Z}}^n} \buildrel \Delta \over = \inf \{ t \ge 0 : \tilde{\mathcal{Z}}^n(t) = 0\}.
\end{equation}
Hence to show \eqref{eq:askpriceconvgpart1'} it suffices to prove that given
any $\delta>0,$ there exists $N_{\delta}$ such that
\begin{eqnarray} \label{eq:askpriceconvgpart1''}
 \mathbb{P}( \sigma^n_{\tilde{\mathcal{Z}}^n} \le T) \le \delta, \quad \text{when $n
> N_{\delta}$.}
\end{eqnarray}

We now focus on proving (\ref{eq:askpriceconvgpart1''}). \red{One can verify that
the process
$\tilde{\mathcal{Z}}^n$ defined in \eqref{eq:tildeZn} is a pure-death process with absorbing barrier at zero, 
and the death rate of
$\tilde{\mathcal{Z}}^n$ when it is in state $k \in \{1, 2,\ldots \}$ is given by
\begin{eqnarray} \label{eq:deathrate}
n \cdot \big(k \cdot \frac{\bar {\Theta}}{n} +  {n^{\kappa}} \cdot \Upsilon \big).
\end{eqnarray}}
%
%
One can also check from Assumption~\ref{ass:initialcondition}, (\ref{eq:tildeZn}) and (\ref{eq:initialpriceconvg}) that
\begin{eqnarray} \label{eq:Zn0lmt}
\lim_{n \rightarrow \infty} \frac{{\tilde{\mathcal{Z}}^n}(0)}{n^2} = \lim_{n \rightarrow \infty} \frac{1}{n^2} \sum_{i=p_A^n(0)}^{p_A^n(0) + \left \lfloor n \epsilon
\right \rfloor} [\mathcal{X}^n_i(0)]^+
 = \int_{\mathsf{p}}^{\mathsf{p} + \epsilon} \varrho(x)dx >0.
\end{eqnarray}
Since $\tilde {\mathcal{Z}}^n$ is a pure death process with absorbing state zero, we obtain
\begin{eqnarray} \label{eq:sigmatildeZn}
\sigma^n_{\tilde{\mathcal{Z}}^n} = \sum_{k=1}^{{\tilde{\mathcal{Z}}^n}(0)} D_k,
\end{eqnarray} 
where $D_k$ represents the first passage time that ${\tilde{\mathcal{Z}}^n}$ starts from state $k$ and reaches $k-1$, so $D_k$ is an exponential random variable with rate $k \cdot{\bar {\Theta}} + {n^{1+ \kappa}} \Upsilon$ given in \eqref{eq:deathrate}. In addition, all the $D_k$'s
are mutually independent. Thus we have
\begin{eqnarray*}
\E [ \sigma^n_{\tilde{\mathcal{Z}}^n}] &=& \E \Big[\sum_{k=1}^{{\tilde{\mathcal{Z}}^n}(0) } D_k \Big] = \sum_{k=1}^{{\tilde{\mathcal{Z}}^n}(0) }
\frac{1}{k \cdot{\bar {\Theta}} + {n^{1+ \kappa }} \cdot \Upsilon} , \\
Var( \sigma^n_{\tilde{\mathcal{Z}}^n} ) &=&  \sum_{k=1}^{{\tilde{\mathcal{Z}}^n}(0)} Var(D_k) =
\sum_{k=1}^{\tilde{\mathcal{Z}}^n(0)} \frac{1}{(k \cdot{\bar {\Theta}} + {n^{1+ \kappa}} \cdot \Upsilon)^2} . 
\end{eqnarray*}
Since $\kappa <1$, we deduce from \eqref{eq:Zn0lmt} that there exists $C$ independent of $n$ such that
\begin{eqnarray}
\lim_{n \rightarrow \infty} \frac{1}{ \ln n} \E [\sigma^n_{\tilde{\mathcal{Z}}^n}] = \frac{1-\kappa}{\bar{\Theta}} >0 , \quad \mbox{and} \quad \sup_{n }  Var( \sigma^n_{\tilde{\mathcal{Z}}^n}) &\le & C. \label{eq:2mmtsigmaZ}
\end{eqnarray} 
\red{Thus for all $n$ large enough, we obtain $\E [\sigma^n_{\tilde{\mathcal{Z}}^n}] > T$. It follows that
\begin{eqnarray*}
 Var( \sigma^n_{\tilde{\mathcal{Z}}^n}) &\ge& \E \Big[ \big( \E [\sigma^n_{\tilde{\mathcal{Z}}^n}] - \sigma^n_{\tilde{\mathcal{Z}}^n} \big)^2:  \sigma^n_{\tilde{\mathcal{Z}}^n} \le T \Big] \\
 &\ge & \E \Big[ \big( \E [\sigma^n_{\tilde{\mathcal{Z}}^n}] - T \big)^2:  \sigma^n_{\tilde{\mathcal{Z}}^n} \le T \Big]\\
 & = & \big( \E [\sigma^n_{\tilde{\mathcal{Z}}^n}] - T \big)^2 \cdot  \mathbb{P}( \sigma^n_{\tilde{\mathcal{Z}}^n} \le T).
\end{eqnarray*}}
Therefore we deduce that when $n$ is large,
\begin{eqnarray*}
 \mathbb{P}( \sigma^n_{\tilde{\mathcal{Z}}^n } \le T) &\le &  \frac{1}{ \Big( \E [\sigma^n_{\tilde{\mathcal{Z}}^n}] - T\Big)^2 } Var( \sigma^n_{\tilde{\mathcal{Z}}^n}).
\end{eqnarray*}
This yields (\ref{eq:askpriceconvgpart1''}) after applying \eqref{eq:2mmtsigmaZ}. Thus we have completed the proof of \eqref{eq:askpriceconvgpart1}.

We next show that for $n$ large,
\begin{eqnarray} \label{eq:askpriceconvgpart2}
\mathbb{P}\Big( {p_A^n(0)} - \mathop {\inf }\limits_{0 \le t \le T} {p_A^n(nt)}
> n \epsilon \Big) \le \delta.
\end{eqnarray}
Note that $p_B^n(\cdot)$ is smaller than $p_A^n(\cdot)$ at each time, we have
\begin{eqnarray*}
{p_A^n(0)} - \inf_{0 \le t \le T} {p_A^n(nt)} &\le& {p_A^n(0)} -
\inf_{0 \le t \le T} {p_B^n(nt)} \\
&=& {p_B^n(0)} - \inf_{0 \le t \le T} {p_B^n(nt)} + {p_A^n(0)} -
{p_B^n(0)}.
\end{eqnarray*}
Now (\ref{eq:initialpriceconvg}) implies that for $n$ large,
\begin{eqnarray*}
\mathbb{P} \Big( {p_A^n(0)} - {p_B^n(0)}  > \frac{n \epsilon}{2} \Big) \le
\frac{\delta}{2}.
\end{eqnarray*}
Thus it suffices to prove that for $n$ large
\begin{eqnarray*}
\mathbb{P} \Big( {p_B^n(0)} - \inf_{0 \le t \le T} {p_B^n(nt)} > \frac{n \epsilon}{2} \Big) \le
\frac{\delta}{2}.
\end{eqnarray*}
This follows from a similar argument for
(\ref{eq:askpriceconvgpart1}). Thus we have \eqref{eq:askpriceconvgpart2}. On combining (\ref{eq:askpriceconvg2parts}), (\ref{eq:askpriceconvgpart1}), and
(\ref{eq:askpriceconvgpart2}), we obtain (\ref{eq:askpriceconvg}). Therefore the proof is complete.
\end{proof}

\subsection{Proof of part (b) of Theorem~\ref{lem:ode}}
In this section we prove part (b) of Theorem~\ref{lem:ode}. We rely on the results in Section~\ref{sec:prelim} and characterize {the limit point} $(\zeta^+, \zeta^-)$ in \eqref{eq:pinconvgpi}.

\begin{proof}
{Assume $(\zeta^+, \zeta^-)$ is a limit point as in Proposition~\ref{lem:limitpointpi}.}
In view of Proposition~\ref{lem:limitpointpi}, to show that $\zeta^+_t$ (respectively $\zeta^-_t$) is absolutely continuous with density $\varphi^+(\cdot, t)$ (respectively $\varphi^-(\cdot, t)$),
it suffices to show that
 for any $f \in C([0,1])$ and $t \in [0, T]$, we have
\begin{eqnarray}
\langle \zeta^+_t, f \rangle =  \int_{0}^{1} f(x) \varphi^+(x,t) dx, \label{eq:zetaplusandminus1}\\
\langle \zeta^-_t, f \rangle =  \int_{0}^{1} f(x) \varphi^-(x,t) dx, \label{eq:zetaplusandminus2}
\end{eqnarray}
where $\varphi^+ = \max\{\varphi, 0\}$, $\varphi^- = \max\{-\varphi, 0\}$ and $\varphi$ is uniquely determined by \eqref{eq:varphiinitial}--\eqref{eq:varphip}.

\red{To this end, we first note that $\langle \zeta^{n_k, +}_{t}, f\rangle \Rightarrow \langle \zeta^+_{t}, f \rangle$ as $n_k \rightarrow \infty$ for fixed time $t \in [0, T]$. This is true because $\langle \zeta^{n_k, +}, f\rangle \Rightarrow \langle \zeta^+, f \rangle$ in \eqref{eq:pinconvgpi-f} and
almost every path of the limit point $\langle \zeta^+, f \rangle$ is continuous by Proposition~\ref{lem:tightposneg0}. Similarly, we deduce from \eqref{eq:convgzetaf} that
$\langle \zeta^{n_k}_{t}, f_{\gamma} \rangle \Rightarrow \langle \zeta_{t}, f_{\gamma} \rangle$ for $f_{\gamma} \in C[0,1]$ introduced in \eqref{eq:fgamma} and $t \in [0, T]$. Since $\zeta$ is deterministic by Proposition~\ref{lem:limitpointpi}, we obtain that $\langle \zeta^{n_k}_{t}, f_{\gamma} \rangle$ converges in probability to $\langle \zeta_{t}, f_{\gamma} \rangle$. Therefore, we have for fixed $\gamma>0$,}
\begin{eqnarray*}
\langle \zeta^{n_k, +}_{t}, f \rangle -  \langle \zeta^{n_k}_{t}, f_{\gamma} \rangle \Rightarrow  \langle \zeta^{+}_{t}, f \rangle -  \langle \zeta_{t}, f_{\gamma} \rangle \quad \text{as $n_k \rightarrow \infty$.}
\end{eqnarray*}
Using Theorem~3.4 in \cite{bil99}, we obtain from Lemma~\ref{lem:boundEdiff} that there exists a constant $C>0$ such that for each $\gamma \in (0, \mathsf{p})$ and $t \in [0, T]$
\begin{eqnarray} \label{eq:zetafbound}
{\E  \left|  \langle \zeta^{+}_{t}, f \rangle -  \langle \zeta_{t}, f_{\gamma} \rangle \right|} &\le&
\mathop {\lim\inf}_{n_k \rightarrow \infty}{\E  \left|  \langle \zeta^{n_k, +}_{t}, f \rangle -  \langle \zeta^{n_k}_{t}, f_{\gamma} \rangle \right|} \nonumber \\
&\le& \mathop {\liminf}_{n_k \rightarrow \infty}  (\tilde a_{n_k} +  C \gamma) \nonumber \\
& = &  C \gamma.
\end{eqnarray}
In addition, since $\zeta_t$ is a deterministic signed measure with bounded density function by Proposition~\ref{lem:limitpointpi} and $\{f_{\gamma}: \gamma \in (0, \mathsf{p}) \}$ is uniformly bounded on $[0,1]$, one infers from the
dominated convergence theorem and \eqref{eq:fgamma-limit} that
\begin{eqnarray}\label{eq:1}
\lim_{\gamma \rightarrow 0+} {  \langle \zeta_{t}, f_{\gamma} \rangle } = \langle \zeta_{t}, \lim_{\gamma \rightarrow 0+} f_{\gamma} \rangle
=  \int_{\mathsf{p}}^{1} f(x) \varphi(x,t) dx .
\end{eqnarray}
Since $\varphi$ satisfies \eqref{eq:varphiinitial}-\eqref{eq:varphip}, and $\rho$ satisfies \eqref{eq:gammap}, we deduce that
$\varphi$ is non-negative on $[\mathsf{p},1]$ and non-positive on $[0, \mathsf{p}]$. This implies
\begin{eqnarray}\label{eq:2}
\int_{\mathsf{p}}^{1} f(x) \varphi(x,t) dx = \int_{0}^{1} f(x) \varphi^+(x,t) dx.
\end{eqnarray}
Therefore, we find from \eqref{eq:zetafbound} that
\begin{eqnarray} \label{eq:zetafbound2}
{\E  \left|  \langle \zeta^{+}_{t}, f \rangle -  \int_{0}^{1} f(x) \varphi^+(x,t) dx \right|}
& \le &   {\E  \left|  \langle \zeta^{+}_{t}, f \rangle -  \langle \zeta_{t}, f_{\gamma} \rangle \right|} +
{  | \langle \zeta_{t}, f_{\gamma} \rangle - \int_{0}^{1} f(x) \varphi^+(x,t) dx |} \nonumber \\
& \le & C \gamma + {  | \langle \zeta_{t}, f_{\gamma} \rangle - \int_{0}^{1} f(x) \varphi^+(x,t) dx |} .
\end{eqnarray}
Let $\gamma \rightarrow 0+$ in inequality (\ref{eq:zetafbound2}), we conclude from \eqref{eq:1} and \eqref{eq:2} that
\begin{eqnarray*}
{\E  \left|  \langle \zeta^{+}_{t}, f \rangle - \int_{0}^{1} f(x) \varphi^+(x,t) dx \right|}= 0.
\end{eqnarray*}
Thus (\ref{eq:zetaplusandminus1}) holds with probability one. A similar argument yields (\ref{eq:zetaplusandminus2}). The proof is therefore complete.
\end{proof}

\section*{Acknowledgement}
We are greatful to Jim Dai, Ton Dieker, Rama Cont and Yu Gu
for helpful discussions.  Xuefeng Gao acknowledges support from ECS Grant 2191081, and
CUHK Direct Grants 4055035 and 4055054. S.J. Deng is supported in part by a PSERC grant.

\newpage

\appendix

\section{Proof of Lemma~\ref{lem:boundEdiff}} \label{sec:lem-boundEdiff}
In this section we prove Lemma~\ref{lem:boundEdiff} in Section~\ref{sec:tech-lemma}.
To do so, we first introduce a result on bounding the (scaled) total number of limit orders within a certain price range of the order book. The proof is deferred to the end of this appendix. 

\begin{lemma} \label{lem:totalorder}
Fix $T>0$. There exists a positive constant $C$ which depends on $T$
but is independent of $n$ such that for any fixed interval $[a,b] \subset [0,1]$, we have
\begin{eqnarray}
\mathop {\limsup}_{n \rightarrow \infty} \E \left[ \sup_{0 \le t \le T} \frac{1}{n^{2}}
\sum_{i: \frac{i}{n} \in [a,b]} |\mathcal{X}^n_{i} (nt ) | \right]^2\le C^2 (b-a)^2  \label{eq:totalorder2},
\end{eqnarray}
and
\begin{eqnarray}
\mathop  {\limsup}_{n \rightarrow \infty} \E \left[ \sup_{0 \le t \le T} \frac{1}{n^{2}}
\sum_{i: \frac{i}{n} \in [a,b]} |\mathcal{X}^n_{i} (nt ) |
\right]\le C (b-a) \label{eq:totalordersmall}.
\end{eqnarray}
In addition for any $\sigma \in [0,T]$ we have
\begin{eqnarray}
\mathop {\lim\sup}_{n \rightarrow \infty}  \E \left[ \sup_{0 \le s \le T - \sigma} \int_s^{s + \sigma} \frac{1}{n^{2}} \sum_{k=1}^{n}
|\mathcal{X}^n_k(n u)| du \right] \le C \sigma.  \label{eq:inttotal}
\end{eqnarray}
\end{lemma}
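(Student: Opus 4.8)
The plan is to dominate, pathwise, the total unsigned order count in a price window by its initial value plus the number of in-window limit-order arrivals, and then to bound that arrival count by a Poisson process. Fix $[a,b]\subseteq[0,1]$ and set $N^n_{[a,b]}(t)=\sum_{i:\,i/n\in[a,b]}|\mathcal X^n_i(t)|$, so that the quantity appearing in \eqref{eq:totalorder2}--\eqref{eq:inttotal} is $n^{-2}N^n_{[a,b]}(nt)$. Inspecting the six terms of the generator $\mathcal L_n$ in \eqref{eq:generator} together with the sign constraints $x_k\le 0$ for $k<\mathcal P^n_A(x)$ and $x_k\ge 0$ for $k>\mathcal P^n_B(x)$, one checks that at each transition exactly one coordinate $x_k$ changes by $\pm1$, that a limit-order arrival at price level $k$ always increases $|x_k|$ by $1$, and that every cancellation and every market-order match decreases some $|x_k|$ by $1$. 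Hence $N^n_{[a,b]}$ is a unit-jump process that increases only at limit-order arrivals occurring at prices in $[a,b]$, so
\[
N^n_{[a,b]}(t)\le N^n_{[a,b]}(0)+A^n_{[a,b]}(t)\qquad (t\ge 0),
\]
where $A^n_{[a,b]}$ counts those arrivals.

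To control $A^n_{[a,b]}$, note that $[a,b]$ contains at most $n(b-a)+1$ price levels and that, by Assumption~\ref{ass:rateconverg1}, each level receives limit buy and limit sell orders at rates bounded by $\Lambda_{\max}:=\max_{j\in\{A,B\}}\sup_{x\in[0,1]}\Lambda_j(x)<\infty$; thus the stochastic intensity of $A^n_{[a,b]}$ is at all times at most $\Lambda^*_n:=2\Lambda_{\max}(n(b-a)+1)$. By the standard thinning (uniformization) construction of the CTMC one may realize $A^n_{[a,b]}$ below a Poisson process $\Pi^n$ of rate $\Lambda^*_n$, so $A^n_{[a,b]}(t)\le\Pi^n(t)$ a.s.\ with $\Pi^n(t)\sim\mathrm{Poisson}(\Lambda^*_n t)$. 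Since $t\mapsto N^n_{[a,b]}(0)+\Pi^n(nt)$ is nondecreasing, the displayed bound gives $\sup_{0\le t\le T}n^{-2}N^n_{[a,b]}(nt)\le n^{-2}\big(N^n_{[a,b]}(0)+\Pi^n(nT)\big)$; squaring, taking expectations, and using $\E[\Pi^n(nT)^2]=(\Lambda^*_n nT)^2+\Lambda^*_n nT$ together with the Riemann-sum limit $n^{-2}N^n_{[a,b]}(0)=n^{-1}\sum_{i:\,i/n\in[a,b]}|\varrho(i/n)|\to\int_a^b|\varrho(x)|\,dx$ (valid because $\varrho$ is bounded with at most one discontinuity), one gets $\limsup_n\E[\sup_{t\le T}n^{-2}N^n_{[a,b]}(nt)]^2\le\big(\int_a^b|\varrho|+2\Lambda_{\max}T(b-a)\big)^2\le C^2(b-a)^2$ with $C:=\|\varrho\|_\infty+2\Lambda_{\max}T$, which is \eqref{eq:totalorder2}; then \eqref{eq:totalordersmall} follows by Jensen's inequality. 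For \eqref{eq:inttotal} take $[a,b]=[0,1]$: by monotonicity of $u\mapsto N^n_{[0,1]}(0)+\Pi^n(nu)$ we have $\int_s^{s+\sigma}n^{-2}N^n_{[0,1]}(nu)\,du\le\sigma\,n^{-2}\big(N^n_{[0,1]}(0)+\Pi^n(nT)\big)$ uniformly in $s\in[0,T-\sigma]$, and taking expectations and $\limsup_n$ yields $\sigma\big(\int_0^1|\varrho|+2\Lambda_{\max}T\big)\le C\sigma$ with the same $C$.

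The moment computations and the Riemann-sum limit are routine. The two points needing care are: (i) verifying from the explicit form of $\mathcal L_n$ that $N^n_{[a,b]}$ never increases except at in-window limit arrivals, which relies on the state coordinates being nonpositive below the best ask and nonnegative above the best bid; and (ii) justifying that $A^n_{[a,b]}$ is stochastically dominated by a Poisson process, which I would obtain from the thinning coupling above rather than by manipulating compensators, and which I regard as the only (mild) obstacle.
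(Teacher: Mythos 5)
Your proposal is correct and follows essentially the same route as the paper: bound $\sum_{i:\,i/n\in[a,b]}|\mathcal X^n_i(nt)|$ pathwise by its initial value plus the number of in-window limit-order arrivals, dominate the latter by a Poisson process of aggregate rate $\approx 2\Lambda_{\max}n(n(b-a)+1)$ (the paper builds this as a sum of per-level Poisson dominators $\bar A^n_i$ of rate $\bar\Lambda=2\Lambda_{\max}$, which is cosmetically different from your single aggregated $\Pi^n$ but equivalent), compute second moments, deduce (\ref{eq:totalordersmall}) by Jensen, and get (\ref{eq:inttotal}) from the trivial bound of an integral over a window of length $\sigma$ by $\sigma$ times the supremum. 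Your extra care in item (i) — checking from $\mathcal L_n$ and the sign structure that every non–limit-arrival transition decreases some $|x_k|$ — is a welcome elaboration of a step the paper asserts without comment.
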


We now prove Lemma~\ref{lem:boundEdiff}.
\begin{proof}[Proof of Lemma~\ref{lem:boundEdiff}]
We focus on proving \eqref{eq:bounddiffresult}. The proof for \eqref{eq:bounddiffresult2} follows similarly and hence is omitted. Throughout the proof, we use a generic constant $C$ that may depend on $T$ and may vary from line to line, but $C$ is independent of $n$.

From the definitions of $\zeta^{n,+}, \zeta^{n}$ and $f_{\gamma}$ (see \eqref{eq:empirialshapeplus}, \eqref{eq:empirialshape} and \eqref{eq:fgamma}) we derive
\begin{eqnarray} \label{eq:bounddiff}
 \lefteqn{\E \Big[\sup_{t\in [0, T]} \left|  \langle \zeta^{n,+}_{t}, f \rangle -  \langle \zeta^{n}_{t}, f_{\gamma} \rangle \right| \Big]} \nonumber \\
 &=&
 \frac{1}{n^{2}}\E \left[ \sup_{t\in [0, T]} \left|  \sum_{i> p_B^n(nt)} \mathcal{X}^n_i(nt) f(\frac{i}{n}) - \sum_{i=1}^{n} \mathcal{X}^n_i(nt) f_{\gamma}(\frac{i}{n}) \right| \right] \nonumber \\
 &=& \frac{1}{n^{2}} \E \left[ \sup_{t\in [0, T]} \left|  \sum_{i> p_B^n(nt)} \mathcal{X}^n_i(nt) f(\frac{i}{n}) - \sum_{i\ge n \mathsf{p}} \mathcal{X}^n_i(nt) f(\frac{i}{n}) - \sum_{i \in (n\mathsf{p}-n\gamma, n\mathsf{p})} \mathcal{X}^n_i(nt) f_{\gamma}(\frac{i}{n}) \right| \right] \nonumber\\
 &\le& \frac{1}{n^{2}} \E \left[\sup_{t\in [0, T]} \left|  \sum_{i> p_B^n(nt)} \mathcal{X}^n_i(nt) f(\frac{i}{n}) - \sum_{i\ge n \mathsf{p}} \mathcal{X}^n_i(nt) f(\frac{i}{n}) \right| \right] \nonumber \\
 & &+ \frac{1}{n^{2}} \E \left[ \sup_{t\in [0, T]} \left| \sum_{i \in (n\mathsf{p}-n\gamma, n\mathsf{p})} \mathcal{X}^n_i(nt) f_{\gamma}(\frac{i}{n}) \right| \right].
  \end{eqnarray}

 \red{We now bound the two terms after the inequality sign in (\ref{eq:bounddiff}).
We first bound the second term, i.e., the last display in (\ref{eq:bounddiff}).}
Since the family of functions $f_{\gamma}$ are constructed such that $f_{\gamma}$ are uniformly bounded by some constant $C$ on $[0,1]$ for all $\gamma \in (0, \mathsf{p})$, the second term in (\ref{eq:bounddiff}) is bounded above by
\begin{displaymath}
\frac{C}{n^{2}} \cdot \E \left[\sup_{t\in [0, T]} \sum_{i \in (n\mathsf{p}-n\gamma, n\mathsf{p})} |\mathcal{X}^n_i(nt)| \right],
\end{displaymath}
which is further bounded above by $C \gamma$ for large $n$ by (\ref{eq:totalordersmall}).

We next bound the first term, which we denote by $\tilde a_n$. It is clear that $\tilde a_n \ge 0$, so we only have to show that $\lim_{n \rightarrow \infty} \tilde a_n = 0.$ We split it into two parts and study them separately. Fix some $\delta>0$. Since $f$ is continuous, it is bounded by some constant $C$ on $[0,1]$. We then deduce that
\begin{eqnarray}
 \lefteqn{\frac{1}{n^{2}} \E \left[ \sup_{t\in [0, T]} \left|  \sum_{i> p_B^n(nt)} \mathcal{X}^n_i(nt) f(\frac{i}{n}) - \sum_{i\ge n\mathsf{p}} \mathcal{X}^n_i(nt) f(\frac{i}{n}) \right|; \sup_{t \in [0,T]} \left|\frac{p_B^n(nt)}{n}-\mathsf{p} \right| \ge \delta \right] }\nonumber \\
 &\le&
2C \cdot \E \left[ \sup_{0 \le s \le T} \sum_{i=1}^{n} \frac{1}{n^{2}}
\left| \mathcal{X}^n_i(ns) \right|; \sup_{ t \in [0, T]}
\left|\frac{p_B^n(nt)}{n}-\mathsf{p} \right| \ge \delta \right] \buildrel \Delta \over =  b_{n,\delta}\label{eq:bounddiff11}.
\end{eqnarray}
Now (\ref{eq:totalorder2}) in Lemma~\ref{lem:totalorder} implies that the
sequence $\{ \sup_{0 \le s \le T} \sum_{k=1}^{n} \frac{1}{n^{2}}
\left| \mathcal{X}^n_k(ns) \right|: n \ge 1\}$ is uniformly bounded in $L^2$,
thus $\{\sup_{0 \le s \le T} \sum_{k=1}^{n} \frac{1}{n^{2}} \left|
\mathcal{X}^n_k(ns) \right|: n \ge 1\}$ is uniformly integrable. Thus we obtain from part (a) of Theorem~\ref{lem:ode} and \eqref{eq:bounddiff11} that for any $\delta>0$,
\begin{eqnarray} \label{eq:bndelta}
\lim_{n \rightarrow \infty} b_{n,\delta} =0.
\end{eqnarray}
We next proceed to bound
\begin{eqnarray*}
 \frac{1}{n^{2}} \E \left[ \sup_{t\in [0, T]} \left|  \sum_{i> p_B^n(nt)} \mathcal{X}^n_i(nt) f(\frac{i}{n}) - \sum_{i\ge n\mathsf{p}} \mathcal{X}^n_i(nt) f(\frac{i}{n}) \right|; \sup_{t \in [0,T]} \left|\frac{p_B^n(nt)}{n}-\mathsf{p} \right| < \delta \right] \label{eq:bounddiff12}.
\end{eqnarray*}
We can choose small $\delta>0$ such that $ (\mathsf{p}-\delta, \mathsf{p}+\delta) \subset (0,1)$. The above term is then bounded above by
\begin{displaymath}
 \frac{C}{n^{2}}\cdot \E \left[ \sup_{t\in [0, T]} \sum_{i \in (n\mathsf{p}-n\delta, n\mathsf{p} + n \delta)} |\mathcal{X}^n_i(nt)| \right],
\end{displaymath}
which is bounded above by $C \delta$ for $n$ large using (\ref{eq:totalordersmall}).
Upon combining the two parts we find for large $n$
\begin{eqnarray*}
0 \le \tilde a_n \le b_{n, \delta}+  C \delta \quad \text{for sufficiently small $\delta>0$.}
\end{eqnarray*}
Fix $\delta>0$ and let $n \rightarrow \infty$. We infer from \eqref{eq:bndelta} that
\begin{eqnarray*}
0 \le \mathop{\limsup}_{n \rightarrow \infty}\tilde a_n \le   C \delta.
\end{eqnarray*}
Let $\delta \rightarrow 0+$, we obtain
$\lim_{n \rightarrow \infty}\tilde a_n =0$. Therefore the proof is complete.
\end{proof}

\begin{proof} [Proof of Lemma~\ref{lem:totalorder}]
It suffices to prove
(\ref{eq:totalorder2}) since it is clear that (\ref{eq:totalorder2}) implies \eqref{eq:totalordersmall} \red{by the Jensen's inequality. Inequality \eqref{eq:totalordersmall}} further leads to (\ref{eq:inttotal}) by observing that \red{
\begin{eqnarray*}
 \E \left[ \sup_{0 \le s \le T - \sigma} \int_s^{s + \sigma} \frac{1}{n^{2}} \sum_{k=1}^{n}
|\mathcal{X}^n_k(n u)| du \right] &\le &  \E \left[ \sup_{0 \le s \le T - \sigma} \int_s^{s + \sigma} \sup_{u\in [s, s+\sigma]}\frac{1}{n^{2}} \sum_{k=1}^{n}
|\mathcal{X}^n_k(n u)| du \right] \\
&\le & \sigma \cdot \E \left[ \sup_{0 \le s \le T - \sigma} \sup_{u\in [s, s+\sigma]}\frac{1}{n^{2}} \sum_{k=1}^{n}
|\mathcal{X}^n_k(n u)| \right] \\
&= & \sigma \cdot \E \left[\sup_{u \le T} \frac{1}{n^2}\sum_{k=1}^{n}
|\mathcal{X}^n_k(nu)| \right].
\end{eqnarray*}}

Thus, the rest of the proof focus on establishing (\ref{eq:totalorder2}).
For fixed $n$ and $i \in \{1, \ldots, n\}$, we write $A^n_i(t)$ for the total number of new limit orders submitted at price level $i/n$ by time $t$. One can construct independent Poisson processes $\Bar{A}^n_1, \ldots, \Bar{A}^n_n $ with rate $ \Bar{\Lambda}$ given in \eqref{eq:lambda-max} such that for each $i$, $\Bar{A}^n_i$
is defined on the probability space which $A^n_i$ lives in, and
\begin{eqnarray} \label{eq:tildeAstoc}
\Bar{A}^n_i (t, \omega) \ge {A}^n_i (t, \omega) \quad \text{for every sample
path $\omega$ and every $t \ge 0$.}
\end{eqnarray}
Set
\begin{eqnarray*}
\Bar {\mathcal{{E}}}^n(t) \buildrel \Delta \over = \sum_{i: \frac{i}{n} \in [a,b]} \Bar {A}^n_i (nt) \quad \mbox{and} \quad
\mathcal{N}^n(t) = \sum_{i: \frac{i}{n} \in [a,b]} |\mathcal{X}^n_{i} (nt )|.
\end{eqnarray*}
We immediately get that $\Bar{\mathcal{{E}}}^n(\cdot)$ is a Poisson process with rate $r_n$ where
\begin{eqnarray} \label{eq:rn}
r_n \buildrel \Delta \over = n \Bar{\Lambda} \cdot (\lfloor nb \rfloor- \lceil na \rceil +1) \le n \Bar{\Lambda} \cdot ( nb -  na +1),
\end{eqnarray}
and
\[\sup_{0 \le t \le T} {\mathcal{{N}}}^n (t) \le {\mathcal{{N}}}^n(0) + \Bar{\mathcal{{E}}}^n (T) \quad \text{for every sample path.} \]
This implies that
\begin{eqnarray} \label{eq:Ntbound}
\E [\sup_{0 \le t \le T} \mathcal{N}^n (t)]^2 \le \E  [\mathcal{N}^n (0) +  \Bar {\mathcal{{E}}}^n (T)]^2 \le
2 \E [\mathcal{N}^n(0)]^2 + 2 \E [ \Bar{\mathcal{E}}^n(T)]^2.
\end{eqnarray}
One readily checks from Assumption~\ref{ass:initialcondition} that
\begin{eqnarray} \label{eq:totalinitialnsqr}
\frac{1}{n^{4}}\E [\mathcal{N}^n(0)]^2 = 
\Big( \frac{1}{n} \sum_{i: \frac{i}{n} \in [a,b]} |\varrho(i/n)| \Big)^2 \le \left[C {\Bigl((b-a)+ \frac{1}{n}\Bigr)}\right]^2,
\end{eqnarray}
where we use the fact that $\varrho$ is bounded by some constant $C$ on $[0,1]$.
To bound $\E [\Bar {\mathcal{E}}^n(T)]^2$ in (\ref{eq:Ntbound}),
we note that $\Bar{\mathcal{E}}^n$ is a Poisson process with rate $r_n.$ Thus we
have
\begin{eqnarray} \label{eq:tildeEbound}
\E [\Bar {\mathcal{E}}^n(T)]^2 = r_n T + r_n^{2} T^2.
\end{eqnarray}
One readily verifies from \eqref{eq:rn} and \eqref{eq:tildeEbound} that
\begin{eqnarray} \label{eq:tildeEbound2}
\mathop{\lim \sup}_{n \rightarrow \infty} \frac{1}{n^4}\E [\Bar{\mathcal{E}}^n(T)]^2 = {\bar \Lambda^2} \cdot (b-a)^2 {T^2}.
\end{eqnarray}
Now (\ref{eq:totalorder2}) follows from (\ref{eq:Ntbound}),
(\ref{eq:totalinitialnsqr}) and (\ref{eq:tildeEbound2}). The proof is complete.
\end{proof}

\section{Proof of Proposition~\ref{lem:tightposneg0}}\label{sec:appendixA}
It is the goal of this appendix to establish the tightness of $\{(\zeta^{n,+}, \zeta^{n,-}) : n \ge 1\}$,
i.e., to prove Proposition~\ref{lem:tightposneg0}. Our approach is to first show the tightness of $\{ \zeta^{n, +}: n \ge 1 \}$ and $\{ \zeta^{n,-}: n \ge 1 \}$ individually, and then obtain the joint tightness by proving the limit points of $\{ \zeta^{n, +}: n \ge 1 \}$ and $\{ \zeta^{n,-}: n \ge 1 \}$ are concentrated on the set of continuous paths. Throughout the proofs in this section, we use a generic constant $C$ which may vary from line to line, but $C$ is independent of $n$.

We start with several auxiliary lemmas. The next lemma says that in order to establish the tightness of $\{ \zeta^{n, +}: n \ge 1 \}$ and $\{ \zeta^{n,-}: n \ge 1 \}$, it suffices to show \red{that,}
for every $f \in C[0,1]$, the sequences of real-valued processes $\{ \langle \zeta^{n,+}, f \rangle: n \ge 1 \}$ and $\{ \langle \zeta^{n,-}, f \rangle: n \ge 1 \}$ are both
 tight in $\mathbb{D}([0, T]; \mathbb{R})$. This lemma can be found in~\cite[Chapter 4, Proposition 1.7]{kipnis1999scaling}.

\begin{lemma} \label{lem:tightmeasuretoR}
A family of nonnegative measure-valued processes $\{\nu^n_{}: n \ge 1 \}$
is tight in $\mathbb{D}([0, T], \mathcal{M^+}[0,1])$ if $\{ \langle \nu^n_{}, f \rangle: n \ge 1 \}$ is tight in $\mathbb{D}([0, T]; \mathbb{R})$ for every $f \in C[0,1]$.
\end{lemma}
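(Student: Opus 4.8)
The plan is to prove the abstract tightness criterion of Lemma~\ref{lem:tightmeasuretoR} directly from general theory, essentially recalling the argument of \cite[Chapter~4, Proposition~1.7]{kipnis1999scaling}. The key point is that the metric $d_+$ on $\mathcal{M}^+([0,1])$ defined in \eqref{eq:metricplus} is built from a countable collection of test functions $\{\phi_k\}$ that is dense in $C([0,1])$, so convergence in $d_+$ is controlled by the pairings $\langle\cdot,\phi_k\rangle$. First I would reduce the problem to two ingredients via an Aldous--Kurtz (or Jakubowski) style criterion: (i) a compact containment condition, i.e.\ for every $\eta>0$ there is a compact set $K_\eta\subset\mathcal{M}^+([0,1])$ with $\inf_n\Prob(\nu^n_t\in K_\eta\text{ for all }t\le T)\ge 1-\eta$; and (ii) control of the oscillations, which here follows from the tightness of each real-valued process $\{\langle\nu^n,\phi_k\rangle:n\ge1\}$ in $\D([0,T],\R)$.

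For step (i), I would use that in $\mathcal{M}^+([0,1])$ a set is relatively compact iff the total masses $\nu([0,1])$ are uniformly bounded (since $[0,1]$ is compact, tightness of the measures themselves is automatic, so relative compactness in the weak topology reduces to a uniform mass bound by Prohorov's theorem for measures on a compact space). Taking $f\equiv 1\in C[0,1]$, the hypothesis gives tightness of $\{\langle\nu^n,1\rangle:n\ge1\}=\{\nu^n([0,1]):n\ge1\}$ in $\D([0,T],\R)$, which in particular yields $\sup_n\Prob(\sup_{t\le T}\nu^n_t([0,1])>M)\to 0$ as $M\to\infty$; the set $\{\mu:\mu([0,1])\le M\}$ is compact in $\mathcal{M}^+([0,1])$, giving the compact containment condition. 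For step (ii), the modulus-of-continuity/oscillation control transfers coordinatewise: if $\nu^n$ fails to have small $d_+$-oscillation on a small time interval, then by the definition \eqref{eq:metricplus} at least one of the finitely many leading coordinates $\langle\nu^n,\phi_k\rangle$, $k\le k_0$, must have a non-negligible oscillation, and each of those is tight in $\D([0,T],\R)$ by assumption; a union bound over $k\le k_0$ and a tail estimate $\sum_{k>k_0}2^{-k}$ finishes it. Combining (i) and (ii) with the standard tightness criterion in $\D([0,T],\mathcal{E})$ for a Polish space $\mathcal{E}$ (e.g.\ \cite[Chapter~3]{bil99}) gives tightness of $\{\nu^n:n\ge1\}$.

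The main obstacle — really the only non-bookkeeping point — is making the oscillation transfer rigorous: one must pass from simultaneous control of countably many scalar oscillations to control of the single $d_+$-oscillation, uniformly in $n$. The clean way is to truncate the series \eqref{eq:metricplus} at an index $k_0=k_0(\eta)$ so that $\sum_{k>k_0}2^{-k}<\eta/2$ (using that each summand is bounded by $2^{-k}$), and then observe that $d_+(\nu^n_s,\nu^n_t)<\eta$ whenever $\max_{k\le k_0}|\langle\nu^n_s,\phi_k\rangle-\langle\nu^n_t,\phi_k\rangle|$ is small enough. Since this reference is cited verbatim in the excerpt, I would actually keep the write-up short: state that the result is \cite[Chapter~4, Proposition~1.7]{kipnis1999scaling} and sketch only the two ingredients above, rather than reproving the general $\D$-space tightness machinery. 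No genuinely new estimate is needed beyond what the cited reference and \cite{bil99} already provide.
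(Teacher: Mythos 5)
Your proposal is correct and matches the paper exactly: the paper does not prove this lemma but simply cites \cite[Chapter~4, Proposition~1.7]{kipnis1999scaling}, which is precisely the route you settle on. Your sketch of the two ingredients (compact containment via $f\equiv 1$ plus a truncation of the series defining $d_+$ to transfer scalar oscillation control to the metric) is an accurate account of the argument behind that reference.
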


The next lemma is Proposition~VI.3.26 in \cite{Shiryaev03}. 


\begin{lemma} \label{lem:relativecompact0}
 For fixed $T>0$, let
$\{\mathbb{X}_{}^n : n \ge 1\}$ be a sequence of processes that take values in $\mathbb{D}([0,T], \mathbb{R}^d)$ equipped with Skorokhod $J_1$ topology. For each $n \ge 1,$ $\{\mathbb{X}_{}^n\}$ is adapted to a filtration $(\mathcal{F}^n_t)_{t \in [0, T]}$.
 Then $\{\mathbb{X}_{}^n : n \ge 1\}$  is $C$--tight in $\mathbb{D}([0,T], \mathbb{R}^d)$ if and only if the following two conditions hold:
\begin{itemize}
\item [(a)] For any $t \in [0, T]$ and $\epsilon>0,$ there exists a compact set
$\mathsf{K}(t, \epsilon) \subset \mathbb{R}^d$ such that
\begin{eqnarray} \label{eq:relativecomp1}
\inf_{n}\Prob (\mathbb{X}_t^n \in \mathsf{K}(t, \epsilon)) > 1-\epsilon.
\end{eqnarray}
\item [(b)] \red{For every $\eps>0$,}
\begin{equation}
  \lim_{\sigma \to 0}\limsup_{n \to \infty} \Prob\left(\sup_{|s-t|\le \sigma, 0 \le s, t \le T} |\mathbb{X}^n_t  -\mathbb{X}^n_{s}|>\eps\right)=0.
  \label{eq:relativecomp2}
  \end{equation}
\end{itemize}
\end{lemma}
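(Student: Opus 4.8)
The plan is to deduce the statement from two classical ingredients, both available in \cite{bil99} and \cite{Shiryaev03}: the Aldous--Billingsley tightness criterion in $\D([0,T],\R^d)$ (tightness holds iff $\{\mathbb{X}^n_t\}$ is tight in $\R^d$ for each $t$ in a dense set containing $T$, and $\lim_{\sigma\to 0}\limsup_n\Prob(w'_\sigma(\mathbb{X}^n)>\eps)=0$ for each $\eps>0$, where $w'_\sigma$ is the Skorokhod c\`adl\`ag modulus), and the structural facts that the maximal-jump functional $J(\omega)=\sup_{0\le t\le T}|\omega(t)-\omega(t-)|$ is continuous on $\D([0,T],\R^d)$ and that $J_1$-convergence to a limit lying in $C([0,T])$ upgrades automatically to uniform convergence. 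Throughout, write $w_\sigma(\omega)=\sup_{|s-t|\le\sigma,\,0\le s,t\le T}|\omega(t)-\omega(s)|$ for the ordinary modulus appearing in (b).

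For the ``if'' direction, assume (a) and (b). Any partition of $[0,T]$ into blocks of length in $[\sigma,2\sigma)$ shows $w'_\sigma\le w_{2\sigma}$ pointwise, so (b) yields the Skorokhod-modulus condition, while (a) supplies the compact-containment condition; hence $\{\mathbb{X}^n\}$ is tight. To upgrade to $C$-tightness, observe that $J(\omega)\le w_\sigma(\omega)$ for \emph{every} $\sigma>0$ (a jump of size $j$ at $t_0$ is approximated by pairs $s<t_0<t$ with $t-s<\sigma$), so (b) forces $\Prob(J(\mathbb{X}^n)>\eps)\to 0$ for each $\eps>0$, i.e.\ $J(\mathbb{X}^n)\Rightarrow 0$. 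If $\mathbb{X}^{n_k}\Rightarrow\mathbb{X}$ along a subsequence, continuity of $J$ gives $J(\mathbb{X}^{n_k})\Rightarrow J(\mathbb{X})$, whence $J(\mathbb{X})=0$ a.s., so $\mathbb{X}$ has continuous paths; thus every limit point is supported on $C([0,T])$ and the sequence is $C$-tight.

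For the ``only if'' direction, assume $\{\mathbb{X}^n\}$ is $C$-tight. For (a): given any subsequence, tightness yields a further subsequence with $\mathbb{X}^{n_k}\Rightarrow\mathbb{X}$ and $\mathbb{X}\in C([0,T])$ a.s.; the evaluation map $\omega\mapsto\omega(t)$ is continuous at every path continuous at $t$, hence $P_{\mathbb{X}}$-a.s.\ continuous, so $\mathbb{X}^{n_k}_t\Rightarrow\mathbb{X}_t$. Since every subsequence of $\{\mathbb{X}^n_t\}$ thus has a weakly convergent further subsequence, $\{\mathbb{X}^n_t\}$ is relatively compact in $\R^d$, i.e.\ (a) holds for every $t$. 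For (b): if it failed, there would be $\eps,\eta>0$, $\sigma_k\downarrow 0$, $n_k\to\infty$ with $\Prob(w_{\sigma_k}(\mathbb{X}^{n_k})>\eps)>\eta$; passing to a further subsequence with $\mathbb{X}^{n_k}\Rightarrow\mathbb{X}\in C$ a.s.\ and invoking the Skorokhod representation theorem, we get $\widetilde{\mathbb{X}}^{n_k}\to\widetilde{\mathbb{X}}$ a.s.\ in $\D$, hence a.s.\ uniformly. Then $w_{\sigma_k}(\widetilde{\mathbb{X}}^{n_k})\le w_{\sigma_k}(\widetilde{\mathbb{X}})+2\|\widetilde{\mathbb{X}}^{n_k}-\widetilde{\mathbb{X}}\|_\infty\to 0$ a.s.\ (using uniform continuity of $\widetilde{\mathbb{X}}$ on $[0,T]$), contradicting $\Prob(w_{\sigma_k}(\widetilde{\mathbb{X}}^{n_k})>\eps)>\eta$. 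Hence (b) holds.

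The only delicate point is the topological bookkeeping: the ordinary modulus $w_\sigma$ is not $J_1$-continuous, so in the ``if'' direction I route around it through the genuinely continuous functional $J$ together with the bound $w'_\sigma\le w_{2\sigma}$, and in the ``only if'' direction through the Skorokhod representation theorem combined with the upgrade of $J_1$-convergence to uniform convergence against a continuous limit; no estimate is quantitative and the whole argument is soft. Since this is exactly Proposition~VI.3.26 of \cite{Shiryaev03}, in the paper we will simply cite it rather than reproduce the argument.
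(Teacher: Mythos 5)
Your proposal is correct, and the paper's ``proof'' of this lemma is exactly what you end with: a bare citation of Proposition VI.3.26 in Jacod--Shiryaev, with no argument reproduced. The self-contained derivation you sketch (routing the ``if'' direction through the continuity of the maximal-jump functional $J$ plus the bound $w'_\sigma\le w_{2\sigma}$, and the ``only if'' direction through Skorokhod representation and the upgrade of $J_1$-convergence to uniform convergence against a continuous limit) is sound, so you have in effect supplied the argument the paper elides.
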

For Lemma~\ref{lem:relativecompact0} and its applications in this paper, we always take filtration $(\mathcal{F}^n_t)_{t \in [0, T]}$ as
the natural filtration generated by $\{\mathcal{X}^n(nt): t \in [0, T] \}$ for fixed $n\ge 1$. That is,
\begin{eqnarray} \label{eq:filtration}
\mathcal{F}^n_t \buildrel \Delta \over =  \sigma(\mathcal{X}^n(ns) , s \le t).
\end{eqnarray}

The next lemma is Corollary~VI.3.33 in \cite{Shiryaev03}.

\begin{lemma} \label{lem:joint-tight}
\red{Let $\{\mathbb{X}_{}^n: n \ge 1\}$ and $\{\mathbb{Y}_{}^n : n \ge 1\} $ be two $C$--tight sequences of processes in $\mathbb{D}([0,T], \mathbb{R}^d)$. Then $\{\mathbb{X}_{}^n + \mathbb{Y}_{}^n : n \ge 1\}$ is $C$--tight in $\mathbb{D}([0,T], \mathbb{R}^d)$ and
$\{(\mathbb{X}_{}^n , \mathbb{Y}_{}^n) : n \ge 1\}$ is $C$--tight in $\mathbb{D}([0,T], \mathbb{R}^{2d})$.}
\end{lemma}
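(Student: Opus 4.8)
The plan is to deduce both assertions directly from the two--part characterization of $C$--tightness in Lemma~\ref{lem:relativecompact0} (compact containment at each fixed time, plus vanishing of the uniform modulus of continuity), rather than from a continuous mapping argument. The reason is that the addition map $(x,y)\mapsto x+y$ on $\mathbb{D}([0,T],\mathbb{R}^d)\times\mathbb{D}([0,T],\mathbb{R}^d)$ is \emph{not} continuous for the $J_1$ topology (the jumps of $x$ and $y$ need not coincide), so one cannot simply invoke the continuous mapping theorem; the $C$--tightness hypothesis is exactly what rescues the argument, because it lets us work with the uniform modulus $w_Z(\sigma):=\sup_{|s-t|\le\sigma,\ 0\le s,t\le T}|Z_t-Z_s|$, which is subadditive under both Cartesian products and addition, unlike the $w'$--modulus governing general tightness.

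First I would record, from the ``only if'' direction of Lemma~\ref{lem:relativecompact0} applied to $\{\mathbb{X}^n\}$: (i) for every $t\in[0,T]$ and $\eta>0$ there is a compact $\mathsf{K}^X(t,\eta)\subset\mathbb{R}^d$ with $\inf_n\mathbb{P}(\mathbb{X}^n_t\in\mathsf{K}^X(t,\eta))>1-\eta$; and (ii) for every $\epsilon>0$, $\lim_{\sigma\to0}\limsup_{n\to\infty}\mathbb{P}(w_{\mathbb{X}^n}(\sigma)>\epsilon)=0$. The same holds for $\{\mathbb{Y}^n\}$, producing compacts $\mathsf{K}^Y(t,\eta)$ and the analogous modulus estimate.

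Next, for the pair $\{(\mathbb{X}^n,\mathbb{Y}^n)\}$ (adapted to, say, the natural filtration it generates): the set $\mathsf{K}^X(t,\eta/2)\times\mathsf{K}^Y(t,\eta/2)$ is compact in $\mathbb{R}^{2d}$, and a union bound gives $\inf_n\mathbb{P}\big((\mathbb{X}^n_t,\mathbb{Y}^n_t)\in\mathsf{K}^X(t,\eta/2)\times\mathsf{K}^Y(t,\eta/2)\big)>1-\eta$, which is condition (a). For condition (b), equip $\mathbb{R}^{2d}$ with the $\ell^1$ (or max) norm so that, pathwise,
\begin{equation*}
w_{(\mathbb{X}^n,\mathbb{Y}^n)}(\sigma)\ \le\ w_{\mathbb{X}^n}(\sigma)+w_{\mathbb{Y}^n}(\sigma);
\end{equation*}
hence $\{w_{(\mathbb{X}^n,\mathbb{Y}^n)}(\sigma)>\epsilon\}\subset\{w_{\mathbb{X}^n}(\sigma)>\epsilon/2\}\cup\{w_{\mathbb{Y}^n}(\sigma)>\epsilon/2\}$, and a union bound followed by $\limsup_{n\to\infty}$ and then $\sigma\to0$ (using subadditivity of $\limsup$) yields (b). Lemma~\ref{lem:relativecompact0} then gives $C$--tightness of $\{(\mathbb{X}^n,\mathbb{Y}^n)\}$ in $\mathbb{D}([0,T],\mathbb{R}^{2d})$. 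For $\{\mathbb{X}^n+\mathbb{Y}^n\}$ the identical scheme applies: $\mathsf{K}^X(t,\eta/2)+\mathsf{K}^Y(t,\eta/2)$ is compact in $\mathbb{R}^d$ (continuous image of a compact set) and contains $\mathbb{X}^n_t+\mathbb{Y}^n_t$ with probability $>1-\eta$ uniformly in $n$, while $w_{\mathbb{X}^n+\mathbb{Y}^n}(\sigma)\le w_{\mathbb{X}^n}(\sigma)+w_{\mathbb{Y}^n}(\sigma)$ pathwise transfers (b) verbatim; so Lemma~\ref{lem:relativecompact0} gives $C$--tightness of the sum. (Alternatively, once the pair is $C$--tight one can recover the sum from it by noting that addition is continuous on $C([0,T],\mathbb{R}^{2d})$ and every limit point of $\{(\mathbb{X}^n,\mathbb{Y}^n)\}$ concentrates there; I prefer the direct verification since it sidesteps this restricted--continuity point.)

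The only delicate point is bookkeeping with the order of limits in condition (b): the characterization uses $\limsup_{n\to\infty}$, not $\sup_n$, so the union bound must be pushed through before taking $\limsup_n$ (via $\limsup_n(a_n+b_n)\le\limsup_n a_n+\limsup_n b_n$) and only afterwards does one send $\sigma\to0$. Beyond that, the proof is the elementary observation that compactness and the uniform modulus $w_T$ are stable under Cartesian products and under $\mathbb{R}^d$-addition --- precisely the feature that fails for the $w'_T$ modulus of general (non-$C$) tightness, which is why the hypothesis must be $C$--tightness rather than mere tightness.
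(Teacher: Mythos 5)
The paper does not prove this lemma: it simply cites it as Corollary~VI.3.33 in Jacod--Shiryaev. Your proposal supplies a correct, self-contained proof by verifying the two conditions of Lemma~\ref{lem:relativecompact0}, which is essentially the argument one would find in the textbook. The steps are all sound: compactness of products and Minkowski sums of compacts together with a union bound handles condition~(a); the pathwise subadditivity $w_{\mathbb{X}^n+\mathbb{Y}^n}(\sigma)\le w_{\mathbb{X}^n}(\sigma)+w_{\mathbb{Y}^n}(\sigma)$ (and its analogue for the pair) together with a union bound and $\limsup_n(a_n+b_n)\le\limsup_n a_n+\limsup_n b_n$ handles condition~(b), and sending $\sigma\to0$ last is the right order. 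You also correctly flag the point most people overlook here, namely that addition on $\mathbb{D}\times\mathbb{D}$ is not $J_1$--continuous, so the continuous mapping theorem is unavailable and the passage through the uniform modulus $w$ (rather than the Skorohod modulus $w'$) is exactly where $C$--tightness, as opposed to mere tightness, is used. Since the paper offers no proof to compare against, there is nothing further to reconcile; your argument stands on its own.
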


The next lemma states the $C$--tightness of the sequence of real-valued processes $\{\langle \zeta^n, f \rangle: n \ge 1 \}$ for each $f \in {C}([0,1])$. {The proof is lengthy so we defer it to the end of this appendix}. Recall from $(\ref{eq:empirialshape})$ that we have for each $n \ge 1$ and $t \in [0,T]$,
\begin{eqnarray*} 
\langle \zeta_t^n, f \rangle
 =\frac{1}{n^{2}} \sum_{i=1}^{n} \mathcal{X}^n_{i} (nt )
\cdot f ( \frac{i}{n}).
\end{eqnarray*}
\begin{lemma} \label{lem:relativecompact}
Fix any $f \in C([0,1])$. The sequence of real-valued stochastic
processes $\{\langle \zeta^n, f \rangle : n \ge 1\}$ is
$C$--tight in $\mathbb{D}([0,T], \R)$.
\end{lemma}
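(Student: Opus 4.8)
The plan is to use the semimartingale decomposition \eqref{eq:repzetatnf}, which reads $\langle \zeta_t^n, f\rangle = \langle \zeta_0^n, f\rangle + D^n_t + \mathsf{M}^n_t$ with $D^n_t := \int_0^{nt}\mathcal{L}_n F(\mathcal{X}^n(s))\,ds$ and $F$ as in \eqref{eq:F}, and to verify the two conditions of Lemma~\ref{lem:relativecompact0} (which is exactly the criterion for $C$--tightness) relative to the filtration \eqref{eq:filtration}. The initial term $\langle\zeta^n_0,f\rangle$ is a deterministic constant, bounded uniformly in $n$ by Assumption~\ref{ass:initialcondition}, so it is harmless. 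The compact containment condition \eqref{eq:relativecomp1} is immediate from the pointwise bound $|\langle\zeta^n_t,f\rangle| \le \|f\|_{\infty}\cdot\sup_{0\le u\le T}\frac{1}{n^2}\sum_{k=1}^n|\mathcal{X}^n_k(nu)|$, whose right-hand side is bounded in $L^1$ uniformly in large $n$ by \eqref{eq:totalordersmall}; Markov's inequality then supplies the required compact interval, and the finitely many remaining $n$ are covered because a single $\D([0,T],\R)$-valued random element is tight on its own. It thus remains to prove the oscillation estimate \eqref{eq:relativecomp2} for $\langle\zeta^n,f\rangle$, which I would do by handling $D^n$ and $\mathsf{M}^n$ separately, in each case reducing everything to Lemma~\ref{lem:totalorder}.

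For the drift, substituting $F$ into the generator \eqref{eq:generator} and using that each event changes one coordinate of $\mathcal{X}^n$ by $\pm1$, so $F(x^{k\pm})-F(x)=\pm\frac1{n^2}f(k/n)$, one collects terms and applies Assumption~\ref{ass:rateconverg1} ($\Lambda_j^n(i)=\Lambda_j(i/n)$ bounded, $\Theta_j^n(i)=\frac1n\Theta_j(i/n)$, $\Upsilon_j^n\le n^{\kappa}\Upsilon$ with $\kappa<1$) to obtain a uniform pointwise bound of the form $|\mathcal{L}_n F(x)|\le \frac{C}{n}\big(1+\frac1{n^2}\sum_{k=1}^n|x_k|\big)+\frac{C}{n^{2-\kappa}}$. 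Hence, for $0\le s\le t\le T$ with $t-s\le\sigma$, the change of variables $u=nv$ gives
\[
|D^n_t-D^n_s|\le \int_{ns}^{nt}|\mathcal{L}_n F(\mathcal{X}^n(u))|\,du \le C\sigma + C\int_s^{s+\sigma}\frac1{n^2}\sum_{k=1}^n|\mathcal{X}^n_k(nv)|\,dv + \frac{C\sigma}{n^{1-\kappa}},
\]
so that, after taking the supremum over such pairs and then expectations, \eqref{eq:inttotal} yields $\limsup_{n\to\infty}\E\big[\sup_{|s-t|\le\sigma}|D^n_t-D^n_s|\big]\le C'\sigma$. Markov's inequality then gives $\lim_{\sigma\to0}\limsup_{n\to\infty}\Prob\big(\sup_{|s-t|\le\sigma}|D^n_t-D^n_s|>\eps\big)=0$.

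For the martingale, the predictable quadratic variation is $\langle\mathsf{M}^n\rangle_t=\int_0^{nt}\Gamma_n F(\mathcal{X}^n(s))\,ds$, where $\Gamma_n F(x)=\sum_y q^n(x,y)\,(F(y)-F(x))^2$ and $q^n(x,\cdot)$ are the jump rates encoded in $\mathcal{L}_n$. Since every jump contributes $(F(y)-F(x))^2\le\|f\|_{\infty}^2/n^4$ while the total jump rate out of $x$ is $\le C\big(n+\frac1n\sum_k|x_k|+n^{\kappa}\big)$, one gets $\Gamma_n F(x)\le \frac{C}{n^3}\big(1+\frac1{n^2}\sum_k|x_k|\big)+\frac{C}{n^{4-\kappa}}$, whence, by the same change of variables and \eqref{eq:inttotal},
\[
\E\big[\langle\mathsf{M}^n\rangle_T\big]\le \frac{CT}{n^2}+\frac{C}{n^2}\,\E\Big[\int_0^T\frac1{n^2}\sum_{k=1}^n|\mathcal{X}^n_k(nv)|\,dv\Big]+\frac{CT}{n^{3-\kappa}}\longrightarrow 0.
\]
This finite bound also shows $\mathsf{M}^n$ is a genuine square-integrable martingale (alternatively, localize $\mathcal{X}^n$, derive the bound for the stopped process uniformly in the localization, and pass to the limit). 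By Doob's $L^2$ inequality $\E[\sup_{0\le t\le T}(\mathsf{M}^n_t)^2]\le 4\,\E[\langle\mathsf{M}^n\rangle_T]\to0$, so $\sup_{0\le t\le T}|\mathsf{M}^n_t|\Rightarrow0$ and the oscillation condition for $\mathsf{M}^n$ holds trivially. Combining the drift and martingale estimates through the triangle inequality establishes \eqref{eq:relativecomp2} for $\langle\zeta^n,f\rangle$, and Lemma~\ref{lem:relativecompact0} then yields the asserted $C$--tightness.

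The main obstacle is the bookkeeping in the two middle steps: one has to carry out the generator and quadratic-variation computations carefully and track precisely how the event rates interact with the $1/n^2$ space scaling and the time change by $n$ — the point being that limit-order arrivals ($O(1)$ rate at $O(n)$ levels, weight $1/n^2$) and cancellations ($O(1/n)$ rate per order, $O(n^2)$ orders, weight $1/n^2$) each contribute at order $1$ to the drift but only at order $1/n^2$ to the quadratic variation, and that $\kappa<1$ is exactly what makes the market-order terms vanish in the limit.
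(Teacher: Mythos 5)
Your proposal follows the same route as the paper's proof: the semimartingale decomposition \eqref{eq:repzetatnf}, the pointwise bound on $\mathcal{L}_nF$ combined with Lemma~\ref{lem:totalorder} for the drift, the predictable quadratic variation (your $\Gamma_nF$ is precisely the paper's $\eta^n=\mathcal{L}_nF^2-2F\mathcal{L}_nF$) plus Doob's inequality for the martingale, and the criterion of Lemma~\ref{lem:relativecompact0}. The only cosmetic deviations are using the $L^1$ bound \eqref{eq:totalordersmall} rather than the $L^2$ bound \eqref{eq:totalorder2} for compact containment, and a lighter sketch of the local-to-true martingale verification; neither changes the substance.
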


With Lemmas~\ref{lem:tightmeasuretoR}--\ref{lem:relativecompact} at our disposal, we are ready to prove Proposition~\ref{lem:tightposneg0}.

\begin{proof}[Proof of Proposition~\ref{lem:tightposneg0}]
 We apply Lemma~\ref{lem:relativecompact0} to establish the $C$--tightness of $\{\langle \zeta^{n,+}, f \rangle: n \ge 1\}$ in $\mathbb{D}([0,T], \mathbb{R})$ for fixed $f \in C([0,1])$, where
 \begin{eqnarray} \label{eq:zetatnfPlus}
\langle \zeta_t^{n, +}, f \rangle
 =\frac{1}{n^{2}} \sum_{i=1}^{n} \Big[\mathcal{X}^n_{i} (nt )\Big]^+
\cdot f ( \frac{i}{n}).
\end{eqnarray}
 The $C$--tightness of $\{\langle \zeta^{n,-}, f \rangle: n \ge 1\}$ follows using a similar argument. The $C$--tightness of $\left\{ \left(\langle \zeta^{n,+}, f \rangle , \langle \zeta^{n,-}, f \rangle \right) : n \ge 1 \right\}$ immediately follows after applying Lemma~\ref{lem:joint-tight}.

We first show that $\{\langle \zeta^{n,+}, f \rangle: n \ge 1\}$ satisfies part (a) of Lemma~\ref{lem:relativecompact0}. We rely on the family of functions $f_{\gamma} \in C[0,1]$ introduced in \eqref{eq:fgamma}.
Note that for fixed $t \in [0, T],$
\begin{eqnarray*} \label{eq:plusrelativecomp1}
\lefteqn{\sup_{n}\Prob (|\langle \zeta^{n,+}_{t}, f \rangle| > L) } \\&\le&
\sup_{n}\Prob (|\langle \zeta^{n,+}_{t}, f \rangle - \langle \zeta^{n}_{t}, f_{\gamma} \rangle | > \frac{L}{2}) + \sup_{n}\Prob (|\langle \zeta^{n}_{t}, f_{\gamma} \rangle| > \frac{L}{2}) \\
&\le& \frac{2}{L}\sup_{n} \left( \tilde a_n +  C \gamma  \right)+ \sup_{n}\Prob (|\langle \zeta^{n}_{t}, f_{\gamma} \rangle| > \frac{L}{2}),
\end{eqnarray*}
where we use Markov inequality and (\ref{eq:bounddiffresult}) in the last inequality.
Since $\{ \tilde a_n \}$ is a bounded sequence, we can choose $L$ large such that $\frac{2}{L}\sup_{n} \left( \tilde a_n +  C \gamma  \right)$ is arbitrarily small.
 In addition, for fixed $\gamma$ and $t$, $\{\langle \zeta^{n}_{t}, f_{\gamma} \rangle: n \ge 1\}$ is a tight sequence by Lemma~\ref{lem:relativecompact}. Thus we can pick $L$ large such that $\sup_{n}\Prob (|\langle \zeta^{n,+}_{t}, f \rangle| > L)$ is also arbitrarily small. These two facts imply that $\{\langle \zeta^{n,+}_{t}, f \rangle: n \ge 1\}$ is a tight sequence for fixed $t \in [0,T]$.

{We next verify that $\{\langle \zeta^{n,+}, f \rangle: n \ge 1\}$ also satisfies part (b) of Lemma~\ref{lem:relativecompact0}. Clearly from \eqref{eq:zetatnfPlus} we have for fixed $n \ge 1$, the process $\langle \zeta^{n,+}, f \rangle$ is adapted to the filtration $\mathcal{F}^n$ given in \eqref{eq:filtration}.
Given a real number $c>0$, we obtain from
(\ref{eq:bounddiffresult}) that for $n$ large
\begin{eqnarray}\label{eq:b}
\lefteqn{\Prob \left(\sup_{|s-t|\le \sigma, 0 \le s, t \le T} |\langle \zeta^{n,+}_{t}, f \rangle-  \langle \zeta^{n,+}_{s}, f \rangle | > c \right)} \nonumber \\
&\le & 2 \cdot \Prob\left(\sup_{0 \le t \le T} |\langle \zeta^{n,+}_{t}, f \rangle-  \langle \zeta^{n}_{t}, f_{\gamma} \rangle |> \frac{c}{3}  \right)  +  \Prob \left( \sup_{|s-t|\le \sigma, 0 \le s, t \le T} |\langle \zeta^{n}_{t}, f_{\gamma} \rangle-  \langle \zeta^{n}_{s}, f_{\gamma} \rangle |> \frac{c}{3} \right) \nonumber \\
&\le& 2 \cdot \frac{3}{c} \left( \tilde a_n + C \gamma\right) + \Prob\left(\sup_{|s-t|\le \sigma, 0 \le s, t \le T} |\langle \zeta^{n}_{t}, f_{\gamma} \rangle-  \langle \zeta^{n}_{s}, f_{\gamma} \rangle |> \frac{c}{3} \right).
\end{eqnarray}
Since for fixed $\gamma>0$, the process $\langle \zeta^{n}, f_{\gamma} \rangle$ is adapted to $\mathcal{F}^n$,
we infer from Lemma~\ref{lem:relativecompact} that
 \begin{eqnarray*}
 \lim_{\sigma \to 0}\limsup_{n \to \infty} \Prob\left( \sup_{|s-t|\le \sigma, 0 \le s, t \le T} |\langle \zeta^{n}_{t}, f_{\gamma} \rangle-  \langle \zeta^{n}_{s}, f_{\gamma} \rangle |> \frac{c}{3} \right) =0.
\end{eqnarray*}
In conjunction with \eqref{eq:b}, this implies for fixed $\gamma \in (0, \mathsf{p})$ and $c>0$,
 \begin{eqnarray*}
 \lim_{\sigma \to 0}\limsup_{n \to \infty} \Prob \left( \sup_{|s-t|\le \sigma, 0 \le s, t \le T} |\langle \zeta^{n,+}_{t}, f \rangle-  \langle \zeta^{n,+}_{s}, f \rangle | > c \right) \le \frac{6C\gamma}{c}.
\end{eqnarray*}
Let $\gamma \rightarrow 0+,$ we obtain
 \begin{eqnarray*}
 \lim_{\sigma \to 0}\limsup_{n \to \infty} \Prob \left( \sup_{|s-t|\le \sigma, 0 \le s, t \le T} |\langle \zeta^{n,+}_{t}, f \rangle-  \langle \zeta^{n,+}_{s}, f \rangle | > c \right)=0.
\end{eqnarray*}
}

\red{Given the $C$--tightness of $\{\langle \zeta^{n,+}, f \rangle: n \ge 1\}$ and $\{\langle \zeta^{n,-}, f \rangle: n \ge 1\}$ for any fixed function $f \in C[0,1]$, we first deduce from Lemma~\ref{lem:tightmeasuretoR}  that both $\{\zeta^{n,+}: n \ge 1\}$ and $\{\zeta^{n,-}: n \ge 1\}$ are tight in $\mathbb{D}([0, T], \mathcal{M^+}[0,1])$. We next argue that the limit points of $\{\zeta^{n,+}: n \ge 1\}$ and $\{\zeta^{n,-}: n \ge 1\}$ have continuous sample paths. Suppose $\zeta^+$ is a limit point of $\{\zeta^{n,+}: n \ge 1\}$, then we readily verify from continuous mapping theorem that $\langle \zeta^+, f \rangle $ is a limit point of $\{ \langle \zeta^{n,+} f \rangle : n \ge 1\}$ for any function $f \in C[0,1]$. Since all the limit points of the $C$--tight sequence $\{\langle \zeta^{n,+}, f \rangle: n \ge 1\}$ are concentrated on the set of continuous paths, we deduce that for almost every $\omega$ we have: for fixed $t \in [0, T]$,
\begin{eqnarray}
\lim_{s \rightarrow t}  d_+\left(\zeta^+_{s}(\omega) , \zeta^+_{t}(\omega) \right) = \lim_{s \rightarrow t} \sum_{k=1}^\infty \frac{1}{2^k} \frac{|\langle \zeta^+_{s}(\omega), \phi_k \rangle -\langle \zeta^+_{t}(\omega), \phi_k \rangle|}
{1+|\langle \zeta^+_{s}(\omega), \phi_k \rangle -\langle \zeta^+_{t}(\omega), \phi_k \rangle| }   =0,
\end{eqnarray}
where $(\phi_k)$ is a dense subset of $C[0,1]$ and the distant measure $d_+$ is given in \eqref{eq:metricplus}. Thus the limit point $\zeta^+$ has continuous sample paths. Similarly, almost every path of each limit point of $\{\zeta^{n,-}: n \ge 1\}$ is continuous.
Now the tightness of $\{(\zeta^{n,+}, \zeta^{n,-}) : n \ge 1\}$ in $\D([0, T], {\mathcal{\overline {M}}}([0,1]) )$ follows from Lemma~3.5 and Corollary~3.6 in~\cite{aldous1981weak}.}
\end{proof}



\begin{proof}[Proof of Lemma~\ref{lem:relativecompact}]
{We verify that the conditions (a) and (b) in  Lemma~\ref{lem:relativecompact0} are satisfied.}
For notational conveniences, we write for
a fixed function $f \in C[0,1]$ and for $t \ge 0$
\begin{eqnarray} \label{eq:Ypitnf}
\mathcal{Y}_t^n \buildrel \Delta \over = \langle \zeta_t^n, f \rangle &
 =& \frac{1}{n^{2}} \sum_{i=1}^{n} \mathcal{X}^n_{i} (nt )
\cdot f ( \frac{i}{n}).
\end{eqnarray}

We first prove that $\mathcal{Y}^n$ satisfies condition (a) from Lemma~\ref{lem:relativecompact0}. Since $\sup_{x \in [0,1]} |f(x)| \le C$ for some constant $C$ by the continuity of $f$, we deduce from
(\ref{eq:Ypitnf}) that
\begin{eqnarray*}\label{eq:boundYt}
|\mathcal{Y}_t^n | \le \frac{1}{n^{2}} \sum_{i=1}^{n} |\mathcal{X}^n_{i} (nt )| \cdot |f (
\frac{i}{n})| \le \frac{C}{n^{2}} \sum_{i=1}^{n} |\mathcal{X}^n_{i} (nt )|.
\end{eqnarray*}
We now deduce from Lemma~\ref{lem:totalorder} that
\begin{eqnarray} \label{eq:boundytn}
\sup_{n}\E \Big[ \sup_{0 \le t \le T}|\mathcal{Y}_t^n |^2\Big] \le C.
\end{eqnarray}
An application of Markov's inequality immediately yields that $\mathcal{Y}_t^n$ satisfies
(\ref{eq:relativecomp1}).

Next we show that $\mathcal{Y}^n$ also satisfies condition (b) in Lemma~\ref{lem:relativecompact0}. Note that
\begin{eqnarray}
\mathcal{Y}_t^n = F( \mathcal{X}^n (nt) ) =\mathcal{Y}_0^n +\int_0^{nt}
\mathcal{L}_nF(\mathcal{X}^n (s))ds+ \mathsf{M}^n_t, \label{eq:Yn}
\end{eqnarray}
where \red{$\mathcal{L}_n$ is the operator given in (\ref{eq:generator})}, $\mathsf{M}^n$ is a (local) martingale, and the function $F$ is defined in \eqref{eq:F}.
Given $\epsilon>0, \sigma>0$, we deduce
from (\ref{eq:Yn}) that
\begin{eqnarray} \label{eq:twoparts}
 \lefteqn{\Prob  \left( \sup_{|s-t|\le \sigma, 0 \le s, t \le T}  |\mathcal{Y}^n_{t}-\mathcal{Y}^n_{s}|>\eps \right)} \nonumber \\
 &\leq & \Pb \left( \sup_{0 \le t- s \le \sigma, 0 \le s, t \le T} \left|\int_{ns}^{nt} \mathcal{L}_n F(\mathcal{X}^n(u))du\right|>\frac{\eps}{2} \right) \nonumber \\
 &+&\Pb \left( \sup_{|s-t|\le \sigma, 0 \le s, t \le T}  |\mathsf{M}^n_{t}-\mathsf{M}^n_s|>\frac{\eps}{2}\right).
\end{eqnarray}
Our strategy is to show for large $n$, there is a constant $C$ independent of $n$
such that
\begin{eqnarray}
\E \left[ \sup_{0 \le t- s \le \sigma, 0 \le s, t \le T} \left|\int_{ns}^{nt} \mathcal{L}_n F(\mathcal{X}^n(u))du\right| \right]
&\leq&  C \sigma,\label{eq:boundEgenrator} \\
\E \left[ \sup_{|s-t|\le \sigma, 0 \le s, t \le T}  |\mathsf{M}^n_{t}-\mathsf{M}^n_s|^2\right] &\leq & \frac{C T}{n^{2}}. \label{eq:Mn2ndmmtbound}
\end{eqnarray}
Using Markov's inequality and Chebyshev's
inequality, and (\ref{eq:twoparts}), we obtain
\begin{equation*}\label{eq:relativecompY}
  \lim_{\sigma \to 0}\limsup_{n \to \infty} \Prob  \left( \sup_{|s-t|\le \sigma, 0 \le s, t \le T}  |\mathcal{Y}^n_{t}-\mathcal{Y}^n_{s}|>\eps \right) =0.
\end{equation*}

The rest of the proof focuses on establishing
(\ref{eq:boundEgenrator}) and (\ref{eq:Mn2ndmmtbound}). We start
with proving (\ref{eq:boundEgenrator}). For fixed $n$, we deduce from (\ref{eq:F}) and (\ref{eq:generator}) that
\begin{equation} \label{eq:Fgenerator}
\begin{aligned}
\mathcal{L}_nF(\mathcal{X}^n(u))
=&\sum_{k< p^n_A(u)}\left[\frac{1}{n^{2}}f(\frac{k}{n})\Theta_B^n({p^n_A(u)-k})|\mathcal{X}^n_k(u)|-\frac{1}{n^{2}}f(\frac{k}{n})\Lambda_B^n({p^n_A(u)-k})\right]\\
+&\sum_{k> p^n_B(u)}\left[\frac{1}{n^{2}}f(\frac{k}{n})\Lambda_A^n({k-p^n_B(u)})-\frac{1}{n^{2}}f(\frac{k}{n})\Theta_A^n({k-p^n_B(u)})|\mathcal{X}^n_k(u)|\right]\\
+&\frac{1}{n^{2}}\left(f(\frac{p^n_B(u)}{n}) \Upsilon_A^n -f(\frac{p^n_A(u)}{n}) \Upsilon_B^n
\right).
\end{aligned}
\end{equation}
Since $f$ is bounded on $[0,1]$, we deduce from
Assumption~\ref{ass:rateconverg1} that
\begin{eqnarray}
{|\mathcal{L}_nF(\mathcal{X}^n (u))|}
 &\le&  \frac{C}{ n^{3}} \sum_{k=1}^n |\mathcal{X}^n_k(u)| + \frac{ C}{n}. \label{eq:boundLnF}
\end{eqnarray}
This immediately yields that \red{
\begin{eqnarray*}
\E \left[ \sup_{0 \le t- s \le \sigma, 0 \le s, t \le T}\int_{ns}^{nt}\left| \mathcal{L}_n F(\mathcal{X}^n(u))\right| du \right]
\leq  \frac{C}{n^{2}} \E \left[ \sup_{0 \le s \le T - \sigma} \int_s^{s + \sigma} \sum_{k=1}^{n}
|\mathcal{X}^n_k(n u)| du \right]+ C \sigma.
\end{eqnarray*}
On combining (\ref{eq:inttotal}), we obtain (\ref{eq:boundEgenrator}).}

We next show (\ref{eq:Mn2ndmmtbound}). To this end, we define for each $u \ge 0,$
\begin{equation*}\label{eq:etant}
\eta^n (u)=\mathcal{L}_nF^2(\mathcal{X}^n(u))-2F(\mathcal{X}^n(u)) \cdot \mathcal{L}_n F(\mathcal{X}^n(u)),
\end{equation*}
where $\mathcal{L}_n$ is given in (\ref{eq:generator}) and $F$ is the linear function given in (\ref{eq:F}). One checks that
\begin{equation*}
\begin{aligned}
\eta^n(u) =&\sum_{k< p^n_A(u)}\left[\frac{1}{n^{4}}f(\frac{k}{n})^2 \cdot \Theta_B^n({p^n_A(u)-k}) \cdot |\mathcal{X}^n_k(u)|+\frac{1}{n^{4}}f(\frac{k}{n})^2 \cdot \Lambda_B^n({p^n_A(u)-k})\right]\\
+&\sum_{k>
p^n_B(u)}\left[\frac{1}{n^{4}}f(\frac{k}{n})^2 \cdot \Lambda_A^n({k-p^n_B(u)})+\frac{1}{n^{4}}f(\frac{k}{n})^2 \cdot \Theta_A^n({k-p^n_B(u)}) \cdot |\mathcal{X}^n_k(u)|\right]\\
+&\frac{1}{n^{4}}\left(f(\frac{p^n_A(u)}{n})^2 \cdot \Upsilon_B^n +f(\frac{p^n_B(u)}{n})^2 \cdot \Upsilon_A^n
\right).\label{eq:infigene1}
\end{aligned}
\end{equation*}
Now for each fixed $n,$ one verifies that $\{(\mathsf{M}^n_t)^2-\int_0^{nt}\eta^n(u)du: t \ge
0 \}$ is a local martingale with respect to the filtration $\mathcal{F}^n$ in \eqref{eq:filtration} (see Lemma~5.1 in Appendix 1 of~\cite{kipnis1999scaling}). \red{Suppose $\{(\mathsf{M}^n_t)^2-\int_0^{nt}\eta^n(u)du: t \ge
0 \}$ and $\{\mathsf{M}^n_t: t \ge
0 \}$ are indeed $\mathcal{F}^n-$martingales. Then Doob's maximal inequality for martingales yields that
\begin{eqnarray}\label{eq:mating-equlity}
\E \left[ \sup_{|s-t|\le \sigma, 0 \le s, t \le T}  |\mathsf{M}^n_{t}-\mathsf{M}^n_s|^2\right] \leq
4 \E \left[ \sup_{0 \le  t \le T}  |\mathsf{M}^n_{t}|^2\right]
\le  16 \E (|\mathsf{M}^n_{T}|^2)
= \E\Big[\int_{0}^{nT} \eta^n (u) du\Big] \nonumber\\
\end{eqnarray} }
Hence to establish (\ref{eq:Mn2ndmmtbound}), we proceed to bound $\eta^n(u)$. Using Assumption~\ref{ass:rateconverg1} and the
fact that $f$ is bounded on \red{$[0,1]$,  there is a constant} $C>0$ such that
\begin{eqnarray} \label{eq:boundetanu}
0 \le \eta^n (u) \le \frac{1}{n^{4}} \left[ \frac{C}{n} \sum_{k=1}^{n} |\mathcal{X}^n_k(u) |
+ Cn  \right].
\end{eqnarray}
Upon combining (\ref{eq:inttotal}) and applying the change of variable formula we find for $n$ large
\begin{eqnarray} \label{eq:boundetanu}
\E\Big[\int_{0}^{nT} \eta^n (u) du\Big]  \le \frac{CT}{n^2}.
\end{eqnarray}
We then obtain (\ref{eq:Mn2ndmmtbound}) from (\ref{eq:mating-equlity}).

It remains to show that the two local martingales $\{(\mathsf{M}^n_t)^2-\int_0^{nt}\eta^n(u)du: t \ge
0 \}$ and $\{\mathsf{M}^n_t: t \ge
0 \}$ are indeed $\mathcal{F}^n$-martingale for each fixed $n$. \red{By \cite[Theorem~51]{protter2004stochastic}},
it suffices to show for every $t \le T$,
\begin{eqnarray}
\E \Big[\sup_{ 0 \le  s \le t} |\mathsf{M}_s^n|^2 \Big] < \infty, \label{eq:martingalelocal1} \\
 \E \Big[\int_0^{nt}\eta^n(u)du \Big] < \infty . \label{eq:martingalelocal2}
\end{eqnarray}
Inequality \eqref{eq:martingalelocal2} directly follows from \eqref{eq:boundetanu}. To prove \eqref{eq:martingalelocal1}, we use (\ref{eq:Yn}), which implies
\begin{eqnarray*}
\sup_{ 0 \le  s \le t} |\mathsf{M}_s^n|^2 \le 3 \sup_{ 0 \le  s \le t} (\mathcal{Y}_s^n)^2 + 3 (\mathcal{Y}_0^n)^2 + 3 \Big(\int_0^{nt}
\Big| \mathcal{L}_nF(\mathcal{X}^n (s))\Big| ds  \Big)^2
 \end{eqnarray*}
Inequality (\ref{eq:martingalelocal1}) then follows from (\ref{eq:boundytn}), (\ref{eq:boundLnF}) and (\ref{eq:totalorder2}).
The proof is thus complete.
\end{proof}

\section{Proof of Proposition~\ref{lem:limitpointpi}} \label{app:2}
In this section, we prove Proposition~\ref{lem:limitpointpi}.
We rely on the representation in \eqref{eq:repzetatnf}, i.e.,
\begin{eqnarray} \label{eq:pitngenerator}
\langle \zeta_t^n, f \rangle
& =&\frac{1}{n^{2}} \sum_{i=1}^{n} \mathcal{X}^n_{i} (nt )
\cdot f \Big( \frac{i}{n}\Big) = F( \mathcal{X}^n (nt) ) \nonumber\\
& =& \langle \zeta_0^n, f \rangle + n \int_0^{t}
\mathcal{L}_n F(\mathcal{X}^n(ns))ds+ \mathsf{M}^n_t .
\end{eqnarray}
Here $\mathcal{L}_n$ is operator given in (\ref{eq:generator}), $\mathsf{M}^n$ is a martingale and $F$ is given in \eqref{eq:F}.
We fix $f \in C([0,1])$ throughout this section.
It is clear from \eqref{eq:X0} that
\begin{eqnarray} \label{eq:intialX}
\lim_{n \rightarrow \infty} \langle \zeta_0^n, f \rangle =\int_{0}^1 f(x) \varrho(x) dx= \langle \zeta_0, f \rangle,
\end{eqnarray}
where $\zeta_0$ is a deterministic signed measure with density $\zeta_0(dx)=\varrho(x)dx$ for $x\in [0,1]$.

To prove Proposition~\ref{lem:limitpointpi}, we now introduce two auxiliary lemmas. The next lemma implies that the martingale term in
(\ref{eq:pitngenerator}) vanishes (converges weakly to zero) as $n \rightarrow \infty$. The proof directly follows from \eqref{eq:mating-equlity} and \eqref{eq:boundetanu}.


\begin{lemma} \label{lem:mart0}
For $T>0,$ we have
\begin{eqnarray}\label{eq:mart0}
\lim_{n \rightarrow \infty } \E \left[\sup_{0 \le t \le T} |\mathsf{M}_t^n|^2 \right] = 0.
\end{eqnarray}
\end{lemma}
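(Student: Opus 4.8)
The plan is to obtain \eqref{eq:mart0} by chaining together the second-moment estimates for $\mathsf{M}^n$ that were already assembled in the proof of Lemma~\ref{lem:relativecompact}. Recall that $\mathsf{M}^n$ is the martingale appearing in the Dynkin-type decomposition \eqref{eq:Yn} of $\mathcal{Y}^n_t = \langle \zeta^n_t, f \rangle = F(\mathcal{X}^n(nt))$, that $\eta^n(u) = \mathcal{L}_n F^2(\mathcal{X}^n(u)) - 2 F(\mathcal{X}^n(u))\,\mathcal{L}_n F(\mathcal{X}^n(u))$ is the density of its predictable quadratic variation, and that it was verified there (via \eqref{eq:martingalelocal1}--\eqref{eq:martingalelocal2}) that both $\{\mathsf{M}^n_t\}$ and $\{(\mathsf{M}^n_t)^2 - \int_0^{nt}\eta^n(u)\,du\}$ are genuine $\mathcal{F}^n$-martingales on $[0,T]$, not merely local ones.

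First I would apply Doob's $L^2$ maximal inequality to $\mathsf{M}^n$ on $[0,T]$ to get $\E[\sup_{0\le t\le T}|\mathsf{M}^n_t|^2] \le 4\,\E[|\mathsf{M}^n_T|^2]$. Since $\mathsf{M}^n_0 = 0$ and $(\mathsf{M}^n_t)^2 - \int_0^{nt}\eta^n(u)\,du$ is a mean-zero martingale, evaluating at $t=T$ and taking expectations yields the quadratic-variation identity $\E[|\mathsf{M}^n_T|^2] = \E\big[\int_0^{nT}\eta^n(u)\,du\big]$; this is exactly the chain recorded in \eqref{eq:mating-equlity}. Finally, the bound \eqref{eq:boundetanu} --- which rests on the pointwise estimate $0 \le \eta^n(u) \le n^{-4}\big(\frac{C}{n}\sum_{k=1}^n |\mathcal{X}^n_k(u)| + Cn\big)$ combined with the moment control \eqref{eq:inttotal} on the scaled total number of outstanding orders --- gives $\E\big[\int_0^{nT}\eta^n(u)\,du\big] \le CT/n^2$ for all $n$ large, with $C$ independent of $n$. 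Combining the three estimates gives
\[
\E\Big[\sup_{0\le t\le T}|\mathsf{M}^n_t|^2\Big] \le \frac{4CT}{n^2} \longrightarrow 0 \quad \text{as } n\to\infty,
\]
which is \eqref{eq:mart0}.

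There is no substantive obstacle here: every ingredient --- the true-martingale property, Doob's $L^2$ inequality, the quadratic-variation identity, and the $O(1/n^2)$ control of $\E\int_0^{nT}\eta^n$ --- has already been established earlier in the paper, so the lemma is purely a matter of assembling them. The only point worth a line of care is that Doob's inequality and the quadratic-variation identity must be applied to $\mathsf{M}^n$ as a true martingale on $[0,T]$, so that they hold in expectation rather than only along a localizing sequence; this is precisely why the verification \eqref{eq:martingalelocal1}--\eqref{eq:martingalelocal2} was carried out, and given it the argument is immediate.
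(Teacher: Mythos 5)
Your proof is correct and takes essentially the same approach as the paper, which states that the lemma ``directly follows from \eqref{eq:mating-equlity} and \eqref{eq:boundetanu}'': you have simply unpacked that chain --- Doob's $L^2$ maximal inequality, the quadratic-variation identity $\E[(\mathsf{M}^n_T)^2] = \E\big[\int_0^{nT}\eta^n(u)\,du\big]$ (both contained in \eqref{eq:mating-equlity}), and the $O(n^{-2})$ bound from \eqref{eq:boundetanu}. Your observation that the localization must first be removed via \eqref{eq:martingalelocal1}--\eqref{eq:martingalelocal2} so that Doob's inequality applies in expectation is exactly the care the paper itself takes in the proof of Lemma~\ref{lem:relativecompact}.
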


The next lemma concerns the weak convergence of the ``scaled" generator. The proof is lengthy and thus deferred to the end of this appendix.
\begin{lemma} \label{lem:generatorconvg}
For the subsequence $\{{n_k}: k=1, 2 \ldots\}$ in (\ref{eq:pinconvgpi}) we have for fixed $f \in C([0,1])$
\begin{eqnarray} \label{eq:generatorconvg}
\sup_{0 \le s \le T} \Big| n_k \cdot \mathcal{L}_{n_k} F( \mathcal{X}^{n_k}(n_k s)) - \langle \nu, f
\rangle + \langle \zeta_s, \mathcal{A}_{\Theta}f \rangle \Big|
\Rightarrow 0 \quad \text{as $n_k \rightarrow \infty$},
\end{eqnarray}
where $\nu$ is given in (\ref{eq:nuLambda}),
$\mathcal{A}_{\Theta}$ is given in (\ref{eq:ATheta}) and $F$ is given in (\ref{eq:F}).
\end{lemma}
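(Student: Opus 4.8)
The plan is to start from the explicit expression \eqref{eq:Fgenerator} for $\mathcal{L}_nF(\mathcal{X}^n(u))$, multiply by $n$, set $u=n_ks$, and substitute the scaling relations of Assumption~\ref{ass:rateconverg1}. After reindexing each sum by the distance to the relevant best quote, $n\,\mathcal{L}_nF(\mathcal{X}^n(ns))$ splits into six groups: two limit--order arrival sums, two limit--order cancellation sums, and two market--order terms. We treat the three types in turn and show that, uniformly in $s\in[0,T]$, the arrival sums converge (deterministically) to $\langle\nu,f\rangle$, the market--order terms vanish, and the cancellation sums converge to $-\langle\zeta_s,\mathcal{A}_\Theta f\rangle$; adding the three estimates via the triangle inequality and a union bound over finitely many pieces gives \eqref{eq:generatorconvg}.

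The market--order contribution $\frac1n\big(f(p_B^n(ns)/n)\Upsilon_A^n-f(p_A^n(ns)/n)\Upsilon_B^n\big)$ is bounded in absolute value by $2\|f\|_\infty\Upsilon\, n^{\kappa-1}$, which vanishes uniformly in $s$ since $\kappa<1$. For the arrival sums, take the buy side: after multiplying by $n$ it equals $-\frac1n\sum_{k<p_A^n(ns)}f(k/n)\Lambda_B\big(\frac{p_A^n(ns)-k}{n}\big)$, which upon reindexing $j=p_A^n(ns)-k$ becomes $-\frac1n\sum_{j=1}^{p_A^n(ns)-1}f\big(\frac{p_A^n(ns)-j}{n}\big)\Lambda_B(j/n)$. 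By part~(a) of Theorem~\ref{lem:ode}, $\sup_{s}|p_A^n(ns)/n-\mathsf{p}|\Rightarrow0$, so on the event $\{\sup_s|p_A^n(ns)/n-\mathsf{p}|<\delta\}$ one may replace $p_A^n(ns)/n$ by $\mathsf{p}$ both in the upper summation limit (discrepancy: at most $\lceil\delta n\rceil$ terms, each $O(1/n)$) and in the argument of $f$ (discrepancy at most $\omega_f(\delta)\|\Lambda_B\|_\infty$, with $\omega_f$ the modulus of continuity of $f$), at a total cost bounded uniformly in $s$ by $C(\delta+\omega_f(\delta))$ for $n$ large; the remaining sum is deterministic and is a Riemann sum converging to $\int_0^{\mathsf{p}} f(\mathsf{p}-y)\Lambda_B(y)\,dy=\int_0^{\mathsf{p}} f(x)\Lambda_B(\mathsf{p}-x)\,dx$. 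The sell side is handled symmetrically, and adding the two (recalling \eqref{eq:nuLambda}) shows the arrival sums converge to $\langle\nu,f\rangle$ uniformly in $s$ and in probability.

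For the cancellation sums, fix a continuous extension $\hat\Theta_A\in C[0,1]$ of $x\mapsto\Theta_A(x-\mathsf{p})$ off $[\mathsf{p},1]$, and likewise $\hat\Theta_B\in C[0,1]$ for $x\mapsto\Theta_B(\mathsf{p}-x)$ off $[0,\mathsf{p}]$. Using $\Theta_j^n(i)=\frac1n\Theta_j(i/n)$ together with \eqref{eq:empirialshapeplus}--\eqref{eq:empirialshapeminus}, the sell--side cancellation sum equals $-\langle\zeta_s^{n,+},f\,\hat\Theta_A\rangle$ plus an error in which $\Theta_A\big(\frac{k-p_B^n(ns)}{n}\big)$ is replaced by $\hat\Theta_A(k/n)$; on $\{\sup_s|p_B^n(ns)/n-\mathsf{p}|<\delta\}$ this error is bounded uniformly in $s$ by $\|f\|_\infty\,\omega_{\Theta_A}(\delta)\cdot\sup_{0\le t\le T}\frac1{n^2}\sum_k|\mathcal{X}_k^n(nt)|$, the possible leakage of the support of $\zeta_s^{n,+}$ into the $\delta$--window below $\mathsf{p}$ being absorbed into the same bound; since $\limsup_n\E\,\sup_{0\le t\le T}\frac1{n^2}\sum_k|\mathcal{X}_k^n(nt)|\le C$ by \eqref{eq:totalordersmall} of Lemma~\ref{lem:totalorder}, a Markov estimate followed by $n\to\infty$ and then $\delta\to0$ shows the supremum over $s$ of this error tends to $0$ in probability; the buy--side cancellation sum is treated the same way, yielding $+\langle\zeta_s^{n,-},f\,\hat\Theta_B\rangle$ up to a negligible error. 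Next, by Skorokhod's representation theorem we may assume the convergence in \eqref{eq:pinconvgpi} holds almost surely (this does not affect convergence in probability of the quantity in \eqref{eq:generatorconvg}); then the continuous mapping theorem gives $\langle\zeta^{n_k,+},f\hat\Theta_A\rangle\to\langle\zeta^+,f\hat\Theta_A\rangle$ and $\langle\zeta^{n_k,-},f\hat\Theta_B\rangle\to\langle\zeta^-,f\hat\Theta_B\rangle$ almost surely in $\mathbb{D}([0,T],\R)$, and since the limits have continuous paths by Proposition~\ref{lem:tightposneg0}, $J_1$--convergence upgrades to uniform convergence. Finally, part~(a) of Theorem~\ref{lem:ode} together with \eqref{eq:totalordersmall} (applied to a $\delta$--window around $\mathsf{p}$ and letting $\delta\to0$) forces $\zeta^+$ to be supported on $[\mathsf{p},1]$ with no atom at $\mathsf{p}$ and $\zeta^-$ on $[0,\mathsf{p}]$ with no atom at $\mathsf{p}$, so $\langle\zeta_s^+,f\hat\Theta_A\rangle=\langle\zeta_s^+,\mathcal{A}_\Theta f\rangle$ and $\langle\zeta_s^-,f\hat\Theta_B\rangle=\langle\zeta_s^-,\mathcal{A}_\Theta f\rangle$; hence the cancellation sums converge, uniformly in $s$, to $-\langle\zeta_s^+,\mathcal{A}_\Theta f\rangle+\langle\zeta_s^-,\mathcal{A}_\Theta f\rangle=-\langle\zeta_s,\mathcal{A}_\Theta f\rangle$.

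Combining the three groups with the triangle inequality and a union bound yields \eqref{eq:generatorconvg}. The main obstacle is the bookkeeping in the last two paragraphs: every approximation error — those from replacing the randomly fluctuating best--price indices $p_A^n(ns),p_B^n(ns)$ by their common limit $n\mathsf{p}$, those from Riemann/measure discretization, and those from the support leakage near $\mathsf{p}$ — must be shown to vanish \emph{uniformly} over $s\in[0,T]$, and this rests on carefully combining part~(a) of Theorem~\ref{lem:ode} with the uniform--in--time mass bounds of Lemma~\ref{lem:totalorder}.
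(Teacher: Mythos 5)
Your proposal follows the same overall plan as the paper's proof: expand $n\,\mathcal{L}_nF$ via \eqref{eq:Fgenerator}, treat the market--order, arrival, and cancellation groups separately, use part~(a) of Theorem~\ref{lem:ode} to replace the random best--price indices by $n\mathsf{p}$ uniformly over $s\in[0,T]$, and absorb the resulting errors with the uniform mass bounds of Lemma~\ref{lem:totalorder}; the market--order and arrival estimates coincide with the paper's (the paper packages the arrival Riemann--sum step as uniform convergence $h_n\to h$ plus a generalized continuous mapping theorem, but the content is identical). Where you diverge is in the cancellation step: the paper works with the signed process $\zeta^n$ and the test function $G=f\,\Theta_A(\cdot-\mathsf{p})\mathbf{1}_{(\mathsf{p},1]}$, handling its jump at $\mathsf{p}$ by a continuous approximation $G_\epsilon$, whereas you work with $\zeta^{n,\pm}$ and continuous extensions $\hat\Theta_A,\hat\Theta_B$, paying for it with the separate observation that $\zeta^\pm$ are supported in $[\mathsf{p},1]$ and $[0,\mathsf{p}]$ with no atom at $\mathsf{p}$; each route trades one localization argument for another, and both are valid. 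You also invoke Skorokhod representation to couple $(\zeta^{n_k,+},\zeta^{n_k,-})$ with the (a priori random) limit point $(\zeta^+,\zeta^-)$ before passing to uniform convergence; the paper does not do this explicitly and implicitly treats $\zeta$ as deterministic at this stage, so your treatment is, if anything, the more careful way to make the lemma's mixed $(\mathcal{X}^{n_k},\zeta)$ statement well posed. Net assessment: correct, and essentially the paper's argument with two reasonable bookkeeping substitutions.
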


We are now ready to prove Proposition~\ref{lem:limitpointpi}.

\begin{proof}[Proof of Proposition~\ref{lem:limitpointpi}]
\red{Given \eqref{eq:pitngenerator}--\eqref{eq:generatorconvg} and that $\langle \zeta^{n_k}, f \rangle \Rightarrow \langle \zeta, f \rangle$ {as $n_k \rightarrow \infty$}, we deduce that the process $\zeta$ satisfies Equation (\ref{eq:generatormart3}). The continuity of the path of $\langle \zeta, f \rangle$ directly follows from Lemma~\ref{lem:relativecompact}}.

Next we show that there is an unique measure-valued process satisfying (\ref{eq:generatormart3}). It is evident that $\zeta$ is deterministic since there is no source of randomness in Equation (\ref{eq:generatormart3}). Suppose for
fixed $t\in[0,T]$, $\zeta_t$ is absolutely continuous with respect to Lebesgue
measure and it has a bounded density function $\varphi$ such that $\zeta_t(dx) = \varphi(x,t) dx$. 
After multiplying a smooth test function $g(t)$ on both sides of Equation (\ref{eq:generatormart3}), integrating with respect to $t$ on $[0, T]$, substituting \eqref{eq:nuLambda} and \eqref{eq:ATheta}, and noting that $f$ can be an arbitrary continuous function on $[0,1]$, we readily verify that for almost every $x \in (\mathsf{p},1]$,
\begin{eqnarray*}
\int_{0}^{T} \varphi(x,t) g(t) dt &=& \varrho (x) \int_{0}^{T} g(t) dt \\
&&+ \int_{0}^{T} g(t) \int_{0}^{t} \Big[\Lambda_A ( x - \mathsf{p})  - \Theta_A( x - \mathsf{p}) \varphi (x,s) \Big] ds dt. \nonumber
\end{eqnarray*}
That is, for fixed $x \in (\mathsf{p},1]$, $\varphi(x, \cdot)$ is a weak solution for the ODE given in \eqref{eq:varphiinitial} and \eqref{eq:varphiODE}, \red{and thus $\varphi(x, t)$ is measurable in the second argument~$t$.}
On the other hand,
it is clear that there is an unique classical solution to the ODE \eqref{eq:varphiinitial} and \eqref{eq:varphiODE} for fixed $x \in (\mathsf{p},1]$.
Therefore, we deduce from the equivalence of classical solution and
weak solution for ODE (see, e.g. \cite[Chapter~1, Lemma~1.3]{tao2006nonlinear}) that the density function $\varphi(x, \cdot)$ is the unique classical solution of
(\ref{eq:varphiinitial}) and (\ref{eq:varphiODE}) when $x> \mathsf{p}$. A similar argument yields that $\varphi(x, \cdot)$ is the unique classical solution for
(\ref{eq:varphiinitial}) and (\ref{eq:varphiODE2}) when $x< \mathsf{p}$. As a consequence, the solution for (\ref{eq:generatormart3}) is unique \red{(we allow ourselves the ability to modify $\varphi(x, t)$ on a set of measure zero including the point $x=\mathsf{p}$. The modification is independent of $t$).}

The rest of the proof focuses on showing that $\zeta_{t}$ is absolutely continuous and that it has a bounded density function.
We prove that there exists some constant $C_T$ depending on $T$ such that for all $f \in C([0,1])$ and $t \in [0,T]$,
\[ \left|\langle \zeta_{t}, f \rangle \right| \le C_T \int_{0}^{1} |f(x)|dx.\]
To this end, we first deduce from Equation (\ref{eq:convgzetaf}) and the continuous mapping theorem that for any $f \in C([0,1])$ and $t \in [0,T]$,
\begin{eqnarray*} \label{eq:piabosultevalconvgindistr}
\left|\langle \zeta_{t}^{n_k}, f \rangle \right| \Rightarrow \left|\langle \zeta_{t}, f \rangle \right| \quad \text{as $n_k \rightarrow \infty$}.
\end{eqnarray*}
Applying Theorem~3.4 in \cite{bil99} yields
\begin{eqnarray} \label{eq:fatou}
\E \left|\langle \zeta_{t}, f \rangle \right| \le {\lim \inf}_{n_k \rightarrow \infty} \E \left|\langle \zeta_{t}^{n_k}, f \rangle \right|.
\end{eqnarray}
We next focus on bounding $\E \left|\langle \zeta_{t}^{n_k}, f \rangle \right|$.
Recall the limit order arrival process $A^n_i$ is pathwisely upper bounded by a Poisson process $\bar{A}^n_i$ with rate $ \Bar{\Lambda}$ (see \eqref{eq:tildeAstoc}).
Hence,
\begin{eqnarray*} \label{eq:pitnfexpand}
\left|\langle \zeta_{t}^{n}, f \rangle \right| &=& \left| \frac{1}{n^{2}} \sum_{i=1}^{n} \mathcal{X}^n_{i} (nt )\cdot f ( \frac{i}{n}) \right| \nonumber \\
&\le&  \frac{1}{n^{2}} \sum_{i=1}^{n} |\mathcal{X}^n_{i} (0 )| \cdot |f ( \frac{i}{n})| + \left| \frac{1}{n^{2}} \sum_{i=1}^{n} A^n_{i} (nt )\cdot f ( \frac{i}{n}) \right| \nonumber\\
&\le &  \frac{1}{n} \sum_{i=1}^{n} |\varrho (\frac{i}{n})|\cdot |f ( \frac{i}{n}) | + \frac{1}{n^{2}} \sum_{i=1}^{n} \bar{A}^n_{i} (nt )\cdot \left|f ( \frac{i}{n}) \right|,
\end{eqnarray*}
which implies that
\begin{eqnarray*} \label{eq:abspitnfUI}
\E \left|\langle \zeta_{t}^{n_k}, f \rangle \right| \le \frac{1}{n_k} \sum_{i=1}^{n_k} |\varrho (\frac{i}{n_k})|\cdot | f ( \frac{i}{n_k})|  + \Bar{\Lambda} t \cdot \frac{1}{n_k} \sum_{i=1}^{n_k} |f ( \frac{i}{n_k})|.
\end{eqnarray*}
In conjunction with (\ref{eq:fatou}) and the fact that $\zeta_{t}$ is a deterministic measure, we deduce
\begin{eqnarray} \label{eq:boundzetaf1}
\left|\langle \zeta_{t}^{}, f \rangle \right| \le \int_{0}^1 |\varrho (x)| \cdot | f (x)| dx  + \Bar{\Lambda} t \cdot \int_{0}^1 | f (x)| dx,
\end{eqnarray}
after letting $n_k \rightarrow \infty$. Thus we have all $f \in C([0,1])$ and $t \in [0,T]$,
\begin{eqnarray} \label{eq:boundzetaf3}
\left|\langle \zeta_{t}, f \rangle \right|
&\le& C_T \int_{0}^1 | f (x)| dx,
\end{eqnarray}
where $C_T =   \max_{x \in[0,1]} |\varrho (x)| +  \Bar{\Lambda} T$. \red{Since continuous functions are dense in the space of Lebesgue-integrable functions $L^1([0,1])$, we deduce from the bounded linear transformation theorem that Equation \eqref{eq:boundzetaf3} holds for all $f \in L^1([0,1])$. We then conclude from \cite[Theorem~15.6]{driver2003analysis} that $\zeta_t$ is a finite measure, absolutely continuous with respect to Lebesgue measure, and its density function $\varphi(x,t)$ is bounded by $C_T$ uniformly with respect to $x\in[0,1]$ and $t\in[0,T]$.}
The proof is complete.
\end{proof}


\begin{proof}[Proof of Lemma~\ref{lem:generatorconvg}]
Throughout the proof, we fix $f \in C([0,1])$ and we use $\{n: n \ge 1\}$ instead of its subsequence $\{n_k: k \ge 1\}$ for notational simplicity. We also use a generic constant $C$ which may vary from line to line but $C$ is independent of $n.$

 We first show for $f \in C[0,1]$,
\begin{equation}\label{eq:generatorconvg1}
\sup_{0 \le s \le T} \left|
n \cdot \sum_{k>p_B^n(ns)}\left[\frac{1}{n^2}f(\frac{k}{n})\cdot \Lambda_A^n({k-p_B^n(ns)})\right]
- \int_{\mathsf{p}}^{1} f(x) \Lambda_A(x-\mathsf{p}) dx \right| \Rightarrow 0 \quad
\text{as $n \rightarrow \infty$}.
\end{equation}
To this end, we define for $z \in [0,1]$ and $n \ge 1$
\begin{eqnarray}
h_n(z)&=&
\sum_{\frac{k}{n}>z}\left[\frac{1}{n}f\left(\frac{k}{n}\right)\Lambda_A\left({\frac{k}{n}-z}\right)\right], \label{eq:hn}\\
h(z)&=& \int_{z}^{1} f(x) \Lambda_A(x-z) dx. \label{eq:h}
\end{eqnarray}
 Applying Assumption~\ref{ass:rateconverg1}, we find that (\ref{eq:generatorconvg1}) is equivalent to
\begin{eqnarray} \label{eq:generatorconvg1equiv}
\sup_{0 \le s \le T} \left|
h_n \left(\frac{p_B^n(ns)}{n} \right) - h(\mathsf{p})\right|
\Rightarrow 0, \quad \text{as $n \rightarrow \infty$}.
\end{eqnarray}
Part (a) of Theorem~\ref{lem:ode} implies that when $n\rightarrow \infty$, we have
\begin{eqnarray*}
\frac{p_B^n(n \cdot)}{n} \Rightarrow \mathsf{p} \quad \text{in $\D([0,T], \R)$.}
\end{eqnarray*}
In addition, since $f$ and $\Lambda_A$ are uniformly continuous and bounded on $[0,1]$, it is straightforward to verify that the function $h$ is continuous, and the sequence $\{h_n: n \ge 1\}$ converges uniformly to $h.$ In particular, we obtain for any sequence of real numbers $\{z_n: n \ge 1\}$,
\begin{eqnarray}\label{eq:hnconvg}
\lim_{n \rightarrow \infty} h_n(z_n) = h(\mathsf{p}) \quad \text{if $\lim_{n
\rightarrow \infty} z_n = \mathsf{p}\in (0,1)$}.
\end{eqnarray}
Then generalized continuous mapping theorem~\cite[Theorem~3.4.4]{Whitt02} yields that as $n\rightarrow \infty$
\begin{eqnarray*}
h_n\left(\frac{p_B^n(n \cdot)}{n} \right) \Rightarrow h(\mathsf{p})  \quad \text{in $\D([0,T], \R)$.}
\end{eqnarray*}
Since the Skorohod $J_1$ topology relativized to ${C}([0,T], \R)$ of continuous functions is equivalent to the uniform topology~\cite[Section~12]{bil99},
we obtain (\ref{eq:generatorconvg1equiv}) and \eqref{eq:generatorconvg1}.

Applying a similar argument, we find
\begin{equation}\label{eq:generatorconvg2}
\sup_{0 \le s \le T}
\left|\sum_{k<p_A^n(ns)}\left[\frac{1}{n}f(\frac{k}{n})\cdot \Lambda_B^n({p_A^n(ns)-k})\right]
- \int_{0}^{\mathsf{p}} f(x) \Lambda_B(\mathsf{p}-x) dx \right| \Rightarrow 0 \quad
\text{as $n \rightarrow \infty$}.
\end{equation}

We next prove that when $n \rightarrow \infty$,
\begin{equation} \label{eq:generatorconvg3}
\sup_{0 \le s \le
T}\left|\sum_{k>p_B^n(ns)}\left[\frac{1}{n}f\left(\frac{k}{n}\right)\cdot\Theta_A^n({k-p_B^n(ns)})
\cdot \mathcal{X}^n_k(ns) \right] - \int_{\mathsf{p}}^{1} f(x) \Theta_A(x-\mathsf{p}) d \zeta_s(x)
\right| \Rightarrow 0.
\end{equation}
It suffices to show
\begin{eqnarray} \label{eq:generatorconvg3a}
\sup_{0 \le s \le
T}\left| \int_{\mathsf{p}}^{1} f(x) \Theta_A(x-\mathsf{p}) d \zeta^{n}_s(x) - \int_{\mathsf{p}}^{1} f(x) \Theta_A(x-\mathsf{p}) d \zeta_s(x)
\right| \Rightarrow 0, \label{eq:generatorconvg3a}
\end{eqnarray}
and
\begin{equation} \label{eq:generatorconvg3c}
\sup_{0 \le s \le
T}\left|\sum_{k>p_B^n(ns)}\left[\frac{1}{n}f\left(\frac{k}{n}\right)\Theta_A^n({k-p_B^n(ns)})
\cdot \mathcal{X}^n_k(ns) \right] - \int_{\mathsf{p}}^{1} f(x) \Theta_A(x-\mathsf{p}) d \zeta^n_s(x)
\right| \Rightarrow 0.
\end{equation}

First, we prove (\ref{eq:generatorconvg3a}). We provide a sketch. Define a function $G$ by setting
\begin{equation}
 G(x) =
\begin{cases} f(x) \Theta_A(x-\mathsf{p}) & \text{if $\mathsf{p} < x \le 1 $,}
\\
0 &\text{if $0 \le x\le \mathsf{p}$.}
\end{cases}
\end{equation}
 If $G$ is continuous on
$[0,1]$, (\ref{eq:generatorconvg3a}) readily follows from \eqref{eq:convgzetaf} and the equivalence of Skorohod $J_1$ topology and the uniform topology on the space of continuous functions. If $G$ is
not continuous, one can similarly prove (\ref{eq:generatorconvg3a}) by constructing a family of
continuous functions $\{G_{\epsilon}: \epsilon>0 \}$ such that for each small $\epsilon>0$, $G$ and $G_{\epsilon}$ are equal except on
a small interval $(\mathsf{p}- \epsilon, \mathsf{p}]$. We omit further details.

Next, we prove (\ref{eq:generatorconvg3c}). We also provide a sketch.
Using Assumption~\ref{ass:rateconverg1} and the definition of $\zeta^n$ in (\ref{eq:empirialshape}), it is equivalent to show \text{as $n \rightarrow \infty$}
\begin{multline}\label{eq:generatorconvg2'}
\sup_{0 \le s \le T}\Bigg|
\sum_{\frac{k}{n}>\frac{p_B^n(ns)}{n}}\left[\frac{1}{n^{2}}f\left(\frac{k}{n}\right)\Theta_A \left(\frac{k}{n}
-\frac{p_B^n(ns)}{n}\right) \cdot \mathcal{X}^n_k(ns) \right] \\
- \sum_{\frac{k}{n}>
\mathsf{p}}\left[\frac{1}{n^{2}}f\left(\frac{k}{n}\right)\Theta_A\left(\frac{k}{n}
-\mathsf{p}\right) \cdot \mathcal{X}^n_k(ns) \right] \Bigg|
\Rightarrow 0.
\end{multline}
This can be readily verifed from \eqref{eq:askpriceconvg}, \eqref{eq:bidpriceconvg} and Lemma~\ref{lem:totalorder}.  We omit further details.

Similarly we can show {as $n \rightarrow \infty$},
\begin{equation} \label{eq:generatorconvg4}
\sup_{0 \le s \le T}
\left|\sum_{k<p_A^n(ns)}\left[\frac{1}{n}f\left(\frac{k}{n}\right)\Theta_B^n({p_A^n(ns)-x})
\cdot \mathcal{X}^n_k(ns) \right] - \int_{0}^{\mathsf{p}} f(x) \Theta_B(\mathsf{p}-x) d \zeta_s(x)
\right| \Rightarrow 0.
\end{equation}

Finally, using Assumption~\ref{ass:rateconverg1} and the fact that
$f$ is bounded on $[0,1]$, we obtain that there exists some constant $C$ such that
\begin{equation} \label{eq:generatorconvg5}
\sup_{0 \le s \le T}n \left|\frac{1
}{n^{2}}\left(f(\frac{p^n_B(u)}{n}) \Upsilon_A^n -f(\frac{p^n_A(u)}{n}) \Upsilon_B^n
\right) \right|
\le \frac{C}{n^{1- \kappa}} \Rightarrow 0 \quad \text{as $n
\rightarrow \infty$.}
\end{equation}

On combining (\ref{eq:generatorconvg1}), (\ref{eq:generatorconvg2}),
(\ref{eq:generatorconvg3}), (\ref{eq:generatorconvg4}), and
(\ref{eq:generatorconvg5}), we obtain (\ref{eq:generatorconvg}).
\end{proof}

\clearpage

\bibliographystyle{plain}

\end{document}